\newcommand{\comm}[2]{\ensuremath{\left[\right.\! #1 \,, #2 \!\left.\right]}}
\newcommand{\providetcbcountername}[1]{%
  \@ifundefined{c@tcb@cnt@#1}{%
    --undefined--%
  }{%
    tcb@cnt@#1%
  }
}
\newcommand{\settcbcounter}[2]{%
  \@ifundefined{c@tcb@cnt@#1}{%
    \GenericError{Error}{counter name #1 is no tcb counter }{}{}%
  }{%
    \setcounter{tcb@cnt@#1}{#2}%
   }%
}%
\newtcolorbox[blend into=tables]{smallboxtable}[3][]
{colback=mycolor!5, colframe=mycolor, float=ht!, width=0.48\textwidth, lower separated=false, blend before title=colon hang,
title={#2}, label=#3 ,#1}
\newtcolorbox[blend into=tables]{smallboxtable2}[3][]
{colback=mycolor!5, colframe=mycolor, float=t!, width=0.48\textwidth, lower separated=false, blend before title=colon hang,
title={#2}, label=#3 ,#1}
\newcommand{\mathbbm}[1]{\text{\usefont{U}{bbm}{m}{n}#1}}
\newcommand\org@hypertarget{}
\let\org@hypertarget\hypertarget
\renewcommand\hypertarget[2]{%
  \Hy@raisedlink{\org@hypertarget{#1}{}}#2%
  }
\definecolor{darkgreen}{RGB}{9,122,73}
\definecolor{darkblue}{RGB}{0,0,190}
\definecolor{darkred}{RGB}{238,0,0}
\definecolor{darkred2}{RGB}{192,0,0}
\definecolor{mycolorred}{RGB}{178, 30, 30}
\definecolor{mycolor}{rgb}{0.122, 0.435, 0.698}
\definecolor{mycolor2}{RGB}{112, 48, 160}
\definecolor{cthulhublue}{RGB}{21,96,130}
\newcommand\pst{\bgroup\markoverwith{\textcolor{purple}{\rule[0.5ex]{2pt}{0.4pt}}}\ULon}
\newcommand\bst{\bgroup\markoverwith{\textcolor{blue}{\rule[0.5ex]{2pt}{0.4pt}}}\ULon}
\newtheorem{lemma}{Lemma}
\newtheorem{proposition}{Proposition}
\theoremstyle{definition}
\newcommand{\newreptheorem}[2]{\newtheorem*{rep@#1}{\rep@title}\newenvironment{rep#1}[1]{\def\rep@title{#2 \ref*{##1}}\begin{rep@#1}}{\end{rep@#1}}}
\newcommand{\bra}[1]{\langle #1|}
\newcommand{\ket}[1]{|#1\rangle}
\newcommand{\suptiny}[3]{\ensuremath{^{\hspace{#1 pt}\protect\raisebox{#2 pt}{\tiny{$ #3$}}}}}
\newcommand{\ketbra}[2]{| #1\rangle\!\langle #2|}
\newcommand{\be}{\begin{equation}}
\newcommand{\ee}{\end{equation}}
\newcommand{\bea}{\begin{eqnarray}}
\newcommand{\eea}{\end{eqnarray}}
\newcommand{\kommentar}[1]{}
\newcommand{\trace}{{\rm Tr}}
\newcommand{\identity}{\mathbbm{1}}
\newcommand{\forget}[1]{}
\begin{document}

\setstcolor{red}

\title{Detecting genuine multipartite entanglement in multi-qubit devices with restricted measurements}

\date{\today}

\author{Nicky Kai Hong Li\,\orcidlink{0000-0002-4087-4744}}
\email{kai.li@tuwien.ac.at}
\affiliation{Technische Universit\"{a}t Wien, Atominstitut, Stadionallee 2, 1020 Vienna, Austria}
\affiliation{Vienna Center for Quantum Science and Technology, TU Wien, 1020 Vienna, Austria}
\affiliation{Institute for Quantum Optics and Quantum Information (IQOQI), Austrian Academy of Sciences, Boltzmanngasse 3, 1090 Vienna, Austria}
\author{Xi Dai\,\orcidlink{0000-0002-1610-3392}}
\email{xi.dai@phys.ethz.ch}
\affiliation{Department of Physics, ETH Zurich, CH-8093 Zurich, Switzerland}
\affiliation{Quantum Center, ETH Zurich, CH-8093 Zurich, Switzerland}
\author{Manuel H. Mu{\~n}oz-Arias\,\orcidlink{0000-0002-5711-029X}}
\affiliation{Institut Quantique and D{\'e}partement de Physique,
Universit{\'e} de Sherbrooke, Sherbrooke J1K 2R1 QC, Canada}
\author{Kevin Reuer\,\orcidlink{0000-0001-6125-9938}}
\affiliation{Department of Physics, ETH Zurich, CH-8093 Zurich, Switzerland}
\affiliation{Quantum Center, ETH Zurich, CH-8093 Zurich, Switzerland}
\author{Marcus Huber\,\orcidlink{0000-0003-1985-4623}}
\affiliation{Technische Universit\"{a}t Wien, Atominstitut, Stadionallee 2, 1020 Vienna, Austria}
\affiliation{Vienna Center for Quantum Science and Technology, TU Wien, 1020 Vienna, Austria}
\affiliation{Institute for Quantum Optics and Quantum Information (IQOQI), Austrian Academy of Sciences, Boltzmanngasse 3, 1090 Vienna, Austria}
\author{Nicolai Friis\,\orcidlink{0000-0003-1950-8640}}
\email{nicolai.friis@tuwien.ac.at}
\affiliation{Technische Universit\"{a}t Wien, Atominstitut, Stadionallee 2, 1020 Vienna, Austria}
\affiliation{Vienna Center for Quantum Science and Technology, TU Wien, 1020 Vienna, Austria}


\maketitle


\noindent\textbf{Detecting genuine multipartite entanglement (GME) is a state-characterization task that benchmarks coherence and experimental control in quantum systems. Existing GME tests often require joint measurements on many qubits, posing challenges for systems like time-bin encoded qubits and microwave photons from superconducting circuits, where qubit connectivity is limited or measurement noise grows with the number of jointly measured qubits. Here we introduce versatile GME and $\bm{k}$-inseparability criteria applicable to any state, which only require measuring $\bm{O(n^2)}$ out of $\bm{2^n}$ (at most) $\bm{m}$-body stabilizers of $\bm{n}$-qubit target graph states, with $\bm{m}$ upper-bounded by twice the graph's maximum degree. For cluster or ring-graph states, only constant-weight stabilizers are needed. Using semidefinite programming (and sometimes graph-local complementations), we can reduce the number or weight of required stabilizers. Analytical and numerical results show that our criteria are noise-robust and may infer state infidelity from certified $\bm{k}$-inseparability in microwave photonic graph states generated under realistic conditions.}

\section{Introduction}
\vspace*{-1mm}

{\noindent}Current developments of quantum technologies have 
enabled the simultaneous control of increasingly large systems and the preparation of highly complex quantum states (see, e.g.,\;\cite{FriisMartyEtal2018,CaoEtAl2023,vandeStolpeEtAl2024,Bluvstein-Lukin2024}). 
These advances demand suitable techniques to characterize and benchmark the corresponding devices and the states that they produce. 
However, increasing system sizes mean that full state tomography quickly becomes infeasible, shifting the focus to scalable but less all-encompassing characterization tools that still capture genuine quantum features as well as the quality of system control. Useful alternatives typically relying on few measurements present themselves in techniques such as shadow tomography\;\cite{Aaronson2018,HuangKuengPreskill2020} and entanglement certification (see, e.g.,\;\cite{FriisVitaglianoMalikHuber2019}), which represent two ends of a spectrum:
the former employs randomized/arbitrary measurements to create a classical representation of the state for broad \textit{a posteriori} predictions, while the latter typically relies on targeted measurements to directly extract information on quantum correlations between subsystems. 
Yet, even these specific measurements are often difficult to implement in practice, prompting purpose-designed solutions for entanglement certification across a variety of platforms, including, e.g., spatial\;\cite{BavarescoEtAl2018,Herrera-ValenciaSrivastavPivoluskaHuberFriisMcCutcheonMalik2020} and temporal\;\cite{Ecker-Huber2019} degrees of freedom of photons, spin ensembles\;\cite{FadelUsuiHuberFriisVitagliano2021,MatheUsuiGuehneVitagliano2025}, and recently even setups not traditionally employed for quantum-information processing like electron-photon pairs in electron microscopy\;\cite{RemboldBeltranRomeroPreimesbergerEtAl2025,PreimesbergerBogdanovBicketRemboldHaslinger2025}. 
Under these circumstances, it becomes essential to extend the toolbox of entanglement-detection criteria to meet various platform-specific restrictions. Here, we provide such a practically implementable criterion for detecting \emph{genuine multipartite entanglement} (GME) in systems that permit joint operations and measurements only on specific subsets of qubits.

Multipartite entanglement is a crucial resource for quantum technologies\;\cite{DowlingMilburn2003}, enabling applications from quantum communication\;\cite{EppingKampermannMacchiavelloBruss2017,PivoluskaHuberMalik2018,RibeiroMurtaWehner2018,BaeumlAzuma2017},
measurement-based quantum computation (MBQC)\;\cite{RaussendorfBriegel2001, BriegelRaussendorf2001}, to quantum error correction\;\cite{Scott2004}, quantum metrology\;\cite{Toth2012}, and other quantum algorithms\;\cite{BrussMacchiavello2011}. In particular, certain multipartite tasks require GME specifically\;\cite{EppingKampermannMacchiavelloBruss2017, YamasakiPirkerMuraoDuerKraus2018, DasBauemlWinczewskiHorodecki2021}. \emph{But what is GME?} It formalizes the idea that some multipartite states are entangled across every
bipartition (a grouping of subsystems into two subsets), but can still be produced as statistical mixtures of states that have only bipartite entanglement across different bipartitions but no multipartite entanglement themselves. The latter are called \emph{biseparable}, whereas states that cannot be decomposed into mixtures of bipartite entangled states are GME. This concept extends to the more general \textit{$k$-(in)separability}, where entanglement across all $k$-partitions is considered.

The study of GME is a thriving subject of ongoing research. Since the seminal works on this topic\;\cite{HorodeckiMPR2001,DueCirac2001,Terhal2002}, much progress has been made in developing GME-detection methods (see, e.g., the reviews\;\cite{HorodeckiEntanglementReview2009,GuehneToth2009,FriisVitaglianoMalikHuber2019}). The two crucial pieces of information for choosing a suitable detection strategy for a given setup are (i) the target states one expects to encounter, and (ii) which measurements can reasonably be carried out in the physical platform at hand. While the former influences how well a chosen GME criterion works in practice, the latter restricts which criteria one may evaluate in the first place. Consequently, the GME-detection toolbox is continuously being expanded to cover relevant platforms and target states.

In this work, we introduce a novel family of criteria designed to detect GME in the important class of $n$-qubit graph states (and states close to them), in scenarios where joint measurements can only be carried out on certain ``local'' subsets of connected qubits, with the locality and connectivity dictated by
the associated graph structure. Such restrictions naturally arise in setups producing time-bin encoded qubits, e.g., time-multiplexed photons in the near-visible spectrum generated from spontaneous parametric down-conversion in combination with Sagnac interferometers~\cite{Steinlechner-Ursin2017}, or pairs of optical parametric amplifiers and Mach-Zehnder interferometers~\cite{YokoyamaEtAl2013,AsavanantEtAl2019,DuWangLiuYangZhang2023}. There, joint measurements and operations are usually restricted to temporally adjacent photons. Our criteria are also well-suited for platforms where single-qubit measurement efficiency is limited. In such systems, the number of measurements required to estimate a weight-$m$ observable scales exponentially with $m$ and becomes experimentally prohibitive if $m$ scales with the system size. This measurement-efficiency limitation is the main bottleneck encountered in characterizing microwave photons generated from superconducting circuits~\cite{besse_2020_realizingdeterministicsource,ferreira_2024_deterministicgenerationmultidimensional,OsullivanEtAl2024,sunada_2024_efficienttomographymicrowave}. A similar, though less severe, limitation also appears in optical photonic platforms, where sequential arbitrary-basis single-qubit measurements rely on electro-optical modulators for fast basis switching, whose losses reduce the overall detection efficiency~\cite{Shalm-Nam2015, Giustina-Zeilinger2015}.

Crucially, such setups naturally generate graph states central to quantum information processing, including two-dimensional cluster states for MBQC\;\cite{RaussendorfBriegel2001, BriegelRaussendorf2001}, ring-graph states for fusion-based quantum computing\;\cite{BartolucciEtAl2023}, or tree-graph states for error correction\;\cite{VarnavaBrowneRudolph2006} and constructing quantum repeaters\;\cite{BorregaardEtAl2020}. Therefore, certifying multipartite entanglement of these states is motivated by the need to characterize resources in quantum information processing, and given the practical restrictions of the setups, producing these states becomes an imperative that makes the techniques presented here highly relevant for device benchmarking.

Numerous methods exist for detecting GME, including stabilizer-based witnesses~\cite{TothGuehne2005a,TothGuehne2005b,ZhouZhaoYuanMa2019}, PPT-inspired criteria~\cite{JungnitschMoroderGuehne2011a, JungnitschMoroderGuehne2011b}, fidelity bounds~\cite{GuehneLuGaoPan2007,TiurevSoerensen2022}, as well as permutation inequalities and moments~\cite{HuberMintertGabrielHiesmayr2010,GabrielHiesmayrHuber2010,HuberErkerSchimpfGabrielHiesmayr2011,LiuTangDaiLiuChenMa2022}. However, nearly all require measuring observables acting on up to $O(n)$ qubits, which limits their experimental feasibility when only few-body measurements are accessible. Moreover, many detect only GME versus full separability, without resolving intermediate levels of $k$-inseparability (see Appendix\;\ref{sec:review} for a review of pertinent previous methods). This motivates the approaches developed here, which are designed for scenarios restricted to $O(1)$-body measurements for important classes of graph states.

The entanglement criteria we introduce here are versatile yet simple, making them applicable across a broad range of physical platforms, as 
they only require measuring few stabilizers of chosen target graph states. Each stabilizer is a tensor product of Pauli operators acting on 
at most twice the graph's maximum degree, 
and the number of stabilizers to be measured scales at most quadratically with the number of vertices.  
Since all stabilizer states are local-unitarily equivalent to graph states\;\cite{Schlingemann2001, VanDenNestDehaeneDeMoor2004}, our criteria can certify multipartite entanglement{\textemdash}a generally NP-hard task~\cite{Gurvits2003,Gharibian2010}{\textemdash}for many relevant states in quantum information processing, particularly those close to stabilizer states. In addition, we show how our criteria can be used even in setups where only a subset of the few measurements mentioned above can be carried out by utilizing graph-local complementations and semidefinite programming (SDP).

We demonstrate the performance and versatility of our method in two ways. First, we evaluate the entanglement criteria for several pertinent graph states and non-stabilizer states that are local-unitary (LU) transformed Dicke states with added white noise, providing analytical noise thresholds for each case, with the latter examples demonstrating that our method applies beyond stabilizer states. Second, we showcase the applicability of our criteria for 
benchmarking quantum devices by numerically simulating microwave-photonic graph states generated from superconducting circuits~\cite{OsullivanEtAl2024}
and quantitatively comparing our method to other approaches.

The remainder of this article is structured as follows. In Sec.~\ref{sec:background}, we provide basic definitions for graph states and multipartite entanglement.
In Secs.~\ref{sec:GMEkSepCriteria} and~\ref{sec:SDPincomplete}, we present and discuss our new GME/$k$-inseparability criteria, as well as our fixed $k$-partition inseparability criteria, and demonstrate how SDP can be used to bound certain observables when direct measurements are not feasible.
We then test our criteria's performance in a simulated experiment with microwave-photonic qubits in Secs.~\ref{sec:Experiment&Simulations}{\textendash}\ref{sec:NoiseSensitivity}, showing that the certified $k$-inseparability may infer state infidelity. Finally, Sec.~\ref{sec:discussion} provides a discussion and outlook.

\vspace*{-3mm}
\section{Results}\label{sec:main result}
\vspace*{-2mm}
\subsection{Background and notation}\label{sec:background}
\vspace*{-2mm}

{\noindent}Let us first state some definitions from graph theory that are relevant for this paper. Let $G = (V,E)$ be an $n$-vertex graph with vertex set $V=\{1,\ldots,n\}\eqqcolon [n]$ and (unweighted) edge set $E \coloneqq \{(i,j)| i\in V, j\in N(i)\}$ where $N(i)$ denotes the neighborhood of vertex $i$. In this work, we only consider undirected graphs, so we identify $(i,j)$ and $(j,i)$ as the same, single element in $E$. A \textit{$k$-partition} of a graph $G$ is a division of the vertex set $V$ into $k$ disjoint subsets. A \textit{$k$-cut} is a set of edges whose removal results in $k$ connected subgraphs that are disconnected from each other. In the rest of the paper, we will interchangeably use the two terms to refer to a particular way to divide a graph into $k$ connected parts. A \textit{matching} of a graph $G$ is a set of pairwise non-adjacent edges. A matching is \textit{maximal} if it is not a subset of any other matching. A matching is \textit{maximum-cardinality} if it contains the most edges of $G$ and is necessarily maximal.

For every graph $G$, the associated graph state $\ket{G}$ can be defined as $\ket{G} =\, \prod_{(i,j)\in E} C\hspace*{-0.5pt}Z_{ij} \ket{+}^{\otimes n}$ where $C\hspace*{-0.5pt}Z_{ij} = (\ketbra{0}{0}_i\otimes\identity_j+\ketbra{1}{1}_i\otimes Z_j)\otimes \identity_{[n]\setminus\{i,j\}}=C\hspace*{-0.5pt}Z_{ji}$ is the controlled-$Z$ (or controlled-phase) gate on qubits~$i$ and~$j$. 
The stabilizer group $\operatorname{Stab}(\ket{G})$ of a graph state $\ket{G}$ is generated by the stabilizers $S_i \coloneqq X_i\bigotimes_{j\in N(i)} Z_j$ of all vertices $i\in V$, which satisfy $S_i\ket{G}=\ket{G}$ and $\comm{S_i}{S_j}=0\ \forall\,i,j$.

Let us move on to state some relevant definitions related to multipartite entanglement. The state of~$n$ quantum systems with associated (finite-dimensional) Hilbert spaces $\mathbb{C}^{d_{i}}$ with dimensions $d_i$ for $i=1,2,\ldots,n$ is described by a density matrix $\rho$, an element of the space $\mathcal{D}(\mathbb{C}^{d_1}\otimes\ldots\otimes\mathbb{C}^{d_n})$ of normalized ($\trace(\rho)=1$), positive semi-definite ($\rho\geqslant0$) operators.
A state $\rho$ is \textit{$k$-separable} if and only if it can be expressed as a statistical mixture of density operators that factorize into a tensor product of density operators of $k$ subsystems that each comprise one or more of the original $n$ subsystems, formally\\[-10pt]
\begin{equation}
    \rho =\, \sum_i p_i \ketbra{\psi_i\suptiny{1}{-2}{[A_1^i]}}{\psi_i\suptiny{1}{-2}{[A_1^i]}}\otimes\ldots\otimes \ketbra{\psi_i\suptiny{1}{-2}{[A_k^i]}}{\psi_i\suptiny{1}{-2}{[A_k^i]}},
\end{equation}\\[-13pt]
where for each $i$, the disjoint sets $A_1^i,\ldots,A_k^i$ (such that $\cup_{j=1}^k A_j^i=[n]$) denote a $k$-partition of the $n$ subsystems and $\ket{\psi_i\suptiny{1}{-2}{[A_j^i]}}\in \bigotimes_{\alpha\in A_j^i}\mathbb{C}^{d_\alpha}$. We call a state \textit{genuinely multipartite entangled} (GME) if it is not biseparable (2-separable), while states of~$n$ subsystems that are $n$-separable are called fully separable. For example, all graph states of a connected graph and the $n$-qubit GHZ and W states are GME, whereas the maximally mixed state of any number of systems of any dimension is fully separable.\\[-9pt]

From the definition of $k$-separability, it is clear that all $k$-separable states are also $(k\!-\!j)$-separable for all $j\in\{1,\ldots,k-2\}$, and the sets of states that are at least $k$-separable form a nested structure of convex sets (see, e.g., Ref.\;\cite[Ch.\;18]{BertlmannFriis2023} for an introduction). In general, the multipartite state space has a rich structure and much progress has been made in understanding it in the past decade. Pertinent developments include the description of high-dimensional multipartite systems using the entropy-vector formalism~\cite{HuberDeVicente2013, HuberPerarnauDeVicente2013}, 
the definition of operational multipartite entanglement measures~\cite{SchwaigerSauerweinCuquetDeVicenteKraus2015}, 
insights into entanglement transformations via local operations and classical communication~\cite{ChitambarLeungMancinskaOzolsWinter2014, SpeeDeVicenteSauerweinKraus2017} along with relevant symmetries~\cite{SlowikHebenstreitKrausSawicki2020, HebenstreitSpeeLiKrausdeVicente2022, LiSpeeHebenstreitdeVicenteKraus2024}, restrictions~\cite{SauerweinWallachGourKraus2018}, and potential improvements using quantum metrology~\cite{MorelliSauerweinSkotiniotisFriis2022} for this task. In addition, the recently discovered phenomenon of multi-copy activation of GME~\cite{YamasakiMorelliMiethlingerBavarescoFriisHuber2022, PalazuelosDeVicente2022, BaksovaLeskovjanovaMistaAgudeloFriis2025, WeinbrennerEtAl2024} has added another layer to the problem of unraveling multipartite entanglement structures. In order to better characterize complex quantum states produced in state-of-the-art laboratories in the context of these multi-faceted state-space structures, we hence need suitable tools for the detection of multipartite entanglement. 


\vspace*{-3mm}
\subsection{GME \& \textit{k}-inseparability criteria}\label{sec:GMEkSepCriteria}
\vspace*{-2mm}

{\noindent}The GME and $k$-inseparability criteria that we introduce in this work are associated to an $n$-vertex graph $G$, and are defined as a sum of absolute values of expectation values of the stabilizers corresponding to all vertices of the graph and products of stabilizers for each edge in the graph, i.e.,
\begin{align}
    \mathcal{W}^\gamma_G(\rho) &=\, \sum_{i\in V}|\langle S_i\rangle_\rho| + \gamma\!\sum_{(i,j)\in E}|\langle S_i S_j\rangle_\rho|, 
    \label{eq:criteriaDef}
\end{align}
where $\gamma\in[0,1]$ is a free parameter specifying a different valid GME/$k$-inseparability criterion for each choice, and $\langle A\rangle_\rho \coloneqq \trace(A\,\rho)$. We will omit the subscript $\rho$ from expectation values whenever the state is clear from context. 

The intuition behind this choice of stabilizer subset stems from the use of the \textit{anticommutativity inequality} (Lemma\;\ref{lemma:ACT} and Proposition\;\ref{prop:3PaulisBound} in Methods) and the fact that Pauli matrices anticommute, which together lead to the analytic bounds in Eqs.\;\eqref{ineq:kSepCriteria} and \eqref{ineq:FixedkSepCriteria}. These properties ensure that whenever two qubits $a$ and $b$ in $\rho$ correspond to adjacent vertices in the underlying graph and belong to different groups of a given $k$-partition of $n$ qubits, any $k$-product state of that $k$-partition satisfies $|\langle S_a\rangle|+|\langle S_{b}\rangle|+|\langle S_a S_{b}\rangle|\leqslant1$, $|\langle S_a\rangle|+|\langle S_a S_{b}\rangle|\leqslant1$, and $|\langle S_a\rangle|+|\langle S_{b}\rangle|\leqslant1$. 
In the 5-qubit example shown in Fig.\;\ref{fig:IllustrateGraphTheory}, the color-labeled 3-partition of the graph corresponds to evaluating the 3-separability bound for all states of the form $\rho_{145}\otimes\rho_2\otimes\rho_3$. Such states satisfy, for instance, 
\begin{align}
    &\,|\langle S_1\rangle|+|\langle S_2\rangle|+|\langle S_1 S_2\rangle|\\ 
    = &\,|\langle X_1Z_5\rangle_{\rho_{145}}\langle Z_2Z_3\rangle_{\rho_{23}}|+|\langle Z_1\rangle\langle X_2Z_3\rangle|+|\langle Y_1Z_5\rangle\langle  Y_2\rangle|\nonumber\\
    \leqslant &\sqrt{
    \langle X_1Z_5\rangle^2\!+\!\langle Z_1\rangle^2\!+\!\langle Y_1Z_5\rangle^2
    }\sqrt{
    \langle Z_2Z_3\rangle^2\!+\!\langle X_2Z_3\rangle^2\!+\!\langle  Y_2\rangle^2
    }
    ,\nonumber
\end{align}
using the Cauchy-Schwarz inequality, with the final expression bounded above by 1 due to the anticommutativity inequality.
By contrast, states that are inseparable across this partition can exceed these bounds since the three sums can reach a maximum of 3, 2, and 2, respectively, e.g., when $\rho=\ketbra{G}{G}$. Because of these structural features arising from our choice of stabilizer subset, our criteria are capable of certifying both GME and the more general $k$-inseparability{\textemdash}something that most conventional stabilizer-based methods cannot achieve~\cite{TothGuehne2005a,TothGuehne2005b,GuehneLuGaoPan2007}.

For any connected graph, our criteria only require measuring $2n-1\!\leqslant\! J\!\leqslant\!\frac{n(n+1)}{2}$ (for $\gamma>0$) or $J=n$ (for $\gamma=0$) out of the total of $2^n$ $(\leqslant\!\!m)$-body stabilizers of $\ket{G}$ with $m\leqslant\max_{(i,j)\in E} [d(i)+d(j)]$, and need $\min(n+1,5)\leqslant M\leqslant \frac{n(n+1)}{2}$ (for $\gamma>0$) or $2\leqslant M\leqslant n$ (for $\gamma=0$) local measurement settings (choices of $n$-qubit Pauli bases)~\cite{footnote:MeasurementSettings}. In particular, for families of $n$-vertex graphs whose maximum degree is independent of $n$ (e.g., chain graphs, regular 1D/2D lattices), our criteria only require measuring $O(1)$-body stabilizers. In contrast, all previous GME/$k$-inseparability witnesses require local measurements on at least $O(n)$ particles simultaneously, except for the witness from Eq.\;(45) in Ref.\;\cite{TothGuehne2005b}, which cannot certify non-GME $k$-inseparability (see Appendix\;\ref{sec:review}).

The following theorem, which we refer to as the \emph{graph-matching GME criterion}, provides analytic upper bounds of $\mathcal{W}^{\gamma}_G(\rho)$ for any $k$-separable state $\rho$, such that the violation of any of these bounds detects $k$-inseparability. The proof of the theorem can be found in Sec.\;\ref{sec:ProofThm1}. We also provide an algorithm that computes the first upper bound in Appendix~\ref{app:algorithm_kSepBound} and the error analysis when applying our GME/$k$-inseparability criteria to experiments in Appendix\;\ref{app:ErrorAnalysis}. A graphical illustration of the meaning of the symbols $\overline{G}\suptiny{1}{-2}{(k)}$, $\overline{E}\suptiny{1}{-2}{(k)}_\text{mcm}$, and $\overline{E}\suptiny{1}{-2}{(k)}_\text{match}$ appearing in Theorem~\ref{theorem:kSepBound} and its proof can be found in Fig.\;\ref{fig:IllustrateGraphTheory}, where we also show how the edge set $\overline{E}\suptiny{1}{-2}{(k)}_\text{mcm}$ corresponds to the maximum reduction of the upper bound for $k$-separability.
Note that, since the left-hand side of Eq.\;\eqref{ineq:kSepCriteria} is a sum of absolute values of stabilizer expectation values, Theorem~\ref{theorem:kSepBound} can be seen as a statement about a collection of linear GME/$k$-inseparability criteria with different combinations of signs for different stabilizer terms.

\begin{Theorems}{Graph-matching GME criterion}{kSepBound}
Any $n$-qubit $(k\geqslant2)$-separable state $\rho$ satisfies
\begin{align}
    \mathcal{W}^\gamma_G(\rho) &\leqslant n + \gamma |E|-
    R^\gamma_k
    \leqslant n + \gamma(|E| - k + 1) - 1,
    \label{ineq:kSepCriteria}
\end{align}\\[-17pt]
for all $\gamma\in[0,1]$, where 
\begin{align}
R^\gamma_k\coloneqq \min_{\text{all }k\text{-cuts}}\!\left( \gamma |\overline{V}\suptiny{1}{-2}{(k)}|+(1-\gamma)|\overline{E}\suptiny{1}{-2}{(k)}_\text{mcm}|\right),
\end{align}\\[-10pt]
and $\overline{V}\suptiny{1}{-2}{(k)}$ is the vertex set of the subgraph $\overline{G}\suptiny{1}{-2}{(k)}$ of which the edge set $\overline{E}\suptiny{1}{-2}{(k)}$ corresponds to the edges that a $k$-cut of the full graph $G$ removes, and $\overline{E}\suptiny{1}{-2}{(k)}_\text{mcm}$ denotes the maximum cardinality matching of $\overline{G}\suptiny{1}{-2}{(k)}$.
\end{Theorems}

\begin{figure}[t!]
    \centering
    \includegraphics[width=\linewidth]{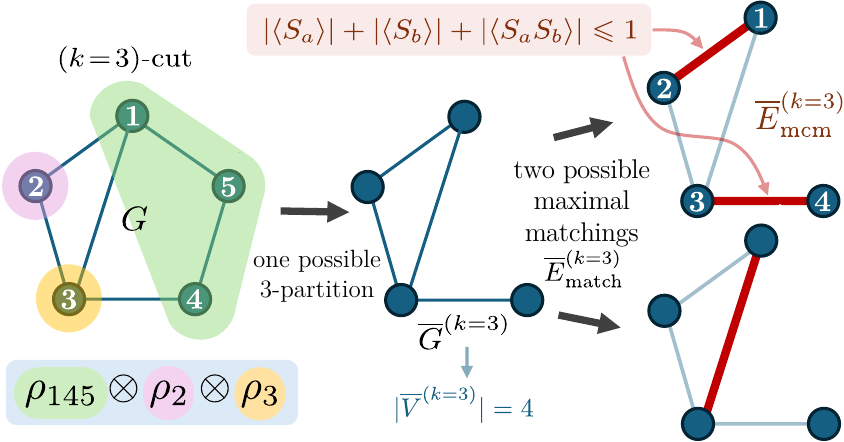}
    \caption{Graphical illustration of what a $k$-partition subgraph $\overline{G}\suptiny{1}{-2}{(k)}$ of the graph $G$, the quantity $|\overline{V}\suptiny{0}{-2}{(k)}|$, and the maximum cardinality matching $\overline{E}\suptiny{1}{-2}{(k)}_\text{mcm}$ of $\overline{G}\suptiny{1}{-2}{(k)}$ represent. In a 3-partition, we partition $G$ into three parts (highlighted in different colors). Computing $|\overline{V}\suptiny{0}{-2}{(k)}|$ and $\overline{E}\suptiny{1}{-2}{(k)}_\text{mcm}$ for, e.g., the color-labeled 3-partition corresponds to evaluating the separability bound for all states of the form $\rho_{145}\otimes\rho_2\otimes\rho_3$. By removing edges that do not connect vertices in different regions{\textemdash}the two edges contained in the green region on the right-hand side of $G${\textemdash}we obtain the subgraph $\overline{G}\suptiny{1}{-2}{(k\!=\!3)}$. There are only two possible non-isomorphic maximal matchings $\overline{E}\suptiny{1}{-2}{(k\!=\!3)}_\text{match}$ of $\overline{G}\suptiny{1}{-2}{(k\!=\!3)}$, which are represented by the red edges. The top right matching has the most edges, making it the maximum-cardinality matching $\overline{E}\suptiny{1}{-2}{(k\!=\!3)}_\text{mcm}$ of this particular 3-partition. This set corresponds to the maximum reduction in the analytic upper bound using the \textit{anticommutativity inequality} for states that are separable across the partitions that cut through the edges in $\overline{E}\suptiny{1}{-2}{(k\!=\!3)}_\text{mcm}$, each contributing to a reduction of 2 in the upper bound as $|\langle S_a\rangle|+|\langle S_{b}\rangle|+|\langle S_a S_{b}\rangle|\leqslant1$ for $(a,b)\in\{(1,2), (3,4)\}$ (see Sec.\;\ref{sec:ProofThm1} for details).}\label{fig:IllustrateGraphTheory}
\end{figure}

Theorem~\ref{theorem:kSepBound} implies that if $\mathcal{W}^{\gamma}_G(\rho)> n + \gamma |E|- R^\gamma_k$ or $n + \gamma(|E| - k + 1) - 1$ for any $\gamma\in[0,1]$ (and if $k=2$), then $\rho$ is \textit{$k$-inseparable}/not $k$-separable (is GME). 
Hence, the corresponding optimal GME/$k$-inseparability criterion is given by
\begin{align}\label{eq:OptimalCriteria}
    \max_{0\leqslant\gamma\leqslant1} \mathcal{W}^{\gamma}_G(\rho) - \gamma |E|+ R^\gamma_k - n\,>\,0\,.
\end{align}
In general, the optimal choice of $\gamma$ for achieving the maximum in Eq.\;\eqref{eq:OptimalCriteria} depends on two factors: (i) the measured values of the two summation terms in $\mathcal{W}^{\gamma}_G(\rho)$, which vary across different experiments, and (ii) the number of edges $|E|$ and the reduction term $R^\gamma_k$, which behave differently for different underlying graphs.
Despite the criterion in~Eq.\;\eqref{eq:OptimalCriteria} having a seemingly linear form, the term $R^\gamma_k$ is in general nonlinear in $\gamma$, and while the optimal choice of $\gamma$ is $\gamma=0$ or $\gamma=1$ for some states (e.g., 2D cluster states), this is not always the case. 
\begin{Observations}{}{gammaBetween0and1}
    There exist states $\rho$ and graphs $G$ such that the optimal GME/$k$-inseparability criterion is achieved for $\gamma\in(0,1)$.
\end{Observations}

The examples for which we made this observation are mixed states $\rho=\frac{p}{2^n}\identity + (1-p)\ketbra{G}{G}$ with $0\leqslant p\leqslant1$ obtained by adding white noise to particular graph states $\ket{G}$ whose underlying graphs are what we call \textit{Cthulhu graphs}. Such graphs, parametrized by an integer $r\geqslant3$, consist of an $(r-1)$-vertex complete graph (the ``head'') attached to a degree-$r$ tree graph (the ``tentacles''), as illustrated in Fig.\;\ref{fig:CthulhuGraphTheory}. 
In Appendix\;\ref{app:ProofObsGamma01}, we show that for $r=4$ and $r\geqslant6$, the optimal choice of $\gamma$ in~Eq.\;\eqref{eq:OptimalCriteria} to detect the state $\rho$ as being $r$-inseparable is 
$\gamma=\left(\lfloor\frac{r}{2}\rfloor\!-\!1\right)\!/\lfloor\frac{r}{2}\rfloor$ if $p$ lies in the range
\begin{align}
    \frac{2\lceil\frac{r}{2}\rceil}{r(r-1)+2} \!\;<\!\; p \!\;<\;\! \frac{2[(r+1)\lfloor\frac{r}{2}\rfloor -r]}{(r^2+3r-2)\lfloor\frac{r}{2}\rfloor - r^2+r-2}.
    \label{ineq:CthuhluMainTextBounds}
\end{align}
This intermediate value of $\gamma$ naturally arises because Cthulhu graphs merge components whose optimal $\gamma$ lie at the extremal points 0 and 1 (see Appendix~\ref{app:TheoryEg}), causing the optimal choice for the combined structure to interpolate between them.

For state-diagnostic purposes, it may be sufficient to assess a given state's (in)separability with respect to a fixed $k$-partition (in contrast to $k$-(in)separability, which considers all $k$-partitions). For this case, we also provide a family of criteria to determine such (in)separability in Lemma~\ref{lemma:FixedkSepBound}, whose proof follows from that of Theorem~\ref{theorem:kSepBound} (see Sec.\;\ref{sec:ProofThm1}).

\begin{Lemmas}{Fixed $\bm{k}$-partition inseparability criterion}{FixedkSepBound}
Any $n$-qubit state $\rho$ that is separable with respect to a specific $(k\geqslant2)$-partition satisfies\\[-15pt]
\begin{align}
\mathcal{W}^\gamma_G(\rho) &\leqslant n + \gamma \left( |E|-|\overline{V}\suptiny{1}{-2}{(k)}|\right)-(1-\gamma)|\overline{E}\suptiny{1}{-2}{(k)}_\text{mcm}|,\label{ineq:FixedkSepCriteria}
\end{align}\\[-15pt]
for all $\gamma\!\in\![0,1]$, with $\overline{V}\suptiny{1}{-2}{\!(k)}$, $\overline{E}\suptiny{1}{-2}{(k)}_{\text{mcm}}$ defined in Theorem~\ref{theorem:kSepBound}.
\end{Lemmas}
\begin{figure}[t!]
    \centering
    \includegraphics[width=0.83\linewidth]{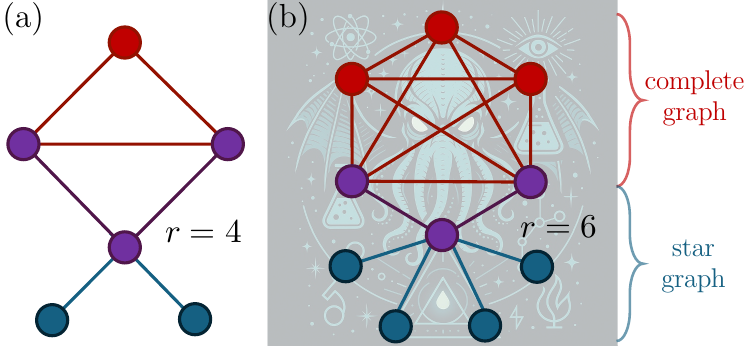}
    \caption{Cthulhu graphs: The parameter $r$ represents both the number of vertices in the ``head'' subgraph (red and purple vertices) and the degree of the central vertex in the star subgraph containing the ``tentacles'' (blue and purple vertices). These graphs are defined such that the ``head'' subgraph contains an $(r\!-\!1)$-vertex complete graph with two adjacent vertices being the leaves of the star subgraph. Their noisy graph states are examples where the optimal $r$-inseparability criteria is achieved for $\gamma\in(0,1)$.}\label{fig:CthulhuGraphTheory}
\end{figure}

Apart from noise and decoherence, graph states prepared in experiments may also differ from those described above in terms of the local bases with respect to which they are defined in Sec.\;\ref{sec:background}. 
Since local unitary (LU) transformations cannot change entanglement of any state and the proof of Theorem~\ref{theorem:kSepBound} is unaffected by LU conjugations of all of the stabilizers $S_i$ and $S_i S_j$ in Eq.\;\eqref{eq:criteriaDef}, GME/$k$-inseparability of such states can also be efficiently detected by our criteria by adapting the stabilizers by LU conjugation. 
Similarly, we can target stabilizer states, a larger family of states that includes the set of graph states as a subset. 
All stabilizer states are equivalent to graph states up to local Clifford (LC) operations{\textemdash}the subset of local unitaries that map the Pauli group to itself\;\cite{Schlingemann2001, VanDenNestDehaeneDeMoor2004}. For example, the $n$-qubit GHZ state can be obtained from the graph state for the star graph by application of Hadamard gates $H$, i.e., $\ket{\text{GHZ}_n} = \identity\otimes H^{\otimes n-1}\ket{G}$, where the first qubit corresponds to the central vertex of the star graph $G$. 
Let us summarize the above in the following remark.

\begin{Remarks}{}{LU}
    One can minimize $k$ of the certified $k$-inseparability of a state by optimizing over LU conjugations of the stabilizers $S_i$ and $S_i S_j$ in Eq.\;\eqref{eq:criteriaDef}. This maximizes the amount of information about multipartite entanglement that can be gained with our GME/$k$-inseparability criteria in any state, especially for states close to stabilizer states.
\end{Remarks}

In addition, LC operations generate equivalence classes of graph states: two graph states $\ket{G}$ and $\ket{G'}$ are \textit{LC-equivalent} if they differ only by a sequence of LC operations, with their graphs related by \textit{local complementations}~\cite{VanDenNestDehaeneDeMoor2004}. This freedom can be used to optimise our criteria under restricted measurements. One may choose an LC-equivalent representative $\ket{G'}=\otimes_{a=1}^nC_a\ket{G}$, where $C_a$ is a Clifford unitary, whose graph has the smallest maximum degree in the class, thereby reducing the stabilizer weight in Eq.\,\eqref{eq:criteriaDef}. The witness is then evaluated on $G'=(V',E')$, replacing $S_i$ for $i\in V'$ and $S_iS_j$ for $(i,j)\in E'$ by the LC-conjugated stabilizers of $\ket{G'}$, $(\otimes_{a=1}^nC_a^\dagger)S_i(\otimes_{a=1}^nC_a)$ and $(\otimes_{a=1}^nC_a^\dagger)S_iS_j(\otimes_{a=1}^nC_a)$, which stabilize $\ket{G}$ but are no longer the vertex generators or edge-generator products of the original graph. In Appendix~\ref{app:LC}, we provide an example of how applying local complementations (and the corresponding LC operations) to obtain a graph with a lower maximum degree reduces the maximum stabilizer weight required by our criteria. Alternatively, to minimize the total number of stabilizer terms that need to be measured, one may select a \textit{minimum-edge representative} (MER) within the LC-equivalence class, which minimizes the number of edge terms in Eq.\;\eqref{eq:criteriaDef}. Methods for finding MERs can be found in Ref.\,\cite{SharmaGoodenoughBorregaard2025}.

\begin{Remarks}{}{}
    In some cases, applying local complementations to the underlying graph can reduce the number or maximum weight of stabilizers required by our criteria.
\end{Remarks}

Although graph states serve as targets for the construction of the criteria, and although the latter work particularly well for (noisy) graph states (as we shall demonstrate numerically for realistic experimental situations in Sec.~\ref{sec:Experiment&Simulations} and analytically for white-noise-added graph states in Appendix~\ref{app:TheoryEg}) and stabilizer states (up to LC), our GME/$k$-inseparability criteria can also certify multipartite entanglement for other states. For example, we can certify GME in non-stabilizer states that are LU equivalent to (noisy) Dicke states (see Appendix~\ref{app:TheoryEg}), which leads us to the following remark.

\begin{Remarks}{}{}
    While our criteria are defined with respect to a specific graph $G$, for which the state $\ket{G}$ achieves the maximum value $n + \gamma|E|$ of $\mathcal{W}^{\gamma}_G$, it remains a valid entanglement criterion for any $n$-qubit state (including non-stabilizer states): if Eq.\,\eqref{ineq:kSepCriteria} is violated, GME/$k$-inseparability is certified, regardless of the underlying state.
\end{Remarks}

At the same time one should note that for certain $n$-qubit states, such as GHZ states (which are LC-equivalent to both star- and complete-graph states), any method that can certify their GME must measure at least one $n$-body observable~\cite{ShiChenChiribellaZhao2025}. 
Thus, it is clear that no criteria using only constant-weight observables (including those presented here) can detect GME in all $n$-qubit states.

Note that the second bound in Eq.\;\eqref{ineq:kSepCriteria}, which is looser than the first bound for certain graph states, recovers the witness in Eq.\;(45) of Ref.\;\cite{TothGuehne2005b} if we set $\gamma=0$, in which case no $k$-inseparability can be detected for $k>2$ that is not GME. While the first upper bound is generally tighter, the computational cost of calculating $R^\gamma_k$ using the (potentially suboptimal) algorithm presented in Appendix~\ref{app:algorithm_kSepBound} grows exponentially with~$n$, as enumerating all $k$-partitions of~$n$ vertices takes $O(k^n)$. This overhead does not arise for the fixed $k$-partition inseparability criteria (Lemma~\ref{lemma:FixedkSepBound}). In addition, for each partition, the maximum-cardinality matching is determined with a cost of $O(|\overline{V}\suptiny{1}{-2}{(k)}|^2\!\cdot\!|\overline{E}\suptiny{1}{-2}{(k)}|)$ using the most widely used algorithm\;\cite{Edmonds1965} or \small$O(|\overline{V}\suptiny{1}{-2}{(k)}|^{1/2}\!\cdot\!|\overline{E}\suptiny{1}{-2}{(k)}|)$ \normalsize using the most efficient algorithm known to date\;\cite{MicaliVazirani1980}. Hence, for large~$n$, the second inequality in Eq.\;\eqref{ineq:kSepCriteria} can serve as a heuristic $k$-separability criterion that is easy to verify.\\[-5pt]

As mentioned at the beginning of this section, our criteria generally only require measuring $2n-1\leqslant J\leqslant\frac{n(n+1)}{2}$ of the $(\leqslant\!\!m)$-body stabilizers of the graph state $\ket{G}$ with $m\leqslant\max_{(i,j)\in E} [d(i)+d(j)]$, and generally need $\min(n+1,5)\leqslant M\leqslant \frac{n(n+1)}{2}$ local measurement settings~\cite{footnote:MeasurementSettings}. In the next section, we will show that using SDP, one can potentially further bring down the number and maximum weight of the measured stabilizers to as low as $J=n$ and $m\leqslant\max_{i\in V} d(i)+1$, using as few as $2\leqslant M\leqslant n$ local measurement settings.


\vspace*{-1mm}
\subsection{SDP for incomplete measurements}\label{sec:SDPincomplete}
\vspace*{-1mm}

{\noindent}In some experimental situations, even more stringent measurement constraints might apply. For example, it may occur that only the stabilizer generators $S_i$ are accessible, but not the stabilizer products $S_i S_j$, meaning that not all terms of our GME/$k$-inseparability criteria in Eq.\;\eqref{eq:criteriaDef} are measurable (see Sec.\;\ref{sec:Experiment&Simulations}).  
One can of course lower bound these terms trivially by zero. However, this will most likely lead to not certifying any $k$-inseparability for $k<n$ since $\mathcal{W}^{\gamma}_G(\rho)$ would be significantly underestimated. Given access to expectation values of up to $m$-body correlators, we can potentially obtain non-trival lower bounds for these stabilizer terms via an SDP with constraints based on the measurable correlators and on $\rho$ being a density matrix. The simplest approach is to linearize the following optimization problem:
\begin{subequations}
\begin{align}
    \min_\rho &|\trace(S_{v_1}S_{v_2} \rho)|\label{eq:primalObjFunc_S1S2}\\
    \text{subject to}\;\; &|\trace(S_{v_j} \rho) - b_j| \leqslant \varepsilon_j\;\;\text{for}\;j\in\{1,2\}\;\!,\label{eq:SDPprimalConstraint_S1S2}\\
    \text{\hspace{44pt}}\;\; &\trace(\rho) = 1,\; \rho \geqslant 0,\label{eq:SDPprimalLastLine_S1S2}
\end{align}
\end{subequations}
and solve the corresponding dual SDP problem, with $b_j$ being the measured expectation values of $S_{v_j}$ and $\varepsilon_j$ their statistical uncertainty.
More generally, one can incorporate more experimentally inaccessible terms from Eq.\;\eqref{eq:criteriaDef} into the objective function, and more measurable observables into the constraints to get potentially tighter lower bounds for the sum of those inaccessible terms. The general form of such optimizations and their associated SDPs can be found in Sec.\;\ref{sec:method_SDPincomplete}.

The advantage of applying SDP here is that the numerically obtained dual optimal solutions are always faithful lower bounds of those experimentally inaccessible terms in our criteria due to \textit{weak duality}. In addition, by proving \textit{strong duality} holds for our general SDP problem in Eqs.\;\eqref{eq:dualSDPobj}--\eqref{eq:dualSDPconstraintMainLast}, we are promised to get numerically tight lower bounds (see Sec.\;\ref{sec:method_SDPincomplete} for more details).

In Sec.\;\ref{sec:Experiment_ApplyCriteria}, we solve the dual SDP that corresponds to Eqs.\;\eqref{eq:primalObjFunc_S1S2}--\eqref{eq:SDPprimalLastLine_S1S2} to lower bound each term $|\langle S_i S_j\rangle|$ that enters our GME/$k$-inseparability criteria with only the measured expectation values $\langle S_i\rangle$ and $\langle S_j\rangle$ as constraints. 
Although the certified entanglement is generally lower than when all $\langle S_i S_j\rangle$ are measured, the certified GME/$k$-inseparability is often comparable (see, e.g., Table~\ref{tab:Sim2Dcluster}). Thus, incorporating this SDP technique allows us to certify GME/$k$-inseparability even under more restrictive measurement conditions.

Regarding scalability, the SDP can be solved efficiently on a standard laptop using the MOSEK solver in MATLAB for reduced states of up to 12 qubits. Therefore, this method readily applies to all graph states satisfying $\max_{(i,j)\in E} [d(i)+d(j)]\leqslant12$, including many graph states of practical importance in quantum information science, such as all 1D to 3D cluster states\;\cite{RaussendorfBriegel2001}, all ring-graph states\;\cite{BartolucciEtAl2023}, and many tree-graph states\;\cite{VarnavaBrowneRudolph2006,BorregaardEtAl2020}. Beyond 12 qubits, while solving the SDP may become intractable with conventional interior-point methods, alternative methods with better scalability~\cite{MajumdarHallAhmadi2019}, such as augmented Lagrangian methods, can be employed for larger problems, although the development of more stable software implementations is still required.

In the next section, we will justify the measurement restrictions{\textemdash}limited to at most $O(1)$-body observables{\textemdash}that we have been considering so far with a concrete experimental scenario. Furthermore, we will show that our GME and $k$-inseparability criteria perform well even for graph states simulated under realistic experimental conditions.


\subsection{Experimental proposal and simulations}\label{sec:Experiment&Simulations}
\vspace*{-1mm}

{\noindent}To showcase possible applications of the presented GME and $k$-inseparability criteria, we propose to evaluate the criteria on graph states consisting of microwave photonic qubits. Recent experiments with superconducting circuits have demonstrated the capability to generate large-scale graph states comprising tens of photonic qubits~\cite{besse_2020_realizingdeterministicsource, ferreira_2024_deterministicgenerationmultidimensional, OsullivanEtAl2024, sunada_2024_efficienttomographymicrowave}.  However, characterizing the quality of the generated states is still challenging, as high-fidelity single-photon detectors in the microwave regime are still the subject of ongoing research~\cite{kono_2018_quantumnondemolitiondetection, besse_2018_singleshotquantumnondemolition, balembois_2024_cyclicallyoperatedmicrowave}. In state-of-the-art experiments, microwave photons are first amplified using near-quantum-limited amplifiers and then detected via heterodyne measurements. Due to vacuum and thermal noise added during the amplification process, the signal-to-noise ratio (SNR) of single-photon measurements in the microwave regime is typically limited to around $\eta\approx 0.2-0.4$~\cite{OsullivanEtAl2024, sunada_2024_efficienttomographymicrowave}. Due to the exponential scaling of the SNR with the weight of the Pauli observable (see Appendix~\ref{app:ConvertModeToPaulis}), the measured Pauli observables are limited to weight 5~\cite{sunada_2024_efficienttomographymicrowave}. This makes the criteria proposed in this work particularly attractive, as only low-weight Pauli expectation values are required.\\[-8pt]

Building on early proposals for sequential generation of graph states~\cite{Schon2005} and recent demonstrations of cluster-state generation using superconducting circuits~\cite{OsullivanEtAl2024}, we consider a physical system comprising of multiple superconducting transmon qubits, each tunably coupled to a waveguide, as illustrated in Fig.\;\ref{fig:CircuitSchematic} (see also Appendix~\ref{app:Simulation} for a more detailed introduction). By combining single- and two-qubit gates on the transmons with controlled emission, such a system can generate various graph states, including one- and two-dimensional cluster states, ring-graph states, and tree-graph states, whose quality can be benchmarked using the entanglement criteria presented in this work. 
\begin{figure}
    \centering
    \includegraphics[width=\linewidth]{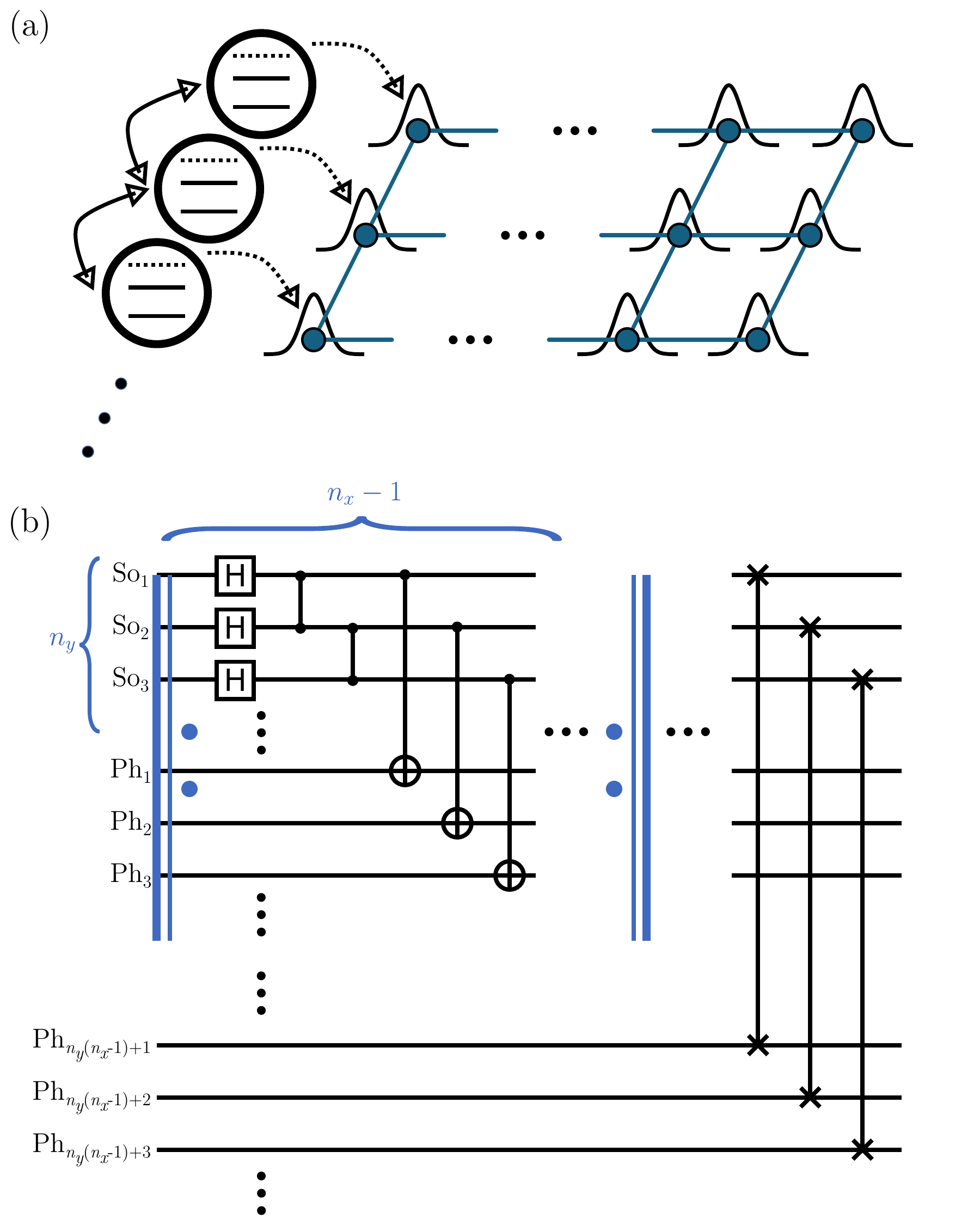}
    \caption{Schematic and quantum circuit for cluster state generation. (a) Schematic of sequentially generated microwave photonic cluster state in two dimensions. The setup consists of a linear array of tunably interacting transmon qutrits. Arbitrary single- and two-qubit gates on the lowest two levels and the third level can be used for controlled emission of microwave photons. (b) Quantum circuit for creating $n_x\times n_y$ two-dimensional cluster state. In the circuit, $\mathrm{So}_i$ denotes the $i$'th source transmon, whereas $\mathrm{Ph}_i$ denotes the $i$'th generated photonic qubit. The double blue lines and two dots indicate the block of circuit to be repeated.}
    \label{fig:CircuitSchematic}
\end{figure}


\vspace{-5mm}
\subsection{Certifying GME/$k$-inseparability in simulated states}\label{sec:Experiment_ApplyCriteria}
\vspace{-2mm}

{\noindent}Following the numerical approach detailed in Ref.~\cite{OsullivanEtAl2024} and Appendix~\ref{app:Simulation}, we simulate the generation protocol for cluster, ring-graph, and tree-graph states of varying sizes. The computed $ k$-inseparability of the simulated states are summarized in Table~\ref{tab:SimRingGraph} and Tables~\ref{tab:Sim1Dcluster}--\ref{tab:SimTreeGraph} in Appendix~\ref{app:moreSimTables}. Specifically, we compare using our GME/$k$-inseparability criteria{\textemdash}when all terms in Eq.\;\eqref{ineq:kSepCriteria} are measured{\textemdash}with that obtained using the witness from Eq.\;(45) of Ref.\;\cite{TothGuehne2005b} [hereafter referred to as the ``\textit{TG45 witness}''], which is, to the best of our knowledge, the only witness/criterion in the literature that requires measurements of at most $O(1)$-body observables. 
In the same table, we also show the certified GME/$k$-inseparability using our criteria when only the stabilizer generators [terms in the first sum in Eq.\;\eqref{ineq:kSepCriteria}] are measured and the terms $|\langle S_i S_j\rangle|$ are unmeasured but lower bounded by the dual SDP in Sec.\;\ref{sec:SDPincomplete} using only the expectation values $\langle S_i\rangle$ and $\langle S_j\rangle$ as constraints \footnote{We employed the numerical SDP solver MOSEK in MATLAB.}. For completeness, we also include the GME/$k$-inseparability witnesses of Ref.\,\cite{ZhouZhaoYuanMa2019}, even though this method generally requires to measure at least $O(2^{n/c})$ stabilizers with a maximum degree of up to $O(n)$, where $c$ denotes the chromatic number of the underlying graph.\\[-8pt]

For all of the simulated graph states, whenever the TG45 witness detects GME, our criterion always detects GME. However, there are many states for which the TG45 witness cannot certify GME (and is therefore inconclusive), whereas our criteria can still certify $k$-inseparability for a relatively small $k$, suggesting that a significant amount of multipartite entanglement is still present.
This observation is particularly prominent for the simulated ring-graph states (see Table~\ref{tab:SimRingGraph}) where our criterion detects GME in the 7- and 8-qubit states, while the TG45 witness cannot. Regarding the witnesses of Ref.\,\cite{ZhouZhaoYuanMa2019}, they perform slightly better than our criteria, but their required measurements are more experimentally demanding and become infeasible in the experimental platforms we consider for large $n$.\\[-6pt]

Furthermore, even without measuring the $|\langle S_i S_j\rangle|$ terms, our GME/$k$-inseparability criteria{\textemdash}where the second sum in Eq.\;\eqref{ineq:kSepCriteria} bounded by SDP{\textemdash}can still certify $k$-inseparability at a level comparable to that of our full criteria (with all terms measured), thus achieving similar certification power while using much fewer measured stabilizers.
This also highlights the advantage of our criteria: they can already outperform the TG45 witness using just the measurement data associated with the stabilizer generators of the underlying graph state.\\[-10pt]

The fidelities $F$ between the simulated states and the corresponding ideal, maximally entangled, graph states are also shown in Table~\ref{tab:SimRingGraph} and Tables~\ref{tab:Sim1Dcluster}--\ref{tab:SimTreeGraph} in Appendix~\ref{app:moreSimTables}. Since $F\!>\!0.5$ for all simulated states, all simulated states are GME~\cite{TothGuehne2005a,JungnitschMoroderGuehne2011b}. However, since our GME and $k$-inseparability criteria only use at most $(2n-1)/2^n$ of the total stabilizers, we expect to not be able to certify GME for states with fidelities not close to 1.
Also, we note that we obtain the fidelities using full density matrices from our numerical simulations. However, under the realistic experimental restrictions that we consider in this work, the fidelity is not an accessible quantity since estimating/lower bounding it requires measuring up to $O(n)$-body observables (see Sec.\;\ref{sec:review}).

\vspace{-1mm}
\begin{smallboxtable}{Ring-graph states}{tab:SimRingGraph}
\begin{center}
\vspace*{0mm}
\hspace*{-4mm}
\includegraphics[width=0.52\linewidth]{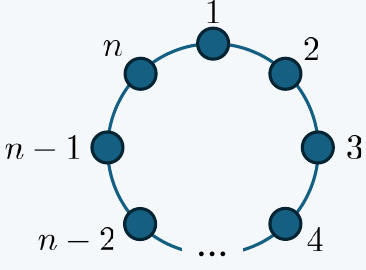}\\[2mm]
\resizebox{1\linewidth}{!}{%
\begin{tabular}{|c||c|c|c|c|c|}
\hline
$n$ &\;\cite{TothGuehne2005b}'s witness&\;\cite{ZhouZhaoYuanMa2019}'s witness  & Our criteria & Our criteria (SDP) & Fidelity \\
\hhline{|=||=|=|=|=|=|}
4 & GME & GME &GME & GME & 0.838 \\
\hline
5 & GME & GME & GME & GME & 0.810 \\
\hline
6 & GME & GME & GME & GME & 0.785 \\
\hline
7 & / &GME & GME & 3-insep & 0.745 \\
\hline
8 & / &GME& GME & 3-insep & 0.719 \\
\hline
9 & / &3-insep& 3-insep & 3-insep & 0.683 \\
\hline
10 & / &3-insep& 3-insep & 3-insep & 0.659 \\
\hline
11 & / &3-insep& 3-insep & 3-insep & 0.626 \\
\hline
12 & / &3-insep& 4-insep & 5-insep & 0.607 \\
\hline
\end{tabular}
}
\end{center}
\small{Comparison of certified multipartite entanglement in simulated $n$-qubit ring-graph states using different witnesses/criteria. The first column shows the GME certification results using the witness from Eq.\;(45) in\;Ref.\;\cite{TothGuehne2005b}. The second column shows certification results using the witness in Eq.\,(15) of Ref.\,\cite{ZhouZhaoYuanMa2019}, optimized over all 3-colorings of odd-$n$ ring graphs. These witnesses require at least $O(2^{n/3})$ stabilizers, with a maximum weight up to $O(n)$. The third column reports GME/$k$-inseparability certified by our criteria with all terms in Eq.\;\eqref{eq:criteriaDef} measured. The fourth column shows results from our criteria with only the stabilizer generators measured, and all $|\langle S_i S_j\rangle|$ in Eq.\;\eqref{eq:criteriaDef} lower bounded by the dual SDP in Sec.\;\ref{sec:SDPincomplete}. The last column gives the fidelities with the ideal graph states, calculated from the full simulated density matrices.}
\end{smallboxtable}


\vspace*{-6mm}
\subsection{Noise sensitivity of the GME/$k$-inseparability criteria}\label{sec:NoiseSensitivity}
\vspace*{-3mm}

\begin{figure*}[ht!]
    \centering
    \begin{minipage}{0.495\linewidth}
    \includegraphics[width=\linewidth]{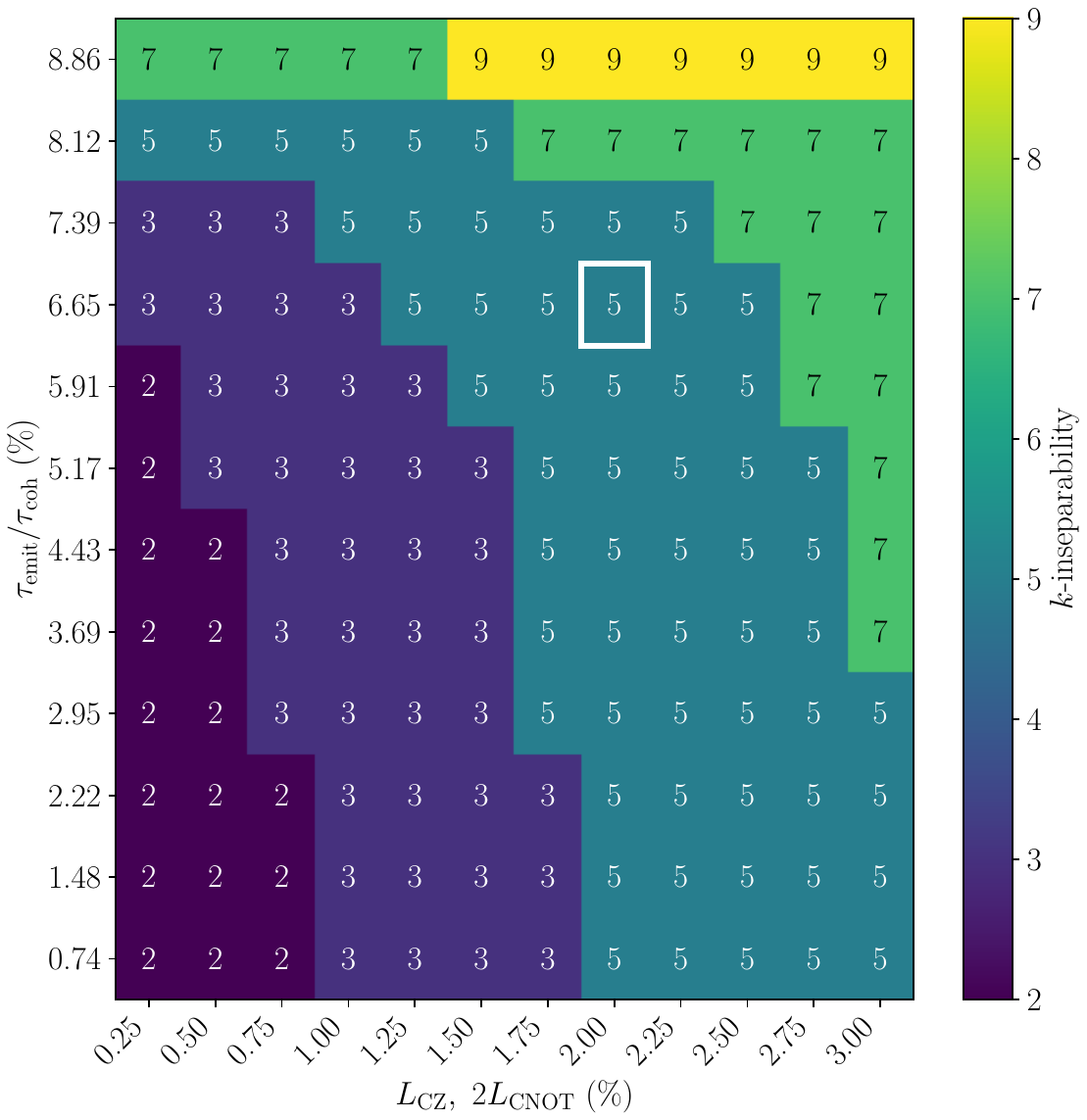}
    \end{minipage}
    \begin{minipage}{0.495\linewidth}
    \includegraphics[width=\linewidth]{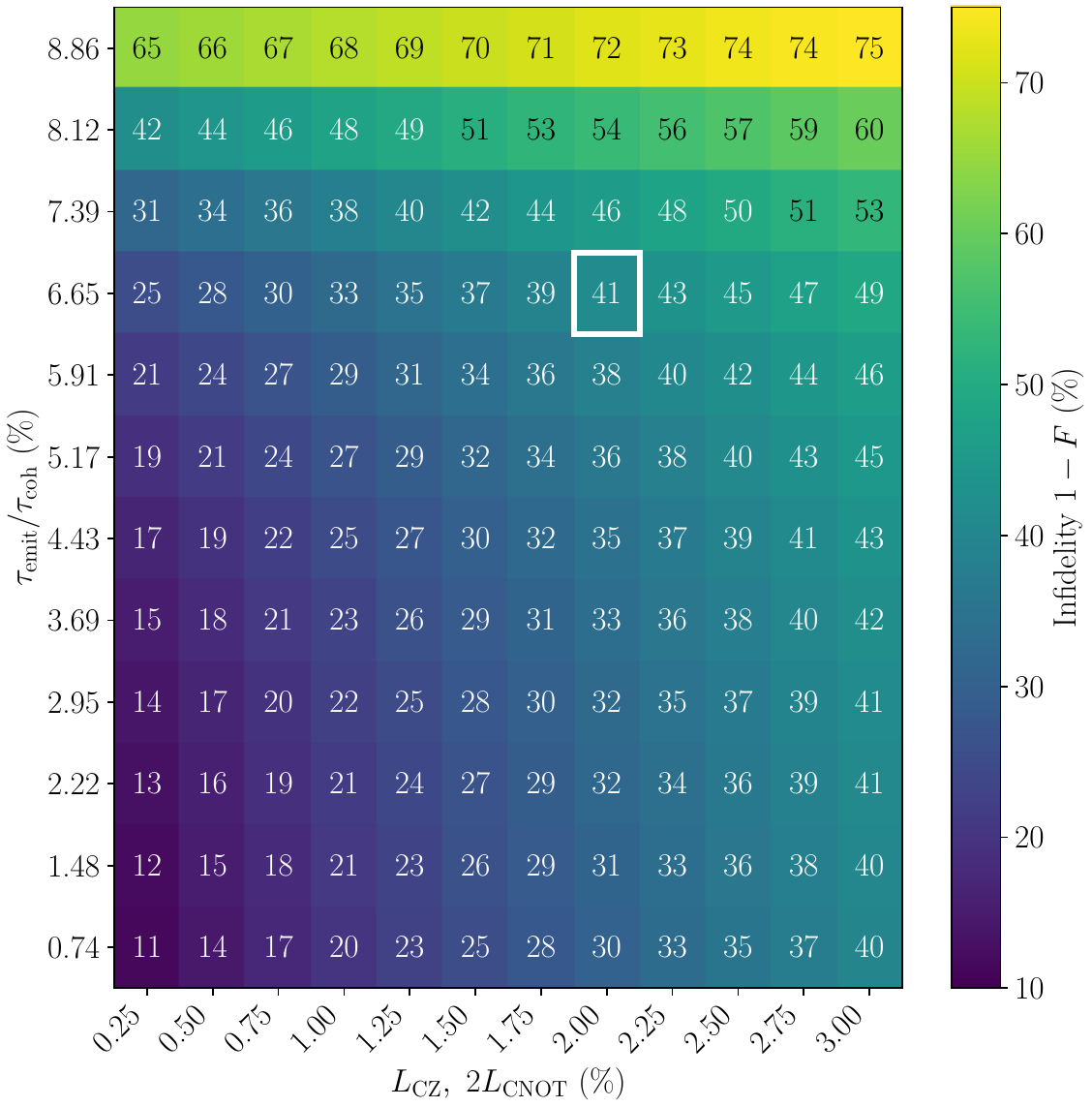}
    \end{minipage}
    \caption{Certified $k$-inseparability and infidelity (to the ideal cluster state) of the simulated $5\!\times\!2$-qubit 2D cluster state for different noise parameters. In the x-axis, leakage errors, which are the dominant coherent errors, are varied. In the y-axis, the coherence times of the source transmons are varied. The parameter $\tau_\text{emit}/\tau_\text{coh}$ represents the photon emission versus coherence times ratio (see text for definition). The points corresponding to the experimental noise parameters from Ref.~\cite{OsullivanEtAl2024} are marked with white squares.}\label{fig:2x5Ladder_VaryParameter12x12Plot}
\end{figure*}

{\noindent}To study how the certified entanglement depends on noise parameters, we further simulate the generation of a $5\times 2$ cluster state with varying noise parameters. The two common error sources in experiments are decoherence errors and leakage errors. For decoherence errors, we simultaneously scale the coherence times of both source modes. The parameter we choose to quantify the decoherence error is $\tau_\mathrm{emit}/\tau_\mathrm{coh}$, where $\tau_\mathrm{emit}$ is the time taken to emit a pair of photons, and $\tau_\mathrm{coh}:=\min{(T_1^\mathrm{So_1,g-e},T_2^\mathrm{So_1,g-e}, T_{1}^\mathrm{So_2,g-e},  T_{2}^\mathrm{So_2,g-e})}$ is the smallest coherence time. In Ref.~\cite{OsullivanEtAl2024}, $\tau_\mathrm{emit}=650~\mathrm{ns}$ and $\tau_\mathrm{coh}=22~\mu \mathrm{s}$. The leakage errors lead to residual population in the second excited state of the transmon after a two-qubit C\hspace*{-0.5pt}Z gate or a controlled-emission CNOT gate. We simultaneously vary the leakage error values of the C\hspace*{-0.5pt}Z and CNOT gates around the experimental values $L_{{\rm C}\hspace*{-0.5pt}{\rm} Z}=2\%$ and $L_{\rm CNOT}=1\%$, as reported in Ref.~\cite{OsullivanEtAl2024}. The full list of parameters used in the simulation is given in Appendix~\ref{app:Simulation}.\\[-3pt]

The certified $k$-inseparability qualitatively follows the corresponding state infidelities, see Fig.~\ref{fig:2x5Ladder_VaryParameter12x12Plot}. Both increasing decoherence errors and leakage errors increase $k$, indicating a reduction in the amount of certifiable entanglement. This demonstrates that our criterion is a useful diagnostic tool for the performance of the experiment, especially as the state fidelity is difficult to obtain experimentally. It is also worth noting that the certified $k$ value is always odd, for $k>2$, meaning the states would always be separated into an even number of parties. This is possibly due to there being a pair of source transmons, but confirming this requires further investigation.

%
\section{Discussion}\label{sec:discussion}
\vspace*{-1mm}

{\noindent}We have presented a new set of criteria for detecting GME/$k$-inseparability and fixed $k$-partition inseparability that can be used in measurement-restricted settings where other existing criteria cannot be evaluated. Our criteria are defined by graphs and the associated graph states, and are ideally suited for characterizing multipartite entanglement in the latter, or in more general stabilizer states with few simple measurements. Nevertheless, our method can also be employed independently of the underlying state. 

To demonstrate the flexibility and performance of our approach, we derived analytical white-noise thresholds for detecting $k$-inseparability of various noisy graph states, as well as noisy non-stabilizer states that are LU-equivalent to Dicke states. 
We further conducted a series of numerical experiments using realistic parameters for setups generating time-bin entangled microwave photonic qubits. The results indicate that our method can reliably detect GME and thus help characterize devices used for generating complex quantum states that serve as resources for, e.g., quantum computation, quantum sensing, and quantum networks. 
In all of these applications, multipartite entanglement is of central importance, and its certification can serve as a benchmark that provides at least partial assurance of device functionality and quality of control when full state or device characterization is impractical.

While promising, our methods have yet to be tested under real laboratory conditions, which typically come with additional technological challenges. Some of these difficulties, such as further restrictions on the number or type of measurable observables, may be ameliorated by employing SDP techniques, as discussed in the manuscript, while other complications may motivate further refinements in the design of entanglement-detection criteria. 
We therefore envisage tests on actual hardware as a logical next step and an opportunity for further research.

Beyond these practical considerations, our GME/$k$-inseparability criteria admit a natural interpretation as a Hamiltonian $H_G = \sum_{i\in V} S_i + \gamma\!\sum_{(i,j)\in E} S_i S_j$, whose structure closely resembles families of local Pauli Hamiltonians studied in the many-body physics literature, including those appearing in the context of symmetry-protected topological (SPT) phases~\cite{ChenLuVishwanath2014,TantivasadakarnThorngrenVishwanathVerresen2023}. In the ``pivot Hamiltonian'' framework~\cite{TantivasadakarnThorngrenVishwanathVerresen2023}, Hamiltonians of the 1D cluster model can be obtained by conjugating a classical Ising model $H_{\mathrm{Ising}} = \sum_{i\in V} X_i + \gamma \sum_{(i,j)\in E}X_i X_j$ with a unitary generated by a graph-local Ising-type pivot, suggesting a structural analogy with the Hamiltonian form considered here. This procedure can generate Hamiltonians whose ground states exhibit nontrivial SPT order~\cite{TantivasadakarnThorngrenVishwanathVerresen2023}. Our entanglement criteria establish that any state—pure or mixed—whose energy violates
\begin{align}
    |\langle H_G\rangle_\rho|\leqslant \mathcal{W}^\gamma_G(\rho) \leqslant n + \gamma |E|-
    R^\gamma_k\,,
\end{align}
is necessarily GME/$k$-inseparable. This complements the conventional perspective in many-body physics, which focuses predominantly on ground states or low-energy excited states. 
The Hamiltonian structure underlying our witness is particularly noteworthy, as it motivates a broader framework for constructing entanglement witnesses from the many-body physics perspective. This connection suggests that such entanglement-witnessing Hamiltonians may be systematically derived by combining well-understood many-body constructions with entanglement theory, opening new avenues for exploring the interplay between multipartite entanglement and quantum many-body phenomena.

\vspace*{-5mm}
\section{Methods}\label{sec:methods}
\vspace*{-2mm}
\subsection{Proof of Theorem~\ref{theorem:kSepBound} and Lemma~\ref{lemma:FixedkSepBound}}\label{sec:ProofThm1}
\vspace*{-2mm}

{\noindent}To prove Theorem~\ref{theorem:kSepBound} and Lemma~\ref{lemma:FixedkSepBound}, we need to first state the following lemma and propositions. The proof of Lemma~\ref{lemma:ACT} can be found in Ref.\;\cite{AsadianErkerHuberKlockl2016} (Theorem~1) while Propositions~\ref{prop:3PaulisBound} and~\ref{obs:kPartition} are proven in Appendix~\ref{app:ProofPropObs}.

\addtocounter{lemma}{1}
\begin{lemma}[Anticommutativity bound\;\cite{AsadianErkerHuberKlockl2016}]\label{lemma:ACT}
    Let $\{E_i\}_{i=1}^{d^2}$ be an orthonormal self-adjoint basis of $d\times d$ complex matrices (i.e., $\trace(E_i E_j)=d\delta_{ij}$) and let $\Omega\subseteq \{1,\ldots,d^2\}$ such that $\frac{1}{2}\sqrt{\sum_{i\neq j\in\Omega}\langle\{E_i, E_j\}\rangle^2_\rho} \leqslant\mathcal{K}$ where $\{A,B\}\coloneqq AB+BA$. Then,
    \begin{equation}
        \sum_{i\in\Omega} \langle E_i\rangle^2_\rho  \leqslant \max_{i\in\Omega} \langle E_i^2\rangle_\rho + \mathcal{K}
    \end{equation}
    for any state $\rho\in\mathcal{D}(\mathbb{C}^{d})$.
\end{lemma}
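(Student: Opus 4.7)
The plan is to build an auxiliary self-adjoint operator whose $\rho$-expectation equals $\alpha\coloneqq\sum_{i\in\Omega}\langle E_i\rangle^2_\rho$, exploit the variance inequality $\langle X\rangle^2_\rho\leqslant\langle X^2\rangle_\rho$ (which needs only $X=X^\dagger$ and $\rho\geqslant 0$), and then bound $\langle X^2\rangle_\rho$ from above by splitting $X^2$ into a diagonal and an off-diagonal piece and treating the latter with the Cauchy-Schwarz inequality together with the hypothesis on $\mathcal{K}$.

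Concretely, I would set $X\coloneqq\sum_{i\in\Omega}\langle E_i\rangle_\rho\,E_i$, so that $\langle X\rangle_\rho=\alpha\geqslant 0$ and hence $\alpha^2\leqslant\langle X^2\rangle_\rho$, and then expand
\[
X^2=\sum_{i\in\Omega}\langle E_i\rangle^2_\rho\,E_i^2\;+\!\!\sum_{\substack{i,j\in\Omega\\ i\neq j}}\!\!\langle E_i\rangle_\rho\langle E_j\rangle_\rho\,E_i E_j.
\]
Writing $E_i E_j=\tfrac{1}{2}\{E_i,E_j\}+\tfrac{1}{2}[E_i,E_j]$, the commutator contributions drop out of the $i\neq j$ sum by symmetry of the scalar coefficients $\langle E_i\rangle_\rho\langle E_j\rangle_\rho$ under $i\leftrightarrow j$, leaving only anticommutators. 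The diagonal piece is bounded crudely by $\alpha\cdot\max_{i\in\Omega}\langle E_i^2\rangle_\rho$, while Cauchy-Schwarz applied to the remaining double sum gives
\[
\Bigl|\tfrac{1}{2}\!\!\sum_{i\neq j}\!\langle E_i\rangle_\rho\langle E_j\rangle_\rho\langle\{E_i,E_j\}\rangle_\rho\Bigr|\leqslant\tfrac{1}{2}\sqrt{\sum_{i\neq j}\langle E_i\rangle^2_\rho\langle E_j\rangle^2_\rho}\,\sqrt{\sum_{i\neq j}\langle\{E_i,E_j\}\rangle^2_\rho},
\]
and the first factor is at most $\alpha$ since $\sum_{i\neq j}\langle E_i\rangle^2_\rho\langle E_j\rangle^2_\rho\leqslant\alpha^2$, while the second is at most $2\mathcal{K}$ by hypothesis.

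Putting the three estimates together yields $\alpha^2\leqslant\alpha\,\max_{i\in\Omega}\langle E_i^2\rangle_\rho+\alpha\mathcal{K}$; dividing by $\alpha$ (the case $\alpha=0$ is trivial) gives the claimed bound. Note that the orthonormality relation $\trace(E_i E_j)=d\,\delta_{ij}$ plays no direct role in the argument above — it only becomes relevant when one applies the lemma to specific bases such as (generalised) Pauli operators, where it ensures $E_i^2=\identity$ and hence $\max_i\langle E_i^2\rangle_\rho=1$. I expect the main subtlety to be careful bookkeeping of the symmetric/antisymmetric decomposition and of the Cauchy-Schwarz step, which must be performed in the pair index $(i,j)$ rather than separately in $i$ and $j$ in order to reproduce the exact factor of $1/2$ appearing in the definition of $\mathcal{K}$; once this is set up properly, the remainder of the argument is a short, mechanical manipulation.
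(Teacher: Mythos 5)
Your proof is correct, and it is essentially the argument the paper relies on: the paper does not prove Lemma~\ref{lemma:ACT} itself but defers to Theorem~1 of Ref.~\cite{AsadianErkerHuberKlockl2016}, whose proof likewise introduces $X=\sum_{i\in\Omega}\langle E_i\rangle_\rho E_i$, uses $\langle X\rangle_\rho^2\leqslant\langle X^2\rangle_\rho$, drops the commutators by symmetry, and bounds the anticommutator double sum via Cauchy--Schwarz in the pair index. Your closing remark that the orthonormality condition is never actually used (the statement holds for any finite collection of self-adjoint operators) is also accurate.
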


\begin{proposition}\label{prop:3PaulisBound}
    Let $\{A_i\}_{i=1}^m$ and $\{B_i\}_{i=1}^m$ be subsets of orthonormal self-adjoint bases of $d_1\times d_1$ and $d_2\times d_2$ complex matrices, respectively, such that $\{A_i,A_j\} = 2\delta_{ij}\identity_{d_1}$ and $\{B_i,B_j\} = 2\delta_{ij}\identity_{d_2}$
    for all $i,j$. Then, the expectation values of $A_i$ and $B_i$ with respect to any quantum states $\rho\in\mathcal{D}(\mathbb{C}^{d_1})$ and $\sigma\in\mathcal{D}(\mathbb{C}^{d_2})$ must satisfy $\sum_{i=1}^m |\langle A_i\rangle_\rho\langle B_i\rangle_\sigma| \leqslant 1$.
\end{proposition}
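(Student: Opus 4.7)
The plan is to apply Lemma~\ref{lemma:ACT} separately to the two anticommuting families $\{A_i\}$ and $\{B_i\}$ and then combine the resulting bounds via the Cauchy–Schwarz inequality in $\mathbb{R}^m$.

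First, I would extract two consequences from the relation $\{A_i,A_j\} = 2\delta_{ij}\identity_{d_1}$: for $i\neq j$, the anticommutator vanishes identically, so in the notation of Lemma~\ref{lemma:ACT} one gets $\langle\{A_i,A_j\}\rangle_\rho = 0$ for every $\rho$ and hence $\mathcal{K} = 0$; and for $i = j$, the relation gives $A_i^2 = \identity_{d_1}$, so $\langle A_i^2\rangle_\rho = 1$ for every state $\rho$. Applying Lemma~\ref{lemma:ACT} with $\Omega = \{1,\ldots,m\}$ (which is a legal choice since the statement only requires $\Omega$ to index a subset of the orthonormal self-adjoint basis) then yields $\sum_{i=1}^m \langle A_i\rangle_\rho^2 \leqslant 1$. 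The identical argument applied to $\{B_i\}$ and $\sigma$ gives $\sum_{i=1}^m \langle B_i\rangle_\sigma^2 \leqslant 1$.

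Finally, I would apply the Cauchy–Schwarz inequality to the two vectors in $\mathbb{R}^m$ with components $|\langle A_i\rangle_\rho|$ and $|\langle B_i\rangle_\sigma|$: the product of the two $\ell^2$ norms upper-bounds $\sum_i |\langle A_i\rangle_\rho \langle B_i\rangle_\sigma|$, and each factor is at most~$1$ by the previous step.

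I do not anticipate a real obstacle here: the argument is a short two-step reduction. The only non-automatic checks are (i) that Lemma~\ref{lemma:ACT} is still valid when applied to a proper subset $\Omega$ of an orthonormal basis (it is, since the statement quantifies over such subsets), and (ii) that the orthonormality convention $\trace(A_i A_j) = d_1 \delta_{ij}$ of Lemma~\ref{lemma:ACT} is automatically implied by $\{A_i,A_j\} = 2\delta_{ij}\identity_{d_1}$ via tracing, so the hypotheses match without any adjustment.
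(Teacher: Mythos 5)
Your proof is correct and follows essentially the same route as the paper's: both invoke Lemma~\ref{lemma:ACT} with $\mathcal{K}=0$ (since the off-diagonal anticommutators vanish identically) and $\langle A_i^2\rangle=\langle B_i^2\rangle=1$ to get $\sum_i\langle A_i\rangle_\rho^2\leqslant1$ and $\sum_i\langle B_i\rangle_\sigma^2\leqslant1$, and then combine these via Cauchy--Schwarz. Your added check that the orthonormality convention $\trace(A_iA_j)=d_1\delta_{ij}$ follows by tracing the anticommutation relation is a correct and welcome detail that the paper leaves implicit.
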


\begin{proposition}\label{obs:kPartition}
    Any $k$-cut of a connected graph must remove at least $k-1$ edges that are shared among at least $k$ vertices.
\end{proposition}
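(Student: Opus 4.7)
The plan is to establish the two ingredients of the proposition in turn: (i) every $k$-cut $C$ contains at least $k-1$ edges, and (ii) one can extract from $C$ a subset $C'$ of $k-1$ edges whose endpoints form a set of at least $k$ distinct vertices of $G$.

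For (i) I would invoke the standard bridge-counting picture. Deleting a single edge from any graph either leaves the number of connected components unchanged (when the edge lies on a cycle) or increases it by exactly one (when the edge is a bridge). Starting from the connected graph $G$ (one component) and arriving at $k$ components after deleting the edges of $C$, at least $k-1$ of those deletions must have acted as bridges, so $|C|\geqslant k-1$. Equivalently, any spanning tree $T$ of $G$ must lose at least $k-1$ of its edges to $C$, since $T\setminus(T\cap C)$ is a forest on $V$ and must have at least as many components as $G\setminus C$.

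For (ii) I would pass to the quotient multigraph $Q=G/\!\!\sim$, where $\sim$ is the equivalence relation on $V$ whose classes are the $k$ connected components of $G\setminus C$; by construction, $V(Q)$ has size $k$ and $E(Q)$ is in bijection with $C$, each cut edge being sent to the edge of $Q$ joining the two components containing its endpoints. Because $G$ is connected, $Q$ is connected, so it admits a spanning tree $T_Q$ with exactly $k-1$ edges visiting all $k$ vertices of $Q$. Pulling these $k-1$ edges back to $G$ yields the desired subset $C'\subseteq C$: for every one of the $k$ components of $G\setminus C$, at least one of its vertices appears as an endpoint of some edge of $C'$, so the endpoint set of $C'$ contains at least one vertex from each component and therefore has cardinality at least $k$.

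I do not expect any substantive obstacle; the argument is essentially combinatorial bookkeeping. The two points that warrant a little care are first, that $Q$ must genuinely be treated as a multigraph so that the correspondence between $C$ and $E(Q)$ remains a bijection when several cut edges connect the same pair of components (this ensures the spanning tree $T_Q$ really lifts to $k-1$ distinct edges of $C$); and second, that the conclusion is phrased as a statement about a chosen $(k-1)$-subset $C'$ of $C$ rather than about all of $C$, since a generic $k$-cut may contain strictly more than the minimum $k-1$ edges and one should not claim that every such edge contributes a ``new'' vertex.
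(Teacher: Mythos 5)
Your proof is correct and rests on the same core idea as the paper's: contract each of the $k$ components of $G\setminus C$ to a single vertex, observe that the resulting quotient ("representative") graph is connected on $k$ vertices and hence has at least $k-1$ edges, which lift to $k-1$ distinct cut edges touching all $k$ components and therefore at least $k$ distinct vertices. Your additional bridge-counting argument for $|C|\geqslant k-1$ and the explicit multigraph/spanning-tree bookkeeping are harmless refinements of what the paper states more tersely.
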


\begin{proof}[Proof of Theorem~\ref{theorem:kSepBound} \& Lemma~\ref{lemma:FixedkSepBound}]
    Since $\mathcal{W}^\gamma_G(\rho)$ is a convex function of $\rho$, we only need to prove that Eq.\;\eqref{ineq:kSepCriteria} is satisfied by all $k$-separable pure states $\rho = \ketbra{\psi}{\psi} = \bigotimes_{j=1}^k\ketbra{\psi\suptiny{1}{-2}{(j)}}{\psi\suptiny{1}{-2}{(j)}}$. We consider a fixed partition of~$n$ qubits into $k\geqslant 2$ groups such that the $j$-th group of qubits, denoted by the labeling set $P_j$, corresponds to the $j$-th tensor factor of $\ket{\psi} = \bigotimes_{j=1}^k\ket{\psi\suptiny{1}{-2}{(j)}}$. By matching the labels of the~$n$ qubits and the~$n$ vertices of the graph $G$, we consider the same partition to the graph $G$ such that $\bigcup_{j=1}^k P_j = V$ where $k\geqslant2$ and $P_i\cap P_j=\varnothing\;\forall\;i\neq j$. 
    
    Let us define $\overline{E}\suptiny{1}{-2}{(k)}\coloneqq \{(a,b)\in E| a\in P_i, b\in P_j \text{ and }i<j\}$.
    In graph theory language, $\overline{E}\suptiny{1}{-2}{(k)}$ contains the edges removed by a $k$-cut of the full graph $G$. The set of vertices involved in the edges in $\overline{E}\suptiny{1}{-2}{(k)}$ is $\overline{V}\suptiny{1}{-2}{(k)}$. They together define the subgraph $\overline{G}\suptiny{1}{-2}{(k)}=(\overline{V}\suptiny{1}{-2}{(k)},\overline{E}\suptiny{1}{-2}{(k)})$ of $G$.

    We first prove the bound in Eq.\;\eqref{ineq:kSepCriteria} holds for $\gamma=1$. Let $\overline{E}\suptiny{1}{-2}{(k)}_\text{match}\subset \overline{E}\suptiny{1}{-2}{(k)}$ be a maximal matching of $\overline{G}\suptiny{1}{-2}{(k)}$. For each edge $(a, b)\in \overline{E}\suptiny{1}{-2}{(k)}_\text{match}$, it must hold that vertices/qubits $a\in P_i$ and $b\in P_{j\neq i}$ belong to different groups in the $k$-partition. Since $\ket{\psi} =\otimes_{j=1}^k\ket{\psi\suptiny{1}{-2}{(j)}}$ is separable across $P_i$ and $P_{j\neq i}$, we can use Proposition \ref{prop:3PaulisBound} to show that the corresponding stabilizers of the graph state $\ket{G}$ satisfy
    \begin{subequations}
    \begin{align}
        &|\langle S_a\rangle|+|\langle S_{b}\rangle|+|\langle S_a S_{b}\rangle| \nonumber\\
        &=\; |\langle X_a Z_{N(a)\cap P_i}\rangle\langle Z_{b} Z_{N(a)\cap P_{\lnot\;\!i}\setminus\{b\}}\rangle| \nonumber\\
        &\ \ \ \ \ \ +\; |\langle Z_a Z_{N(b)\cap P_i\setminus\{a\}} \rangle\langle X_{b} Z_{N(b)\cap P_{\lnot\;\!i}}\rangle|\label{eq:prodsplit}\\
        &\ \ \ \ \ \ +\;|\langle Y_a Z_{N(a)\Delta N(b)\cap P_i\setminus\{a\}}\rangle\langle Y_{b} Z_{N(a)\Delta N(b)\cap P_{\lnot\;\!i}\setminus\{b\}} \rangle| \nonumber\\
        &\ \leqslant\; 1, \label{ineq:ACT}
    \end{align}
    \end{subequations}
    where $Z_A \coloneqq \prod_{i\in A} Z_i$ for $A\subseteq V$, $N(a)$ denotes the neighborhood of vertex $a$, $\Delta$ denotes the symmetric difference, and $P_{\lnot\;\!i} \coloneqq V\setminus P_i$. 
    
    We then consider the remaining vertices and edges in $\overline{G}\suptiny{1}{-2}{(k)}$ that are not in the maximal matching $\overline{E}\suptiny{1}{-2}{(k)}_\text{match}$. Let us define $\overline{V}\suptiny{1}{-2}{(k)}_\text{match}$ to be the set of vertices in $\overline{E}\suptiny{1}{-2}{(k)}_\text{match}$. Since every remaining edge in $\overline{E}\suptiny{1}{-2}{(k)}\!\setminus\!\overline{E}\suptiny{1}{-2}{(k)}_\text{match}$ must connect to one of the vertices in $\overline{E}\suptiny{1}{-2}{(k)}_\text{match}$ by the definition of a maximal matching, every remaining vertex $a\in\overline{V}\suptiny{1}{-2}{(k)}\!\setminus\!\overline{V}\suptiny{1}{-2}{(k)}_\text{match}$ must have at least one edge that connects itself to a vertex $b\in\overline{V}\suptiny{1}{-2}{(k)}_\text{match}$. Also, since every edge $(a, b)\in \overline{E}\suptiny{1}{-2}{(k)}$ corresponds to a cut in the $k$-partition, it must hold that vertices/qubits $a\in P_i$ and $b\in P_{j\neq i}$ belong to different groups in the partition. As the stabilizer term $|\langle S_b\rangle|$ of $b\in\overline{V}\suptiny{1}{-2}{(k)}_\text{match}$ is already paired up with other stabilizer terms in Eq.\;\eqref{eq:prodsplit}, the remaining unpaired stabilizer terms corresponding to the edge connecting $a\in P_i\cap(\overline{V}\suptiny{1}{-2}{(k)}\!\setminus\!\overline{V}\suptiny{1}{-2}{(k)}_\text{match})$ and $b\in P_{j\neq i}\cap\overline{V}\suptiny{1}{-2}{(k)}_\text{match}$ satisfy
    \begin{align}
        &|\langle S_a\rangle|+|\langle S_a S_b\rangle| 
        \,=\, |\langle X_a Z_{N(a)\cap P_i}\rangle\langle Z_{b} Z_{N(a)\cap P_{\lnot\;\!i}\setminus\{b\}}\rangle| \nonumber\\[1.5mm]
        &\ +\, |\langle Y_a Z_{N(a)\Delta N(b)\cap P_i\setminus\{a\}}\rangle\langle Y_{b} Z_{N(a)\Delta N(b)\cap P_{\lnot\;\!i}\setminus\{b\}} \rangle| \,\leqslant\, 1, \label{ineq:ACT2}
    \end{align}
    where we use the tensor product structure of $\ket{\psi}$ and Proposition \ref{prop:3PaulisBound} again.

    We can now show that the first inequality in Eq.\;\eqref{ineq:kSepCriteria} holds for $\gamma=1$ by first noticing that $|\langle P\rangle|\leqslant 1$ for all $n$-qubit Pauli operators $P\in\mathcal{P}_n$. Therefore, the maximum value of $\mathcal{W}^{\gamma=1}_G(\rho)$ in Eq.\;\eqref{eq:criteriaDef} is $n+|E|$. For $k$-separable $\rho = \bigotimes_{j=1}^k\ketbra{\psi\suptiny{1}{-2}{(j)}}{\psi\suptiny{1}{-2}{(j)}}$, we have shown that the sum of some terms in Eq.\;\eqref{eq:criteriaDef}\textemdash in a group of 2 or 3 terms\textemdash can be bounded by 1. Each of these bounded sums reduces the upper bound of $\mathcal{W}^{\gamma=1}_G(\rho)$ from $n+|E|$ by 1 or 2. Specifically, the sum of stabilizer terms corresponding to each edge in $\overline{E}\suptiny{1}{-2}{(k)}_\text{match}$ [see Eq.\;\eqref{ineq:ACT}] gives a reduction of 2, whereas the stabilizer sum corresponding to each vertex in $\overline{V}\suptiny{1}{-2}{(k)}\setminus\overline{V}\suptiny{1}{-2}{(k)}_\text{match}$ together with one of its connecting edges [see Eq.\;\eqref{ineq:ACT2}] reduces the upper bound by 1. Therefore, the total reduction in the upper bound of $\mathcal{W}^{\gamma=1}_G(\rho)$ is given by
    \begin{align}
        R^{\gamma=1}_k &=\, 2\,|\overline{E}\suptiny{1}{-2}{(k)}_\text{match}|+|\overline{V}\suptiny{1}{-2}{(k)}\setminus\overline{V}\suptiny{1}{-2}{(k)}_\text{match}| \nonumber\\[1.5mm]
        &=\,|\overline{V}\suptiny{1}{-2}{(k)}_\text{match}|+(|\overline{V}\suptiny{1}{-2}{(k)}|-|\overline{V}\suptiny{1}{-2}{(k)}_\text{match}|) = |\overline{V}\suptiny{1}{-2}{(k)}|,
    \end{align}
    where we used the fact that $|\overline{V}\suptiny{1}{-2}{(k)}_\text{match}|=2\,|\overline{E}\suptiny{1}{-2}{(k)}_\text{match}|$ since any matching consists of pairwise non-adjacent edges.
    Thus, for all states of this tensor product structure $\rho = \bigotimes_{j=1}^k\ketbra{\psi_i\suptiny{1}{-2}{(j)}}{\psi_i\suptiny{1}{-2}{(j)}}$ which corresponds to a particular $k$-partition of the graph $G$, it holds that 
    \begin{align}
        \mathcal{W}^{\gamma=1}_G(\rho)\leqslant n+|E|-R^{\gamma=1}_k = n+|E|-|\overline{V}\suptiny{1}{-2}{(k)}|.\label{ineq:gamma1bound}
    \end{align}

    Let us now move on to prove the bound in Eq.\;\eqref{ineq:kSepCriteria} holds for $\gamma=0$. We will upper bound $\mathcal{W}^{\gamma=0}_G(\rho)=\sum_{i\in V}|\langle S_i\rangle|$ by considering again the maximal matching of $\overline{G}\suptiny{1}{-2}{(k)}$. For every edge $(a,b)\in\overline{E}\suptiny{1}{-2}{(k)}_\text{match}$, it holds that vertices/qubits $a\in P_i$ and $b\in P_{j\neq i}$. We again consider pure states $\rho = \bigotimes_{j=1}^k\ketbra{\psi\suptiny{1}{-2}{(j)}}{\psi\suptiny{1}{-2}{(j)}}$ with the tensor product structure following the same $k$-partition as before. By Proposition \ref{prop:3PaulisBound},  
    \begin{align}
        &|\langle S_a\rangle|+|\langle S_b\rangle| =\; |\langle X_a Z_{N(a)\cap P_i}\rangle\langle Z_{b} Z_{N(a)\cap P_{\lnot\;\!i}\setminus\{b\}}\rangle| \nonumber\\[1.5mm]
        &\qquad \quad \ \ \ \ +\; |\langle Z_a Z_{N(b)\cap P_i\setminus\{a\}} \rangle\langle X_{b} Z_{N(b)\cap P_{\lnot\;\!i}}\rangle| \leqslant 1. 
        \label{ineq:ACT3}
    \end{align}
    Since $|\langle S_i\rangle|\leqslant 1\;\forall\;i$, the maximum value $\mathcal{W}^{\gamma=0}_G(\rho)$ can take is $n$, while each pair of sums in Eq.\;\eqref{ineq:ACT3} corresponding to an edge in $\overline{E}\suptiny{1}{-2}{(k)}_\text{match}$ reduces the upper bound by 1. The latter contributes to a reduction of $R^{\gamma=0}_k = |\overline{E}\suptiny{1}{-2}{(k)}_\text{match}|$ in total. To maximize the reduction, we can choose the maximum-cardinality matching $\overline{E}\suptiny{1}{-2}{(k)}_\text{mcm}$ as $\overline{E}\suptiny{1}{-2}{(k)}_\text{match}$.
    Therefore, for all states of this tensor product structure $\rho = \bigotimes_{j=1}^k\ketbra{\psi_i\suptiny{1}{-2}{(j)}}{\psi_i\suptiny{1}{-2}{(j)}}$ corresponding to a particular $k$-partition of the graph $G$, it holds that 
    \begin{align}
        \mathcal{W}^{\gamma=0}_G(\rho)\leqslant n-R^{\gamma=0}_k = n-|\overline{E}\suptiny{1}{-2}{(k)}_\text{mcm}|.\label{ineq:gamma0bound}
    \end{align}

    For $0\leqslant\gamma\leqslant1$, it is easy to see that
    \begin{align}
        \mathcal{W}^{\gamma}_G(\rho) = \gamma \mathcal{W}^{\gamma=1}_G(\rho) + (1-\gamma)\mathcal{W}^{\gamma=0}_G(\rho),
    \end{align}
    which is upper bounded by the linear combination of the bounds from Eqs.\;\eqref{ineq:gamma1bound} and \eqref{ineq:gamma0bound}, giving us
    \begin{align}
        \mathcal{W}^{\gamma}_G(\rho) \leqslant n + \gamma (|E|- |\overline{V}\suptiny{1}{-2}{(k)}|) - (1-\gamma)|\overline{E}\suptiny{1}{-2}{(k)}_\text{mcm}|.\label{ineq:allGammaBound}
    \end{align}
    Due to the convexity of $\mathcal{W}^{\gamma}_G(\rho)$ in $\rho$, the bound in Eq.\;\eqref{ineq:allGammaBound} holds 
    for all $k$-separable states of this specific tensor product structure $\sigma_k = \sum_i p_i\bigotimes_{j=1}^k\ketbra{\psi_i\suptiny{1}{-2}{(j)}}{\psi_i\suptiny{1}{-2}{(j)}}$. This completes the proof of Lemma~\ref{lemma:FixedkSepBound}.
    
    In order to have a valid upper bound for all $k$-separable states $\rho_k$, we must minimize the upper bound over all $k$-partitions of $G$, resulting in the first inequality of Eq.\;\eqref{ineq:kSepCriteria}:
    \begin{equation}
        \mathcal{W}^\gamma_G(\rho) \leqslant n + \gamma |E|-\!\!\!\min_{\text{all }k\text{-cuts}}\!\left( \gamma |\overline{V}\suptiny{1}{-2}{(k)}|+(1-\gamma)|\overline{E}\suptiny{1}{-2}{(k)}_\text{mcm}|\right). 
    \end{equation}

    Finally, it remains to prove the second inequality in Eq.\;\eqref{ineq:kSepCriteria}. First, by Proposition~\ref{obs:kPartition}, the subgraph $\overline{G}\suptiny{1}{-2}{(k)}=(\overline{V}\suptiny{1}{-2}{(k)},\overline{E}\suptiny{1}{-2}{(k)})$ corresponding to any $k$-cut/partition must have at least one vertex in each partition that is connected to a vertex in another partition. Thus, we have $|\overline{V}\suptiny{1}{-2}{(k)}|\geqslant k$ for all $k$-cuts with $k\geqslant 2$. Next, since every connected component of a $k$-cut subgraph $\overline{G}\suptiny{1}{-2}{(k)}$ has at least two vertices, $\overline{G}\suptiny{1}{-2}{(k)}$ must have a matching with at least one edge, implying that $|\overline{E}\suptiny{1}{-2}{(k)}_\text{mcm}|\geqslant 1$. Using these two inequalities, we obtain the second inequality of Eq.\;\eqref{ineq:kSepCriteria}.
\end{proof}


\subsection{Details of the SDP for incomplete measurements}\label{sec:method_SDPincomplete}
\vspace*{-1mm}

{\noindent}As mentioned in Sec.\;\ref{sec:SDPincomplete}, we use SDP to lower bound the experimentally inaccessible terms in Eq.\;\eqref{eq:criteriaDef}, constrained by the expectation values of measurable correlators.
The general strategy is to solve the optimization problem:
\begin{subequations}
\begin{align}
    &\alpha \coloneqq \min_\rho \sum_{i=1}^N|\trace(A_i \rho)|\label{eq:primalObjFunc}\\
    &\text{subject to}\;\; |\trace(B_j \rho) - b_j| \leqslant \varepsilon_j\;\;\forall\;j\in[J]\;\!,\label{eq:SDPprimalConstraintUnc1}\\
    &\text{\hspace{44pt}}\;\; \trace(\rho) = 1,\; \rho \geqslant 0,\label{eq:OriginalOptProblem}
\end{align}
\end{subequations}
where $A_i$ denote the inaccessible Hermitian operators, $B_j$ denote the measurable Hermitian operators, $b_j$ denote the observed expectation values of $B_j$, and $\varepsilon_j$ denote the uncertainty in measuring $B_j$. Since both the objective function and some of the constraints are nonlinear, we must first linearize the above problem before we can apply standard SDP techniques. A brief review of SDP can be found in Appendix~\ref{app:SDP}. 

The linearized version is given by the following (primal) SDP problem:
\begin{subequations}
\begin{align}
    &\alpha \coloneqq \min_{\tilde{X}} \trace[(\identity_N \oplus \mathbf{0}_{d})\tilde{X}]\\
    &\text{subject to}\;\; \trace[(-\ketbra{i}{i} \oplus A_i)\tilde{X}] \leqslant 0, \label{ineq:primalConstraint1}\\
    &\text{\hspace{44pt}}\;\; \trace[-(\ketbra{i}{i} \oplus A_i)\tilde{X}] \leqslant 0,\\
    &\text{\hspace{44pt}}\;\; \trace[(\mathbf{0}_{N} \oplus B_j)\tilde{X}] \leqslant b_j+\varepsilon_j\;\!,\label{eq:SDPprimalConstraintUnc2}\\
    &\text{\hspace{44pt}}\;\; \trace[(\mathbf{0}_{N} \oplus -B_j)\tilde{X}] \leqslant -b_j+\varepsilon_j\;\!, \label{ineq:primalConstraint4}\\
    &\text{\hspace{44pt}}\;\; \trace[(\mathbf{0}_{N} \oplus \identity_{d})\tilde{X}] = 1,\\
    &\text{\hspace{44pt}}\;\; \tilde{X} \geqslant 0,
\end{align}
\end{subequations}
where $\mathbf{0}_{d}$ denotes a $d\times d$ zero matrix, $\tilde{X}$ is a $(N+d)$-dimensional positive semi-definite matrix, and $\ketbra{i}{i}$ denotes an $N\times N$ matrix with the $i$-th diagonal entry being 1 and all the remaining entries being 0. The constraints in Eqs.\;\eqref{ineq:primalConstraint1}\textendash\eqref{ineq:primalConstraint4} holds for all $i\in[N]$ and $j\in[J]$. The corresponding dual SDP problem can be written as:
\begin{subequations}
\begin{align}
    &\beta \coloneqq \max_{z,\vec{y}}\; z + \sum_{j=1}^J (b_j+\varepsilon_j)y_{2j-1} + (-b_j+\varepsilon_j)y_{2j}\label{eq:dualSDPobj}\\[-5pt]
    &\text{subject to}\;\; z\identity_d + \sum_{j=1}^J (y_{2j-1} - y_{2j})B_j \nonumber\\[-5pt]
    &\text{\hspace{42pt}}\;\; + \sum_{i=1}^N (y_{2(J+i)-1} - y_{2(J+i)})A_i \leqslant \mathbf{0}_{d}\;\!,\\
    &\text{\hspace{44pt}}\;\; y_{2(J+i)-1} + y_{2(J+i)} \geqslant -1\;\;\forall\;i\in[N]\;\!,\\
    &\text{\hspace{44pt}}\;\; \vec{y} \leqslant \vec{0}_{2(N+J)},\; z\in\mathbb{R},\label{eq:dualSDPconstraintMainLast}
\end{align}
\end{subequations}
where $\vec{0}_{d}$ denotes a $d$-dimensional zero vector.

Due to numerical imprecision, the solution for the original optimization problem or the primal SDP problem may be suboptimal, which can lead to an overestimation of the sum of inaccessible terms in Eq.\;\eqref{eq:criteriaDef}, potentially certifying more entanglement than is actually present. The advantage of lower bounding the sum of inaccessible terms by solving the dual SDP problem is that the dual optimum $\beta$ must be less than or equal to the primal optimum $\alpha$ by \textit{weak duality}\;\cite{Watrous2018}. Hence, any suboptimal solution to the dual problem obtained from numerical optimizations must be a valid lower bound to the true minimum of the primal problem $\alpha$. This ensures that we do not overestimate the entanglement when using the numerical solution to the dual problem from numerical solvers as a lower bound for the minimum value of the sum of inaccessible stabilizer terms compatible with all available measurements.

In addition, since there exist vectors $\vec{y}$ and real numbers $z$ that satisfy all the feasibility conditions of the dual problem with strict inequalities (e.g., $\vec{y}\oplus z<\vec{0}_{2(N+J)+1}$ with $y_{2\kappa-1}=y_{2\kappa}\;\forall\;\kappa\in[N+J]$ and $y_{2(J+i)-1} + y_{2(J+i)} > -1\;\forall\;i\in[N]$), we apply Slater's theorem (see Lemma \ref{lemma:Slater} in Appendix\;\ref{app:SDP}) to obtain the following observation. 
\begin{Observations}{}{}
    \textit{Strong duality} holds for the above SDP problems (i.e., $\alpha=\beta$).
\end{Observations}
\noindent This observation guarantees that the dual optimal solution $\beta$ is equal to the primal optimal solution $\alpha$, meaning that the solution we get for the dual SDP problem from numerical optimization will be a tight lower bound (as tight as numerical optimization can achieve) of the true minimum of the optimization problem in Eqs.\;\eqref{eq:primalObjFunc}\textendash\eqref{eq:OriginalOptProblem}.

\noindent\textbf{Acknowledgements}.\ We thank Yuri Minoguchi and Angelika Wiegele for insightful discussions. We also thank Yifan Tang and Zhenhuan Liu for helpful discussions regarding Ref.\;\cite{LiuTangDaiLiuChenMa2022}. X.D. and K.R. acknowledge support from ETH Zurich and are grateful for the discussions with members of the Quantum Device Lab from ETH Zurich. This research was funded in whole or in part by the Austrian Science Fund (FWF) [\href{https://doi.org/10.55776/P36478}{10.55776/P36478}].
For open access purposes, the author has applied a CC BY public copyright license to any author-accepted manuscript version arising from this submission. 
We further acknowledge support from the Austrian Federal Ministry of Education, Science and Research via the Austrian Research Promotion Agency (FFG) through the flagship project FO999897481 (HPQC), the project FO999914030 (MUSIQ), and the project FO999921407 (HDcode) funded by the European Union{\textemdash}NextGenerationEU, from the European Research Council (Consolidator grant `Cocoquest' 101043705), and the Horizon-Europe research and innovation programme under grant agreement No 101070168 (HyperSpace). Additional support is acknowledged from the Canada First Research Excellence Fund.\\[-2mm]

\noindent\textbf{Author Contributions}.\
N.K.H.L., X.D., M.H., and N.F. conceived the main ideas behind this work. N.K.H.L. discovered and derived all the theoretical results and applied the GME and $k$-inseparability criteria to the simulated data. X.D. and K.R. proposed the experimental setup and simulation conditions. M.H.M.-A. performed all the simulations and data post-processing. All authors discussed the results and contributed to the writing of the final manuscript.

\noindent\textbf{Competing interests}.\
The authors declare no competing interests.

\noindent\textbf{Correspondence} and requests for materials should be addressed to the corresponding author.

\noindent\textbf{Code availability}.\
The code that support the findings of this study are available from the corresponding author upon reasonable request.


\bibliographystyle{apsrev4-1fixed_with_article_titles_full_names_new}
\bibliography{Master_Bib_File}

\begin{thebibliography}{114}%
\makeatletter
\providecommand \@ifxundefined [1]{%
 \@ifx{#1\undefined}
}%
\providecommand \@ifnum [1]{%
 \ifnum #1\expandafter \@firstoftwo
 \else \expandafter \@secondoftwo
 \fi
}%
\providecommand \@ifx [1]{%
 \ifx #1\expandafter \@firstoftwo
 \else \expandafter \@secondoftwo
 \fi
}%
\providecommand \natexlab [1]{#1}%
\providecommand \enquote  [1]{#1}%
\providecommand \bibnamefont  [1]{#1}%
\providecommand \bibfnamefont [1]{#1}%
\providecommand \citenamefont [1]{#1}%
\providecommand \href@noop [0]{\@secondoftwo}%
\providecommand \href [0]{\begingroup \@sanitize@url \@href}%
\providecommand \@href[1]{\@@startlink{#1}\@@href}%
\providecommand \@@href[1]{\endgroup#1\@@endlink}%
\providecommand \@sanitize@url [0]{\catcode `\\12\catcode `\$12\catcode `\&12\catcode `\#12\catcode `\^12\catcode `\_12\catcode `\%12\relax}%
\providecommand \@@startlink[1]{}%
\providecommand \@@endlink[0]{}%
\providecommand \url  [0]{\begingroup\@sanitize@url \@url }%
\providecommand \@url [1]{\endgroup\@href {#1}{\urlprefix }}%
\providecommand \urlprefix  [0]{URL }%
\providecommand \Eprint [0]{\href }%
\providecommand \doibase [0]{https://doi.org/}%
\providecommand \selectlanguage [0]{\@gobble}%
\providecommand \bibinfo  [0]{\@secondoftwo}%
\providecommand \bibfield  [0]{\@secondoftwo}%
\providecommand \translation [1]{[#1]}%
\providecommand \BibitemOpen [0]{}%
\providecommand \bibitemStop [0]{}%
\providecommand \bibitemNoStop [0]{.\EOS\space}%
\providecommand \EOS [0]{\spacefactor3000\relax}%
\providecommand \BibitemShut  [1]{\csname bibitem#1\endcsname}%
\let\auto@bib@innerbib\@empty
\bibitem [{\citenamefont {Friis}\ \emph {et~al.}(2018)\citenamefont {Friis}, \citenamefont {Marty}, \citenamefont {Maier}, \citenamefont {Hempel}, \citenamefont {Holz{\"a}pfel}, \citenamefont {Jurcevic}, \citenamefont {Plenio}, \citenamefont {Huber}, \citenamefont {Roos}, \citenamefont {Blatt},\ and\ \citenamefont {Lanyon}}]{FriisMartyEtal2018}%
  \BibitemOpen
  \bibfield  {author} {\bibinfo {author} {\bibfnamefont {Nicolai}\ \bibnamefont {Friis}}, \bibinfo {author} {\bibfnamefont {Oliver}\ \bibnamefont {Marty}}, \bibinfo {author} {\bibfnamefont {Christine}\ \bibnamefont {Maier}}, \bibinfo {author} {\bibfnamefont {Cornelius}\ \bibnamefont {Hempel}}, \bibinfo {author} {\bibfnamefont {Milan}\ \bibnamefont {Holz{\"a}pfel}}, \bibinfo {author} {\bibfnamefont {Petar}\ \bibnamefont {Jurcevic}}, \bibinfo {author} {\bibfnamefont {Martin~B.}\ \bibnamefont {Plenio}}, \bibinfo {author} {\bibfnamefont {Marcus}\ \bibnamefont {Huber}}, \bibinfo {author} {\bibfnamefont {Christian}\ \bibnamefont {Roos}}, \bibinfo {author} {\bibfnamefont {Rainer}\ \bibnamefont {Blatt}}, \ and\ \bibinfo {author} {\bibfnamefont {Ben}\ \bibnamefont {Lanyon}},\ }\emph {\enquote {\bibinfo {title} {{Observation of Entangled States of a Fully Controlled 20-Qubit System}},}\ }\href {https://doi.org/10.1103/PhysRevX.8.021012} {\bibfield  {journal} {\bibinfo  {journal} {Phys. Rev. X}\ }\textbf {\bibinfo
  {volume} {8}},\ \bibinfo {pages} {021012} (\bibinfo {year} {2018})},\ \Eprint {http://arxiv.org/abs/1711.11092} {arXiv:1711.11092}\BibitemShut {NoStop}%
\bibitem [{\citenamefont {Cao}\ \emph {et~al.}(2023)\citenamefont {Cao}, \citenamefont {Wu}, \citenamefont {Chen}, \citenamefont {Gong}, \citenamefont {Wu}, \citenamefont {Ye}, \citenamefont {Zha}, \citenamefont {Qian}, \citenamefont {Ying}, \citenamefont {Guo}, \citenamefont {Zhu}, \citenamefont {Huang}, \citenamefont {Zhao}, \citenamefont {Li}, \citenamefont {Wang}, \citenamefont {Yu}, \citenamefont {Fan}, \citenamefont {Wu}, \citenamefont {Su}, \citenamefont {Deng}, \citenamefont {Rong}, \citenamefont {Li}, \citenamefont {Zhang}, \citenamefont {Chung}, \citenamefont {Liang}, \citenamefont {Lin}, \citenamefont {Xu}, \citenamefont {Sun}, \citenamefont {Guo}, \citenamefont {Li}, \citenamefont {Huo}, \citenamefont {Peng}, \citenamefont {Lu}, \citenamefont {Yuan}, \citenamefont {Zhu},\ and\ \citenamefont {Pan}}]{CaoEtAl2023}%
  \BibitemOpen
  \bibfield  {author} {\bibinfo {author} {\bibfnamefont {Sirui}\ \bibnamefont {Cao}}, \bibinfo {author} {\bibfnamefont {Bujiao}\ \bibnamefont {Wu}}, \bibinfo {author} {\bibfnamefont {Fusheng}\ \bibnamefont {Chen}}, \bibinfo {author} {\bibfnamefont {Ming}\ \bibnamefont {Gong}}, \bibinfo {author} {\bibfnamefont {Yulin}\ \bibnamefont {Wu}}, \bibinfo {author} {\bibfnamefont {Yangsen}\ \bibnamefont {Ye}}, \bibinfo {author} {\bibfnamefont {Chen}\ \bibnamefont {Zha}}, \bibinfo {author} {\bibfnamefont {Haoran}\ \bibnamefont {Qian}}, \bibinfo {author} {\bibfnamefont {Chong}\ \bibnamefont {Ying}}, \bibinfo {author} {\bibfnamefont {Shaojun}\ \bibnamefont {Guo}}, \bibinfo {author} {\bibfnamefont {Qingling}\ \bibnamefont {Zhu}}, \bibinfo {author} {\bibfnamefont {He-Liang}\ \bibnamefont {Huang}}, \bibinfo {author} {\bibfnamefont {Youwei}\ \bibnamefont {Zhao}}, \bibinfo {author} {\bibfnamefont {Shaowei}\ \bibnamefont {Li}}, \bibinfo {author} {\bibfnamefont {Shiyu}\ \bibnamefont {Wang}}, \bibinfo {author} {\bibfnamefont
  {Jiale}\ \bibnamefont {Yu}}, \bibinfo {author} {\bibfnamefont {Daojin}\ \bibnamefont {Fan}}, \bibinfo {author} {\bibfnamefont {Dachao}\ \bibnamefont {Wu}}, \bibinfo {author} {\bibfnamefont {Hong}\ \bibnamefont {Su}}, \bibinfo {author} {\bibfnamefont {Hui}\ \bibnamefont {Deng}}, \bibinfo {author} {\bibfnamefont {Hao}\ \bibnamefont {Rong}}, \bibinfo {author} {\bibfnamefont {Yuan}\ \bibnamefont {Li}}, \bibinfo {author} {\bibfnamefont {Kaili}\ \bibnamefont {Zhang}}, \bibinfo {author} {\bibfnamefont {Tung-Hsun}\ \bibnamefont {Chung}}, \bibinfo {author} {\bibfnamefont {Futian}\ \bibnamefont {Liang}}, \bibinfo {author} {\bibfnamefont {Jin}\ \bibnamefont {Lin}}, \bibinfo {author} {\bibfnamefont {Yu}~\bibnamefont {Xu}}, \bibinfo {author} {\bibfnamefont {Lihua}\ \bibnamefont {Sun}}, \bibinfo {author} {\bibfnamefont {Cheng}\ \bibnamefont {Guo}}, \bibinfo {author} {\bibfnamefont {Na}~\bibnamefont {Li}}, \bibinfo {author} {\bibfnamefont {Yong-Heng}\ \bibnamefont {Huo}}, \bibinfo {author} {\bibfnamefont {Cheng-Zhi}\
  \bibnamefont {Peng}}, \bibinfo {author} {\bibfnamefont {Chao-Yang}\ \bibnamefont {Lu}}, \bibinfo {author} {\bibfnamefont {Xiao}\ \bibnamefont {Yuan}}, \bibinfo {author} {\bibfnamefont {Xiaobo}\ \bibnamefont {Zhu}}, \ and\ \bibinfo {author} {\bibfnamefont {Jian-Wei}\ \bibnamefont {Pan}},\ }\emph {\enquote {\bibinfo {title} {{Generation of genuine entanglement up to 51 superconducting qubits}},}\ }\href {https://doi.org/10.1038/s41586-023-06195-1} {\bibfield  {journal} {\bibinfo  {journal} {Nature}\ }\textbf {\bibinfo {volume} {619}},\ \bibinfo {pages} {738{\textendash}742} (\bibinfo {year} {2023})}\BibitemShut {NoStop}%
\bibitem [{\citenamefont {van~de Stolpe}\ \emph {et~al.}(2024)\citenamefont {van~de Stolpe}, \citenamefont {Kwiatkowski}, \citenamefont {Bradley}, \citenamefont {Randall}, \citenamefont {Abobeih}, \citenamefont {Breitweiser}, \citenamefont {Bassett}, \citenamefont {Markham}, \citenamefont {Twitchen},\ and\ \citenamefont {Taminiau}}]{vandeStolpeEtAl2024}%
  \BibitemOpen
  \bibfield  {author} {\bibinfo {author} {\bibfnamefont {Guido~L.}\ \bibnamefont {van~de Stolpe}}, \bibinfo {author} {\bibfnamefont {Damian~P.}\ \bibnamefont {Kwiatkowski}}, \bibinfo {author} {\bibfnamefont {Conor~E.}\ \bibnamefont {Bradley}}, \bibinfo {author} {\bibfnamefont {Joe}\ \bibnamefont {Randall}}, \bibinfo {author} {\bibfnamefont {Mohamed~H.}\ \bibnamefont {Abobeih}}, \bibinfo {author} {\bibfnamefont {S.~Alexander}\ \bibnamefont {Breitweiser}}, \bibinfo {author} {\bibfnamefont {Lee~C.}\ \bibnamefont {Bassett}}, \bibinfo {author} {\bibfnamefont {Matthew}\ \bibnamefont {Markham}}, \bibinfo {author} {\bibfnamefont {Daniel~J.}\ \bibnamefont {Twitchen}}, \ and\ \bibinfo {author} {\bibfnamefont {Tim~H.}\ \bibnamefont {Taminiau}},\ }\emph {\enquote {\bibinfo {title} {{Mapping a 50-spin-qubit network through correlated sensing}},}\ }\href {https://doi.org/10.1038/s41467-024-46075-4} {\bibfield  {journal} {\bibinfo  {journal} {Nat. Commun.}\ }\textbf {\bibinfo {volume} {15}},\ \bibinfo {pages} {2006}
  (\bibinfo {year} {2024})},\ \Eprint {http://arxiv.org/abs/2307.06939} {arXiv:2307.06939}\BibitemShut {NoStop}%
\bibitem [{\citenamefont {Bluvstein}\ \emph {et~al.}(2024)\citenamefont {Bluvstein}, \citenamefont {Evered}, \citenamefont {Geim}, \citenamefont {Li}, \citenamefont {Zhou}, \citenamefont {Manovitz}, \citenamefont {Ebadi}, \citenamefont {Cain}, \citenamefont {Kalinowski}, \citenamefont {Hangleiter}, \citenamefont {Bonilla~Ataides}, \citenamefont {Maskara}, \citenamefont {Cong}, \citenamefont {Gao}, \citenamefont {Sales~Rodriguez}, \citenamefont {Karolyshyn}, \citenamefont {Semeghini}, \citenamefont {Gullans}, \citenamefont {Greiner}, \citenamefont {Vuleti{\'c}},\ and\ \citenamefont {Lukin}}]{Bluvstein-Lukin2024}%
  \BibitemOpen
  \bibfield  {author} {\bibinfo {author} {\bibfnamefont {Dolev}\ \bibnamefont {Bluvstein}}, \bibinfo {author} {\bibfnamefont {Simon~J.}\ \bibnamefont {Evered}}, \bibinfo {author} {\bibfnamefont {Alexandra~A.}\ \bibnamefont {Geim}}, \bibinfo {author} {\bibfnamefont {Sophie~H.}\ \bibnamefont {Li}}, \bibinfo {author} {\bibfnamefont {Hengyun}\ \bibnamefont {Zhou}}, \bibinfo {author} {\bibfnamefont {Tom}\ \bibnamefont {Manovitz}}, \bibinfo {author} {\bibfnamefont {Sepehr}\ \bibnamefont {Ebadi}}, \bibinfo {author} {\bibfnamefont {Madelyn}\ \bibnamefont {Cain}}, \bibinfo {author} {\bibfnamefont {Marcin}\ \bibnamefont {Kalinowski}}, \bibinfo {author} {\bibfnamefont {Dominik}\ \bibnamefont {Hangleiter}}, \bibinfo {author} {\bibfnamefont {J.~Pablo}\ \bibnamefont {Bonilla~Ataides}}, \bibinfo {author} {\bibfnamefont {Nishad}\ \bibnamefont {Maskara}}, \bibinfo {author} {\bibfnamefont {Iris}\ \bibnamefont {Cong}}, \bibinfo {author} {\bibfnamefont {Xun}\ \bibnamefont {Gao}}, \bibinfo {author} {\bibfnamefont {Pedro}\
  \bibnamefont {Sales~Rodriguez}}, \bibinfo {author} {\bibfnamefont {Thomas}\ \bibnamefont {Karolyshyn}}, \bibinfo {author} {\bibfnamefont {Giulia}\ \bibnamefont {Semeghini}}, \bibinfo {author} {\bibfnamefont {Michael~J.}\ \bibnamefont {Gullans}}, \bibinfo {author} {\bibfnamefont {Markus}\ \bibnamefont {Greiner}}, \bibinfo {author} {\bibfnamefont {Vladan}\ \bibnamefont {Vuleti{\'c}}}, \ and\ \bibinfo {author} {\bibfnamefont {Mikhail~D.}\ \bibnamefont {Lukin}},\ }\emph {\enquote {\bibinfo {title} {{Logical quantum processor based on reconfigurable atom arrays}},}\ }\href {https://doi.org/10.1038/s41586-023-06927-3} {\bibfield  {journal} {\bibinfo  {journal} {Nature}\ }\textbf {\bibinfo {volume} {626}},\ \bibinfo {pages} {58{\textendash}65} (\bibinfo {year} {2024})},\ \Eprint {http://arxiv.org/abs/2312.03982} {arXiv:2312.03982}\BibitemShut {NoStop}%
\bibitem [{\citenamefont {Aaronson}(2018)}]{Aaronson2018}%
  \BibitemOpen
  \bibfield  {author} {\bibinfo {author} {\bibfnamefont {Scott}\ \bibnamefont {Aaronson}},\ }\emph {\enquote {\bibinfo {title} {Shadow tomography of quantum states},}\ }in\ \href {https://doi.org/10.1145/3188745.3188802} {\emph {\bibinfo {booktitle} {Proceedings of the 50th Annual ACM SIGACT Symposium on Theory of Computing}}},\ \bibinfo {series and number} {STOC 2018}\ (\bibinfo  {publisher} {Association for Computing Machinery},\ \bibinfo {address} {New York, NY, USA},\ \bibinfo {year} {2018})\ p.\ \bibinfo {pages} {325{\textendash}338},\ \Eprint {http://arxiv.org/abs/1711.01053} {arXiv:1711.01053}\BibitemShut {NoStop}%
\bibitem [{\citenamefont {Huang}\ \emph {et~al.}(2020)\citenamefont {Huang}, \citenamefont {Kueng},\ and\ \citenamefont {Preskill}}]{HuangKuengPreskill2020}%
  \BibitemOpen
  \bibfield  {author} {\bibinfo {author} {\bibfnamefont {Hsin-Yuan}\ \bibnamefont {Huang}}, \bibinfo {author} {\bibfnamefont {Richard}\ \bibnamefont {Kueng}}, \ and\ \bibinfo {author} {\bibfnamefont {John}\ \bibnamefont {Preskill}},\ }\emph {\enquote {\bibinfo {title} {Predicting many properties of a quantum system from very few measurements},}\ }\href {https://doi.org/10.1038/s41567-020-0932-7} {\bibfield  {journal} {\bibinfo  {journal} {Nat. Phys.}\ }\textbf {\bibinfo {volume} {16}},\ \bibinfo {pages} {1050{\textendash}1057} (\bibinfo {year} {2020})},\ \Eprint {http://arxiv.org/abs/2002.08953} {arXiv:2002.08953}\BibitemShut {NoStop}%
\bibitem [{\citenamefont {Friis}\ \emph {et~al.}(2019)\citenamefont {Friis}, \citenamefont {Vitagliano}, \citenamefont {Malik},\ and\ \citenamefont {Huber}}]{FriisVitaglianoMalikHuber2019}%
  \BibitemOpen
  \bibfield  {author} {\bibinfo {author} {\bibfnamefont {Nicolai}\ \bibnamefont {Friis}}, \bibinfo {author} {\bibfnamefont {Giuseppe}\ \bibnamefont {Vitagliano}}, \bibinfo {author} {\bibfnamefont {Mehul}\ \bibnamefont {Malik}}, \ and\ \bibinfo {author} {\bibfnamefont {Marcus}\ \bibnamefont {Huber}},\ }\emph {\enquote {\bibinfo {title} {{Entanglement Certification From Theory to Experiment}},}\ }\href {https://doi.org/10.1038/s42254-018-0003-5} {\bibfield  {journal} {\bibinfo  {journal} {Nat. Rev. Phys.}\ }\textbf {\bibinfo {volume} {1}},\ \bibinfo {pages} {72{\textendash}87} (\bibinfo {year} {2019})},\ \Eprint {http://arxiv.org/abs/1906.10929} {arXiv:1906.10929}\BibitemShut {NoStop}%
\bibitem [{\citenamefont {Bavaresco}\ \emph {et~al.}(2018)\citenamefont {Bavaresco}, \citenamefont {Herrera~Valencia}, \citenamefont {Kl{\"o}ckl}, \citenamefont {Pivoluska}, \citenamefont {Erker}, \citenamefont {Friis}, \citenamefont {Malik},\ and\ \citenamefont {Huber}}]{BavarescoEtAl2018}%
  \BibitemOpen
  \bibfield  {author} {\bibinfo {author} {\bibfnamefont {Jessica}\ \bibnamefont {Bavaresco}}, \bibinfo {author} {\bibfnamefont {Natalia}\ \bibnamefont {Herrera~Valencia}}, \bibinfo {author} {\bibfnamefont {Claude}\ \bibnamefont {Kl{\"o}ckl}}, \bibinfo {author} {\bibfnamefont {Matej}\ \bibnamefont {Pivoluska}}, \bibinfo {author} {\bibfnamefont {Paul}\ \bibnamefont {Erker}}, \bibinfo {author} {\bibfnamefont {Nicolai}\ \bibnamefont {Friis}}, \bibinfo {author} {\bibfnamefont {Mehul}\ \bibnamefont {Malik}}, \ and\ \bibinfo {author} {\bibfnamefont {Marcus}\ \bibnamefont {Huber}},\ }\emph {\enquote {\bibinfo {title} {Measurements in two bases are sufficient for certifying high-dimensional entanglement},}\ }\href {https://doi.org/10.1038/s41567-018-0203-z} {\bibfield  {journal} {\bibinfo  {journal} {Nat. Phys.}\ }\textbf {\bibinfo {volume} {14}},\ \bibinfo {pages} {1032{\textendash}1037} (\bibinfo {year} {2018})},\ \Eprint {http://arxiv.org/abs/1709.07344} {arXiv:1709.07344}\BibitemShut {NoStop}%
\bibitem [{\citenamefont {Herrera~Valencia}\ \emph {et~al.}(2020)\citenamefont {Herrera~Valencia}, \citenamefont {Srivastav}, \citenamefont {Pivoluska}, \citenamefont {Huber}, \citenamefont {Friis}, \citenamefont {McCutcheon},\ and\ \citenamefont {Malik}}]{Herrera-ValenciaSrivastavPivoluskaHuberFriisMcCutcheonMalik2020}%
  \BibitemOpen
  \bibfield  {author} {\bibinfo {author} {\bibfnamefont {Natalia}\ \bibnamefont {Herrera~Valencia}}, \bibinfo {author} {\bibfnamefont {Vatshal}\ \bibnamefont {Srivastav}}, \bibinfo {author} {\bibfnamefont {Matej}\ \bibnamefont {Pivoluska}}, \bibinfo {author} {\bibfnamefont {Marcus}\ \bibnamefont {Huber}}, \bibinfo {author} {\bibfnamefont {Nicolai}\ \bibnamefont {Friis}}, \bibinfo {author} {\bibfnamefont {Will}\ \bibnamefont {McCutcheon}}, \ and\ \bibinfo {author} {\bibfnamefont {Mehul}\ \bibnamefont {Malik}},\ }\emph {\enquote {\bibinfo {title} {{High-Dimensional Pixel Entanglement: Efficient Generation and Certification}},}\ }\href {https://doi.org/10.22331/q-2020-12-24-376} {\bibfield  {journal} {\bibinfo  {journal} {Quantum}\ }\textbf {\bibinfo {volume} {4}},\ \bibinfo {pages} {376} (\bibinfo {year} {2020})},\ \Eprint {http://arxiv.org/abs/2004.04994} {arXiv:2004.04994}\BibitemShut {NoStop}%
\bibitem [{\citenamefont {Ecker}\ \emph {et~al.}(2019)\citenamefont {Ecker}, \citenamefont {Bouchard}, \citenamefont {Bulla}, \citenamefont {Brandt}, \citenamefont {Kohout}, \citenamefont {Steinlechner}, \citenamefont {Fickler}, \citenamefont {Malik}, \citenamefont {Guryanova}, \citenamefont {Ursin},\ and\ \citenamefont {Huber}}]{Ecker-Huber2019}%
  \BibitemOpen
  \bibfield  {author} {\bibinfo {author} {\bibfnamefont {Sebastian}\ \bibnamefont {Ecker}}, \bibinfo {author} {\bibfnamefont {Fr{\'e}d{\'e}ric}\ \bibnamefont {Bouchard}}, \bibinfo {author} {\bibfnamefont {Lukas}\ \bibnamefont {Bulla}}, \bibinfo {author} {\bibfnamefont {Florian}\ \bibnamefont {Brandt}}, \bibinfo {author} {\bibfnamefont {Oskar}\ \bibnamefont {Kohout}}, \bibinfo {author} {\bibfnamefont {Fabian}\ \bibnamefont {Steinlechner}}, \bibinfo {author} {\bibfnamefont {Robert}\ \bibnamefont {Fickler}}, \bibinfo {author} {\bibfnamefont {Mehul}\ \bibnamefont {Malik}}, \bibinfo {author} {\bibfnamefont {Yelena}\ \bibnamefont {Guryanova}}, \bibinfo {author} {\bibfnamefont {Rupert}\ \bibnamefont {Ursin}}, \ and\ \bibinfo {author} {\bibfnamefont {Marcus}\ \bibnamefont {Huber}},\ }\emph {\enquote {\bibinfo {title} {{Overcoming Noise in Entanglement Distribution}},}\ }\href {https://doi.org/10.1103/PhysRevX.9.041042} {\bibfield  {journal} {\bibinfo  {journal} {Phys. Rev. X}\ }\textbf {\bibinfo {volume} {9}},\
  \bibinfo {pages} {041042} (\bibinfo {year} {2019})},\ \Eprint {http://arxiv.org/abs/1904.01552} {arXiv:1904.01552}\BibitemShut {NoStop}%
\bibitem [{\citenamefont {Fadel}\ \emph {et~al.}(2021)\citenamefont {Fadel}, \citenamefont {Usui}, \citenamefont {Huber}, \citenamefont {Friis},\ and\ \citenamefont {Vitagliano}}]{FadelUsuiHuberFriisVitagliano2021}%
  \BibitemOpen
  \bibfield  {author} {\bibinfo {author} {\bibfnamefont {Matteo}\ \bibnamefont {Fadel}}, \bibinfo {author} {\bibfnamefont {Ayaka}\ \bibnamefont {Usui}}, \bibinfo {author} {\bibfnamefont {Marcus}\ \bibnamefont {Huber}}, \bibinfo {author} {\bibfnamefont {Nicolai}\ \bibnamefont {Friis}}, \ and\ \bibinfo {author} {\bibfnamefont {Giuseppe}\ \bibnamefont {Vitagliano}},\ }\emph {\enquote {\bibinfo {title} {{Entanglement Quantification in Atomic Ensembles}},}\ }\href {https://doi.org/10.1103/PhysRevLett.127.010401} {\bibfield  {journal} {\bibinfo  {journal} {Phys. Rev. Lett.}\ }\textbf {\bibinfo {volume} {127}},\ \bibinfo {pages} {010401} (\bibinfo {year} {2021})},\ \Eprint {http://arxiv.org/abs/2103.15730} {arXiv:2103.15730}\BibitemShut {NoStop}%
\bibitem [{\citenamefont {Math{\'e}}\ \emph {et~al.}(2025)\citenamefont {Math{\'e}}, \citenamefont {Usui}, \citenamefont {G{\"u}hne},\ and\ \citenamefont {Vitagliano}}]{MatheUsuiGuehneVitagliano2025}%
  \BibitemOpen
  \bibfield  {author} {\bibinfo {author} {\bibfnamefont {Julia}\ \bibnamefont {Math{\'e}}}, \bibinfo {author} {\bibfnamefont {Ayaka}\ \bibnamefont {Usui}}, \bibinfo {author} {\bibfnamefont {Otfried}\ \bibnamefont {G{\"u}hne}}, \ and\ \bibinfo {author} {\bibfnamefont {Giuseppe}\ \bibnamefont {Vitagliano}},\ }\href@noop {} {\emph {\enquote {\bibinfo {title} {{Estimating entanglement monotones of non-pure spin-squeezed states}},}\ }}\Eprint {http://arxiv.org/abs/2504.07814} {arXiv:2504.07814} [quant-ph] (\bibinfo {year} {2025})\BibitemShut {NoStop}%
\bibitem [{\citenamefont {Rembold}\ \emph {et~al.}(2025)\citenamefont {Rembold}, \citenamefont {Beltr{\'a}n-Romero}, \citenamefont {Preimesberger}, \citenamefont {Bogdanov}, \citenamefont {Bicket}, \citenamefont {Friis}, \citenamefont {Agudelo}, \citenamefont {R{\"a}tzel},\ and\ \citenamefont {Haslinger}}]{RemboldBeltranRomeroPreimesbergerEtAl2025}%
  \BibitemOpen
  \bibfield  {author} {\bibinfo {author} {\bibfnamefont {Phila}\ \bibnamefont {Rembold}}, \bibinfo {author} {\bibfnamefont {Santiago}\ \bibnamefont {Beltr{\'a}n-Romero}}, \bibinfo {author} {\bibfnamefont {Alexander}\ \bibnamefont {Preimesberger}}, \bibinfo {author} {\bibfnamefont {Sergei}\ \bibnamefont {Bogdanov}}, \bibinfo {author} {\bibfnamefont {Isobel~C.}\ \bibnamefont {Bicket}}, \bibinfo {author} {\bibfnamefont {Nicolai}\ \bibnamefont {Friis}}, \bibinfo {author} {\bibfnamefont {Elizabeth}\ \bibnamefont {Agudelo}}, \bibinfo {author} {\bibfnamefont {Dennis}\ \bibnamefont {R{\"a}tzel}}, \ and\ \bibinfo {author} {\bibfnamefont {Philipp}\ \bibnamefont {Haslinger}},\ }\emph {\enquote {\bibinfo {title} {{State-agnostic approach to certifying electron–photon entanglement in electron microscopy}},}\ }\href {https://doi.org/10.1088/2058-9565/adf004} {\bibfield  {journal} {\bibinfo  {journal} {Quantum Science and Technology}\ }\textbf {\bibinfo {volume} {10}},\ \bibinfo {pages} {045003} (\bibinfo {year}
  {2025})},\ \Eprint {http://arxiv.org/abs/2502.19536} {arXiv:2502.19536} [quant-ph]\BibitemShut {NoStop}%
\bibitem [{\citenamefont {Preimesberger}\ \emph {et~al.}(2025)\citenamefont {Preimesberger}, \citenamefont {Bogdanov}, \citenamefont {Bicket}, \citenamefont {Rembold},\ and\ \citenamefont {Haslinger}}]{PreimesbergerBogdanovBicketRemboldHaslinger2025}%
  \BibitemOpen
  \bibfield  {author} {\bibinfo {author} {\bibfnamefont {Alexander}\ \bibnamefont {Preimesberger}}, \bibinfo {author} {\bibfnamefont {Sergei}\ \bibnamefont {Bogdanov}}, \bibinfo {author} {\bibfnamefont {Isobel~C.}\ \bibnamefont {Bicket}}, \bibinfo {author} {\bibfnamefont {Phila}\ \bibnamefont {Rembold}}, \ and\ \bibinfo {author} {\bibfnamefont {Philipp}\ \bibnamefont {Haslinger}},\ }\href@noop {} {\emph {\enquote {\bibinfo {title} {{Experimental Verification of Electron-Photon Entanglement}},}\ }}\Eprint {http://arxiv.org/abs/2504.13163} {arXiv:2504.13163} [quant-ph] (\bibinfo {year} {2025})\BibitemShut {NoStop}%
\bibitem [{\citenamefont {Dowling}\ and\ \citenamefont {Milburn}(2003)}]{DowlingMilburn2003}%
  \BibitemOpen
  \bibfield  {author} {\bibinfo {author} {\bibfnamefont {Jonathan~P.}\ \bibnamefont {Dowling}}\ and\ \bibinfo {author} {\bibfnamefont {Gerard~J.}\ \bibnamefont {Milburn}},\ }\emph {\enquote {\bibinfo {title} {Quantum technology: the second quantum revolution},}\ }\href {https://doi.org/10.1098/rsta.2003.1227} {\bibfield  {journal} {\bibinfo  {journal} {Phil. Trans. R. Soc. A}\ }\textbf {\bibinfo {volume} {361}},\ \bibinfo {pages} {1655{\textendash}1674} (\bibinfo {year} {2003})},\ \Eprint {http://arxiv.org/abs/quant-ph/0206091} {arXiv:quant-ph/0206091}\BibitemShut {NoStop}%
\bibitem [{\citenamefont {Epping}\ \emph {et~al.}(2017)\citenamefont {Epping}, \citenamefont {Kampermann}, \citenamefont {Macchiavello},\ and\ \citenamefont {Bru{\ss}}}]{EppingKampermannMacchiavelloBruss2017}%
  \BibitemOpen
  \bibfield  {author} {\bibinfo {author} {\bibfnamefont {Michael}\ \bibnamefont {Epping}}, \bibinfo {author} {\bibfnamefont {Hermann}\ \bibnamefont {Kampermann}}, \bibinfo {author} {\bibfnamefont {Chiara}\ \bibnamefont {Macchiavello}}, \ and\ \bibinfo {author} {\bibfnamefont {Dagmar}\ \bibnamefont {Bru{\ss}}},\ }\emph {\enquote {\bibinfo {title} {Multi-partite entanglement can speed up quantum key distribution in networks},}\ }\href {https://doi.org/10.1088/1367-2630/aa8487} {\bibfield  {journal} {\bibinfo  {journal} {New J. Phys.}\ }\textbf {\bibinfo {volume} {19}},\ \bibinfo {pages} {093012} (\bibinfo {year} {2017})},\ \Eprint {http://arxiv.org/abs/1612.05585} {arXiv:1612.05585}\BibitemShut {NoStop}%
\bibitem [{\citenamefont {Pivoluska}\ \emph {et~al.}(2018)\citenamefont {Pivoluska}, \citenamefont {Huber},\ and\ \citenamefont {Malik}}]{PivoluskaHuberMalik2018}%
  \BibitemOpen
  \bibfield  {author} {\bibinfo {author} {\bibfnamefont {Matej}\ \bibnamefont {Pivoluska}}, \bibinfo {author} {\bibfnamefont {Marcus}\ \bibnamefont {Huber}}, \ and\ \bibinfo {author} {\bibfnamefont {Mehul}\ \bibnamefont {Malik}},\ }\emph {\enquote {\bibinfo {title} {Layered quantum key distribution},}\ }\href {https://doi.org/10.1103/PhysRevA.97.032312} {\bibfield  {journal} {\bibinfo  {journal} {Phys. Rev. A}\ }\textbf {\bibinfo {volume} {97}},\ \bibinfo {pages} {032312} (\bibinfo {year} {2018})},\ \Eprint {http://arxiv.org/abs/1709.00377} {arXiv:1709.00377}\BibitemShut {NoStop}%
\bibitem [{\citenamefont {Ribeiro}\ \emph {et~al.}(2018)\citenamefont {Ribeiro}, \citenamefont {Murta},\ and\ \citenamefont {Wehner}}]{RibeiroMurtaWehner2018}%
  \BibitemOpen
  \bibfield  {author} {\bibinfo {author} {\bibfnamefont {J{\'e}r{\'e}my}\ \bibnamefont {Ribeiro}}, \bibinfo {author} {\bibfnamefont {Gl{\'a}ucia}\ \bibnamefont {Murta}}, \ and\ \bibinfo {author} {\bibfnamefont {Stephanie}\ \bibnamefont {Wehner}},\ }\emph {\enquote {\bibinfo {title} {Fully device-independent conference key agreement},}\ }\href {https://doi.org/10.1103/PhysRevA.97.022307} {\bibfield  {journal} {\bibinfo  {journal} {Phys. Rev. A}\ }\textbf {\bibinfo {volume} {97}},\ \bibinfo {pages} {022307} (\bibinfo {year} {2018})},\ \Eprint {http://arxiv.org/abs/1708.00798} {arXiv:1708.00798}\BibitemShut {NoStop}%
\bibitem [{\citenamefont {B{\"a}uml}\ and\ \citenamefont {Azuma}(2017)}]{BaeumlAzuma2017}%
  \BibitemOpen
  \bibfield  {author} {\bibinfo {author} {\bibfnamefont {Stefan}\ \bibnamefont {B{\"a}uml}}\ and\ \bibinfo {author} {\bibfnamefont {Koji}\ \bibnamefont {Azuma}},\ }\emph {\enquote {\bibinfo {title} {Fundamental limitation on quantum broadcast networks},}\ }\href {https://doi.org/10.1088/2058-9565/aa6d3c} {\bibfield  {journal} {\bibinfo  {journal} {Quantum Sci. Technol.}\ }\textbf {\bibinfo {volume} {2}},\ \bibinfo {pages} {024004} (\bibinfo {year} {2017})},\ \Eprint {http://arxiv.org/abs/1609.03994} {arXiv:1609.03994}\BibitemShut {NoStop}%
\bibitem [{\citenamefont {Raussendorf}\ and\ \citenamefont {Briegel}(2001)}]{RaussendorfBriegel2001}%
  \BibitemOpen
  \bibfield  {author} {\bibinfo {author} {\bibfnamefont {Robert}\ \bibnamefont {Raussendorf}}\ and\ \bibinfo {author} {\bibfnamefont {Hans~J.}\ \bibnamefont {Briegel}},\ }\emph {\enquote {\bibinfo {title} {{A One-Way Quantum Computer}},}\ }\href {\doibase 10.1103/PhysRevLett.86.5188} {\bibfield  {journal} {\bibinfo  {journal} {Phys. Rev. Lett.}\ }\textbf {\bibinfo {volume} {86}},\ \bibinfo {pages} {5188{\textendash}5191} (\bibinfo {year} {2001})},\ \Eprint {http://arxiv.org/abs/quant-ph/0010033} {arXiv:quant-ph/0010033}\BibitemShut {NoStop}%
\bibitem [{\citenamefont {Briegel}\ and\ \citenamefont {Raussendorf}(2001)}]{BriegelRaussendorf2001}%
  \BibitemOpen
  \bibfield  {author} {\bibinfo {author} {\bibfnamefont {Hans~J.}\ \bibnamefont {Briegel}}\ and\ \bibinfo {author} {\bibfnamefont {Robert}\ \bibnamefont {Raussendorf}},\ }\emph {\enquote {\bibinfo {title} {{Persistent Entanglement in Arrays of Interacting Particles}},}\ }\href {\doibase 10.1103/PhysRevLett.86.910} {\bibfield  {journal} {\bibinfo  {journal} {Phys. Rev. Lett.}\ }\textbf {\bibinfo {volume} {86}},\ \bibinfo {pages} {910{\textendash}913} (\bibinfo {year} {2001})},\ \Eprint {http://arxiv.org/abs/quant-ph/0004051} {arXiv:quant-ph/0004051}\BibitemShut {NoStop}%
\bibitem [{\citenamefont {Scott}(2004)}]{Scott2004}%
  \BibitemOpen
  \bibfield  {author} {\bibinfo {author} {\bibfnamefont {Andrew~J.}\ \bibnamefont {Scott}},\ }\emph {\enquote {\bibinfo {title} {Multipartite entanglement, quantum-error-correcting codes, and entangling power of quantum evolutions},}\ }\href {https://doi.org/10.1103/PhysRevA.69.052330} {\bibfield  {journal} {\bibinfo  {journal} {Phys. Rev. A}\ }\textbf {\bibinfo {volume} {69}},\ \bibinfo {pages} {052330} (\bibinfo {year} {2004})},\ \Eprint {http://arxiv.org/abs/quant-ph/0310137} {arXiv:quant-ph/0310137}\BibitemShut {NoStop}%
\bibitem [{\citenamefont {T\'oth}(2012)}]{Toth2012}%
  \BibitemOpen
  \bibfield  {author} {\bibinfo {author} {\bibfnamefont {G\'eza}\ \bibnamefont {T\'oth}},\ }\emph {\enquote {\bibinfo {title} {Multipartite entanglement and high-precision metrology},}\ }\href {\doibase 10.1103/PhysRevA.85.022322} {\bibfield  {journal} {\bibinfo  {journal} {Phys. Rev. A}\ }\textbf {\bibinfo {volume} {85}},\ \bibinfo {pages} {022322} (\bibinfo {year} {2012})},\ \Eprint {http://arxiv.org/abs/1006.4368} {arXiv:1006.4368}\BibitemShut {NoStop}%
\bibitem [{\citenamefont {Bru{\ss}}\ and\ \citenamefont {Macchiavello}(2011)}]{BrussMacchiavello2011}%
  \BibitemOpen
  \bibfield  {author} {\bibinfo {author} {\bibfnamefont {Dagmar}\ \bibnamefont {Bru{\ss}}}\ and\ \bibinfo {author} {\bibfnamefont {Chiara}\ \bibnamefont {Macchiavello}},\ }\emph {\enquote {\bibinfo {title} {Multipartite entanglement in quantum algorithms},}\ }\href {\doibase 10.1103/PhysRevA.83.052313} {\bibfield  {journal} {\bibinfo  {journal} {Phys. Rev. A}\ }\textbf {\bibinfo {volume} {83}},\ \bibinfo {pages} {052313} (\bibinfo {year} {2011})},\ \Eprint {http://arxiv.org/abs/1007.4179} {arXiv:1007.4179}\BibitemShut {NoStop}%
\bibitem [{\citenamefont {Yamasaki}\ \emph {et~al.}(2018)\citenamefont {Yamasaki}, \citenamefont {Pirker}, \citenamefont {Murao}, \citenamefont {D{\"u}r},\ and\ \citenamefont {Kraus}}]{YamasakiPirkerMuraoDuerKraus2018}%
  \BibitemOpen
  \bibfield  {author} {\bibinfo {author} {\bibfnamefont {Hayata}\ \bibnamefont {Yamasaki}}, \bibinfo {author} {\bibfnamefont {Alexander}\ \bibnamefont {Pirker}}, \bibinfo {author} {\bibfnamefont {Mio}\ \bibnamefont {Murao}}, \bibinfo {author} {\bibfnamefont {Wolfgang}\ \bibnamefont {D{\"u}r}}, \ and\ \bibinfo {author} {\bibfnamefont {Barbara}\ \bibnamefont {Kraus}},\ }\emph {\enquote {\bibinfo {title} {Multipartite entanglement outperforming bipartite entanglement under limited quantum system sizes},}\ }\href {https://doi.org/10.1103/PhysRevA.98.052313} {\bibfield  {journal} {\bibinfo  {journal} {Phys. Rev. A}\ }\textbf {\bibinfo {volume} {98}},\ \bibinfo {pages} {052313} (\bibinfo {year} {2018})},\ \Eprint {http://arxiv.org/abs/1808.00005} {arXiv:1808.00005}\BibitemShut {NoStop}%
\bibitem [{\citenamefont {Das}\ \emph {et~al.}(2021)\citenamefont {Das}, \citenamefont {B\"auml}, \citenamefont {Winczewski},\ and\ \citenamefont {Horodecki}}]{DasBauemlWinczewskiHorodecki2021}%
  \BibitemOpen
  \bibfield  {author} {\bibinfo {author} {\bibfnamefont {Siddhartha}\ \bibnamefont {Das}}, \bibinfo {author} {\bibfnamefont {Stefan}\ \bibnamefont {B\"auml}}, \bibinfo {author} {\bibfnamefont {Marek}\ \bibnamefont {Winczewski}}, \ and\ \bibinfo {author} {\bibfnamefont {Karol}\ \bibnamefont {Horodecki}},\ }\emph {\enquote {\bibinfo {title} {{Universal Limitations on Quantum Key Distribution over a Network}},}\ }\href {https://doi.org/10.1103/PhysRevX.11.041016} {\bibfield  {journal} {\bibinfo  {journal} {Phys. Rev. X}\ }\textbf {\bibinfo {volume} {11}},\ \bibinfo {pages} {041016} (\bibinfo {year} {2021})},\ \Eprint {http://arxiv.org/abs/1912.03646} {arXiv:1912.03646}\BibitemShut {NoStop}%
\bibitem [{\citenamefont {Horodecki}\ \emph {et~al.}(2001)\citenamefont {Horodecki}, \citenamefont {Horodecki},\ and\ \citenamefont {Horodecki}}]{HorodeckiMPR2001}%
  \BibitemOpen
  \bibfield  {author} {\bibinfo {author} {\bibfnamefont {Micha{\l}}\ \bibnamefont {Horodecki}}, \bibinfo {author} {\bibfnamefont {Pawe{\l}}\ \bibnamefont {Horodecki}}, \ and\ \bibinfo {author} {\bibfnamefont {Ryszard}\ \bibnamefont {Horodecki}},\ }\emph {\enquote {\bibinfo {title} {{Separability of $n$-particle mixed states: necessary and sufficient conditions in terms of linear maps}},}\ }\href {https://doi.org/10.1016/S0375-9601(01)00142-6} {\bibfield  {journal} {\bibinfo  {journal} {Phys. Lett. A}\ }\textbf {\bibinfo {volume} {283}},\ \bibinfo {pages} {1{\textendash}7} (\bibinfo {year} {2001})},\ \Eprint {http://arxiv.org/abs/quant-ph/0006071} {arXiv:quant-ph/0006071}\BibitemShut {NoStop}%
\bibitem [{\citenamefont {D{\"u}r}\ and\ \citenamefont {Cirac}(2001)}]{DueCirac2001}%
  \BibitemOpen
  \bibfield  {author} {\bibinfo {author} {\bibfnamefont {Wolfgang}\ \bibnamefont {D{\"u}r}}\ and\ \bibinfo {author} {\bibfnamefont {Juan~Ignacio}\ \bibnamefont {Cirac}},\ }\emph {\enquote {\bibinfo {title} {{Multiparticle entanglement and its experimental detection}},}\ }\href {https://doi.org/10.1088/0305-4470/34/35/310} {\bibfield  {journal} {\bibinfo  {journal} {J. Phys. A: Math. Gen.}\ }\textbf {\bibinfo {volume} {34}},\ \bibinfo {pages} {6837{\textendash}6850} (\bibinfo {year} {2001})},\ \Eprint {http://arxiv.org/abs/quant-ph/0011025} {arXiv:quant-ph/0011025}\BibitemShut {NoStop}%
\bibitem [{\citenamefont {Terhal}(2002)}]{Terhal2002}%
  \BibitemOpen
  \bibfield  {author} {\bibinfo {author} {\bibfnamefont {Barbara~M.}\ \bibnamefont {Terhal}},\ }\emph {\enquote {\bibinfo {title} {{Detecting quantum entanglement}},}\ }\href {https://doi.org/10.1016/S0304-3975(02)00139-1} {\bibfield  {journal} {\bibinfo  {journal} {Theor. Comput. Sci.}\ }\textbf {\bibinfo {volume} {287}},\ \bibinfo {pages} {313{\textendash}335} (\bibinfo {year} {2002})},\ \Eprint {http://arxiv.org/abs/quant-ph/0101032} {arXiv:quant-ph/0101032}\BibitemShut {NoStop}%
\bibitem [{\citenamefont {Horodecki}\ \emph {et~al.}(2009)\citenamefont {Horodecki}, \citenamefont {Horodecki}, \citenamefont {Horodecki},\ and\ \citenamefont {Horodecki}}]{HorodeckiEntanglementReview2009}%
  \BibitemOpen
  \bibfield  {author} {\bibinfo {author} {\bibfnamefont {Ryszard}\ \bibnamefont {Horodecki}}, \bibinfo {author} {\bibfnamefont {Pawe{\l}}\ \bibnamefont {Horodecki}}, \bibinfo {author} {\bibfnamefont {Micha{\l}}\ \bibnamefont {Horodecki}}, \ and\ \bibinfo {author} {\bibfnamefont {Karol}\ \bibnamefont {Horodecki}},\ }\emph {\enquote {\bibinfo {title} {Quantum entanglement},}\ }\href {\doibase 10.1103/RevModPhys.81.865} {\bibfield  {journal} {\bibinfo  {journal} {Rev. Mod. Phys.}\ }\textbf {\bibinfo {volume} {81}},\ \bibinfo {pages} {865{\textendash}942} (\bibinfo {year} {2009})},\ \Eprint {http://arxiv.org/abs/quant-ph/0702225} {arXiv:quant-ph/0702225}\BibitemShut {NoStop}%
\bibitem [{\citenamefont {G{\"u}hne}\ and\ \citenamefont {T{\'o}th}(2009)}]{GuehneToth2009}%
  \BibitemOpen
  \bibfield  {author} {\bibinfo {author} {\bibfnamefont {Otfried}\ \bibnamefont {G{\"u}hne}}\ and\ \bibinfo {author} {\bibfnamefont {G{\'e}za}\ \bibnamefont {T{\'o}th}},\ }\emph {\enquote {\bibinfo {title} {Entanglement detection},}\ }\href {https://doi.org/10.1016/j.physrep.2009.02.004} {\bibfield  {journal} {\bibinfo  {journal} {Phys. Rep.}\ }\textbf {\bibinfo {volume} {474}},\ \bibinfo {pages} {1{\textendash}75} (\bibinfo {year} {2009})},\ \Eprint {http://arxiv.org/abs/0811.2803} {arXiv:0811.2803}\BibitemShut {NoStop}%
\bibitem [{\citenamefont {Steinlechner}\ \emph {et~al.}(2017)\citenamefont {Steinlechner}, \citenamefont {Ecker}, \citenamefont {Fink}, \citenamefont {Liu}, \citenamefont {Bavaresco}, \citenamefont {Huber}, \citenamefont {Scheidl},\ and\ \citenamefont {Ursin}}]{Steinlechner-Ursin2017}%
  \BibitemOpen
  \bibfield  {author} {\bibinfo {author} {\bibfnamefont {Fabian}\ \bibnamefont {Steinlechner}}, \bibinfo {author} {\bibfnamefont {Sebastian}\ \bibnamefont {Ecker}}, \bibinfo {author} {\bibfnamefont {Matthias}\ \bibnamefont {Fink}}, \bibinfo {author} {\bibfnamefont {Bo}~\bibnamefont {Liu}}, \bibinfo {author} {\bibfnamefont {Jessica}\ \bibnamefont {Bavaresco}}, \bibinfo {author} {\bibfnamefont {Marcus}\ \bibnamefont {Huber}}, \bibinfo {author} {\bibfnamefont {Thomas}\ \bibnamefont {Scheidl}}, \ and\ \bibinfo {author} {\bibfnamefont {Rupert}\ \bibnamefont {Ursin}},\ }\emph {\enquote {\bibinfo {title} {Distribution of high-dimensional entanglement via an intra-city free-space link},}\ }\href {https://doi.org/10.1038/ncomms15971} {\bibfield  {journal} {\bibinfo  {journal} {Nat. Commun.}\ }\textbf {\bibinfo {volume} {8}},\ \bibinfo {pages} {15971} (\bibinfo {year} {2017})},\ \Eprint {http://arxiv.org/abs/1612.00751} {arXiv:1612.00751}\BibitemShut {NoStop}%
\bibitem [{\citenamefont {Yokoyama}\ \emph {et~al.}(2013)\citenamefont {Yokoyama}, \citenamefont {Ukai}, \citenamefont {Armstrong}, \citenamefont {Sornphiphatphong}, \citenamefont {Kaji}, \citenamefont {Suzuki}, \citenamefont {Yoshikawa}, \citenamefont {Yonezawa}, \citenamefont {Menicucci},\ and\ \citenamefont {Furusawa}}]{YokoyamaEtAl2013}%
  \BibitemOpen
  \bibfield  {author} {\bibinfo {author} {\bibfnamefont {Shota}\ \bibnamefont {Yokoyama}}, \bibinfo {author} {\bibfnamefont {Ryuji}\ \bibnamefont {Ukai}}, \bibinfo {author} {\bibfnamefont {Seiji~C.}\ \bibnamefont {Armstrong}}, \bibinfo {author} {\bibfnamefont {Chanond}\ \bibnamefont {Sornphiphatphong}}, \bibinfo {author} {\bibfnamefont {Toshiyuki}\ \bibnamefont {Kaji}}, \bibinfo {author} {\bibfnamefont {Shigenari}\ \bibnamefont {Suzuki}}, \bibinfo {author} {\bibfnamefont {Jun-ichi}\ \bibnamefont {Yoshikawa}}, \bibinfo {author} {\bibfnamefont {Hidehiro}\ \bibnamefont {Yonezawa}}, \bibinfo {author} {\bibfnamefont {Nicolas~C.}\ \bibnamefont {Menicucci}}, \ and\ \bibinfo {author} {\bibfnamefont {Akira}\ \bibnamefont {Furusawa}},\ }\emph {\enquote {\bibinfo {title} {Ultra-large-scale continuous-variable cluster states multiplexed in the time domain},}\ }\href {https://doi.org/10.1038/nphoton.2013.287} {\bibfield  {journal} {\bibinfo  {journal} {Nat. Photonics}\ }\textbf {\bibinfo {volume} {7}},\ \bibinfo {pages}
  {982{\textendash}986} (\bibinfo {year} {2013})},\ \Eprint {http://arxiv.org/abs/1306.3366} {arXiv:1306.3366}\BibitemShut {NoStop}%
\bibitem [{\citenamefont {Asavanant}\ \emph {et~al.}(2019)\citenamefont {Asavanant}, \citenamefont {Shiozawa}, \citenamefont {Yokoyama}, \citenamefont {Charoensombutamon}, \citenamefont {Emura}, \citenamefont {Alexander}, \citenamefont {Takeda}, \citenamefont {Yoshikawa}, \citenamefont {Menicucci}, \citenamefont {Yonezawa},\ and\ \citenamefont {Furusawa}}]{AsavanantEtAl2019}%
  \BibitemOpen
  \bibfield  {author} {\bibinfo {author} {\bibfnamefont {Warit}\ \bibnamefont {Asavanant}}, \bibinfo {author} {\bibfnamefont {Yu}~\bibnamefont {Shiozawa}}, \bibinfo {author} {\bibfnamefont {Shota}\ \bibnamefont {Yokoyama}}, \bibinfo {author} {\bibfnamefont {Baramee}\ \bibnamefont {Charoensombutamon}}, \bibinfo {author} {\bibfnamefont {Hiroki}\ \bibnamefont {Emura}}, \bibinfo {author} {\bibfnamefont {Rafael~N.}\ \bibnamefont {Alexander}}, \bibinfo {author} {\bibfnamefont {Shuntaro}\ \bibnamefont {Takeda}}, \bibinfo {author} {\bibfnamefont {Jun-ichi}\ \bibnamefont {Yoshikawa}}, \bibinfo {author} {\bibfnamefont {Nicolas~C.}\ \bibnamefont {Menicucci}}, \bibinfo {author} {\bibfnamefont {Hidehiro}\ \bibnamefont {Yonezawa}}, \ and\ \bibinfo {author} {\bibfnamefont {Akira}\ \bibnamefont {Furusawa}},\ }\emph {\enquote {\bibinfo {title} {Generation of time-domain-multiplexed two-dimensional cluster state},}\ }\href {https://doi.org/10.1126/science.aay2645} {\bibfield  {journal} {\bibinfo  {journal} {Science}\ }\textbf
  {\bibinfo {volume} {366}},\ \bibinfo {pages} {373{\textendash}376} (\bibinfo {year} {2019})},\ \Eprint {http://arxiv.org/abs/1903.03918} {arXiv:1903.03918}\BibitemShut {NoStop}%
\bibitem [{\citenamefont {Du}\ \emph {et~al.}(2023)\citenamefont {Du}, \citenamefont {Wang}, \citenamefont {Liu}, \citenamefont {Yang},\ and\ \citenamefont {Zhang}}]{DuWangLiuYangZhang2023}%
  \BibitemOpen
  \bibfield  {author} {\bibinfo {author} {\bibfnamefont {Peilin}\ \bibnamefont {Du}}, \bibinfo {author} {\bibfnamefont {Yu}~\bibnamefont {Wang}}, \bibinfo {author} {\bibfnamefont {Kui}\ \bibnamefont {Liu}}, \bibinfo {author} {\bibfnamefont {Rongguo}\ \bibnamefont {Yang}}, \ and\ \bibinfo {author} {\bibfnamefont {Jing}\ \bibnamefont {Zhang}},\ }\emph {\enquote {\bibinfo {title} {{Generation of large-scale continuous-variable cluster states multiplexed both in time and frequency domains}},}\ }\href {https://doi.org/10.1364/OE.479420} {\bibfield  {journal} {\bibinfo  {journal} {Opt. Express}\ }\textbf {\bibinfo {volume} {31}},\ \bibinfo {pages} {7535\textendash7544} (\bibinfo {year} {2023})},\ \Eprint {http://arxiv.org/abs/2210.10261} {arXiv:2210.10261}\BibitemShut {NoStop}%
\bibitem [{\citenamefont {Besse}\ \emph {et~al.}(2020)\citenamefont {Besse}, \citenamefont {Reuer}, \citenamefont {Collodo}, \citenamefont {Wulff}, \citenamefont {Wernli}, \citenamefont {Copetudo}, \citenamefont {Malz}, \citenamefont {Magnard}, \citenamefont {Akin}, \citenamefont {Gabureac}, \citenamefont {Norris}, \citenamefont {Cirac}, \citenamefont {Wallraff},\ and\ \citenamefont {Eichler}}]{besse_2020_realizingdeterministicsource}%
  \BibitemOpen
  \bibfield  {author} {\bibinfo {author} {\bibfnamefont {Jean-Claude}\ \bibnamefont {Besse}}, \bibinfo {author} {\bibfnamefont {Kevin}\ \bibnamefont {Reuer}}, \bibinfo {author} {\bibfnamefont {Michele~C.}\ \bibnamefont {Collodo}}, \bibinfo {author} {\bibfnamefont {Arne}\ \bibnamefont {Wulff}}, \bibinfo {author} {\bibfnamefont {Lucien}\ \bibnamefont {Wernli}}, \bibinfo {author} {\bibfnamefont {Adrian}\ \bibnamefont {Copetudo}}, \bibinfo {author} {\bibfnamefont {Daniel}\ \bibnamefont {Malz}}, \bibinfo {author} {\bibfnamefont {Paul}\ \bibnamefont {Magnard}}, \bibinfo {author} {\bibfnamefont {Abdulkadir}\ \bibnamefont {Akin}}, \bibinfo {author} {\bibfnamefont {Mihai}\ \bibnamefont {Gabureac}}, \bibinfo {author} {\bibfnamefont {Graham~J.}\ \bibnamefont {Norris}}, \bibinfo {author} {\bibfnamefont {J.~Ignacio}\ \bibnamefont {Cirac}}, \bibinfo {author} {\bibfnamefont {Andreas}\ \bibnamefont {Wallraff}}, \ and\ \bibinfo {author} {\bibfnamefont {Christopher}\ \bibnamefont {Eichler}},\ }\emph {\enquote {\bibinfo {title}
  {Realizing a deterministic source of multipartite-entangled photonic qubits},}\ }\href {https://doi.org/10.1038/s41467-020-18635-x} {\bibfield  {journal} {\bibinfo  {journal} {Nat. Commun.}\ }\textbf {\bibinfo {volume} {11}},\ \bibinfo {pages} {4877} (\bibinfo {year} {2020})},\ \Eprint {http://arxiv.org/abs/2005.07060} {arXiv:2005.07060}\BibitemShut {NoStop}%
\bibitem [{\citenamefont {Ferreira}\ \emph {et~al.}(2024)\citenamefont {Ferreira}, \citenamefont {Kim}, \citenamefont {Butler}, \citenamefont {Pichler},\ and\ \citenamefont {Painter}}]{ferreira_2024_deterministicgenerationmultidimensional}%
  \BibitemOpen
  \bibfield  {author} {\bibinfo {author} {\bibfnamefont {Vinicius~S.}\ \bibnamefont {Ferreira}}, \bibinfo {author} {\bibfnamefont {Gihwan}\ \bibnamefont {Kim}}, \bibinfo {author} {\bibfnamefont {Andreas}\ \bibnamefont {Butler}}, \bibinfo {author} {\bibfnamefont {Hannes}\ \bibnamefont {Pichler}}, \ and\ \bibinfo {author} {\bibfnamefont {Oskar}\ \bibnamefont {Painter}},\ }\emph {\enquote {\bibinfo {title} {Deterministic generation of multidimensional photonic cluster states with a single quantum emitter},}\ }\href {\doibase 10.1038/s41567-024-02408-0} {\bibfield  {journal} {\bibinfo  {journal} {Nat. Phys.}\ }\textbf {\bibinfo {volume} {20}},\ \bibinfo {pages} {865{\textendash}870} (\bibinfo {year} {2024})},\ \Eprint {http://arxiv.org/abs/2206.10076} {arXiv:2206.10076}\BibitemShut {NoStop}%
\bibitem [{\citenamefont {O'Sullivan}\ \emph {et~al.}(2025)\citenamefont {O'Sullivan}, \citenamefont {Reuer}, \citenamefont {Grigorev}, \citenamefont {Dai}, \citenamefont {Hern{\'a}ndez-Ant{\'o}n}, \citenamefont {Mu{\~n}oz-Arias}, \citenamefont {Hellings}, \citenamefont {Flasby}, \citenamefont {Colao~Zanuz}, \citenamefont {Besse}, \citenamefont {Blais}, \citenamefont {Malz}, \citenamefont {Eichler},\ and\ \citenamefont {Wallraff}}]{OsullivanEtAl2024}%
  \BibitemOpen
  \bibfield  {author} {\bibinfo {author} {\bibfnamefont {James}\ \bibnamefont {O'Sullivan}}, \bibinfo {author} {\bibfnamefont {Kevin}\ \bibnamefont {Reuer}}, \bibinfo {author} {\bibfnamefont {Aleksandr}\ \bibnamefont {Grigorev}}, \bibinfo {author} {\bibfnamefont {Xi}~\bibnamefont {Dai}}, \bibinfo {author} {\bibfnamefont {Alonso}\ \bibnamefont {Hern{\'a}ndez-Ant{\'o}n}}, \bibinfo {author} {\bibfnamefont {Manuel~H.}\ \bibnamefont {Mu{\~n}oz-Arias}}, \bibinfo {author} {\bibfnamefont {Christoph}\ \bibnamefont {Hellings}}, \bibinfo {author} {\bibfnamefont {Alexander}\ \bibnamefont {Flasby}}, \bibinfo {author} {\bibfnamefont {Dante}\ \bibnamefont {Colao~Zanuz}}, \bibinfo {author} {\bibfnamefont {Jean-Claude}\ \bibnamefont {Besse}}, \bibinfo {author} {\bibfnamefont {Alexandre}\ \bibnamefont {Blais}}, \bibinfo {author} {\bibfnamefont {Daniel}\ \bibnamefont {Malz}}, \bibinfo {author} {\bibfnamefont {Christopher}\ \bibnamefont {Eichler}}, \ and\ \bibinfo {author} {\bibfnamefont {Andreas}\ \bibnamefont {Wallraff}},\
  }\emph {\enquote {\bibinfo {title} {Deterministic generation of two-dimensional multi-photon cluster states},}\ }\href {\doibase 10.1038/s41467-025-60472-3} {\bibfield  {journal} {\bibinfo  {journal} {Nat. Commun.}\ }\textbf {\bibinfo {volume} {16}},\ \bibinfo {pages} {5505} (\bibinfo {year} {2025})},\ \Eprint {http://arxiv.org/abs/2409.06623} {arXiv:2409.06623}\BibitemShut {NoStop}%
\bibitem [{\citenamefont {Sunada}\ \emph {et~al.}(2024)\citenamefont {Sunada}, \citenamefont {Kono}, \citenamefont {Ilves}, \citenamefont {Sugiyama}, \citenamefont {Suzuki}, \citenamefont {Okubo}, \citenamefont {Tamate}, \citenamefont {Tabuchi},\ and\ \citenamefont {Nakamura}}]{sunada_2024_efficienttomographymicrowave}%
  \BibitemOpen
  \bibfield  {author} {\bibinfo {author} {\bibfnamefont {Yoshiki}\ \bibnamefont {Sunada}}, \bibinfo {author} {\bibfnamefont {Shingo}\ \bibnamefont {Kono}}, \bibinfo {author} {\bibfnamefont {Jesper}\ \bibnamefont {Ilves}}, \bibinfo {author} {\bibfnamefont {Takanori}\ \bibnamefont {Sugiyama}}, \bibinfo {author} {\bibfnamefont {Yasunari}\ \bibnamefont {Suzuki}}, \bibinfo {author} {\bibfnamefont {Tsuyoshi}\ \bibnamefont {Okubo}}, \bibinfo {author} {\bibfnamefont {Shuhei}\ \bibnamefont {Tamate}}, \bibinfo {author} {\bibfnamefont {Yutaka}\ \bibnamefont {Tabuchi}}, \ and\ \bibinfo {author} {\bibfnamefont {Yasunobu}\ \bibnamefont {Nakamura}},\ }\href@noop {} {\emph {\enquote {\bibinfo {title} {{Efficient Tomography of Microwave Photonic Cluster States}},}\ }}\Eprint {http://arxiv.org/abs/2410.03345} {arXiv:2410.03345} [quant-ph] (\bibinfo {year} {2024})\BibitemShut {NoStop}%
\bibitem [{\citenamefont {Shalm}\ \emph {et~al.}(2015)\citenamefont {Shalm}, \citenamefont {Meyer-Scott}, \citenamefont {Christensen}, \citenamefont {Bierhorst}, \citenamefont {Wayne}, \citenamefont {Stevens}, \citenamefont {Gerrits}, \citenamefont {Glancy}, \citenamefont {Hamel}, \citenamefont {Allman}, \citenamefont {Coakley}, \citenamefont {Dyer}, \citenamefont {Hodge}, \citenamefont {Lita}, \citenamefont {Verma}, \citenamefont {Lambrocco}, \citenamefont {Tortorici}, \citenamefont {Migdall}, \citenamefont {Zhang}, \citenamefont {Kumor}, \citenamefont {Farr}, \citenamefont {Marsili}, \citenamefont {Shaw}, \citenamefont {Stern}, \citenamefont {Abell{\'a}n}, \citenamefont {Amaya}, \citenamefont {Pruneri}, \citenamefont {Jennewein}, \citenamefont {Mitchell}, \citenamefont {Kwiat}, \citenamefont {Bienfang}, \citenamefont {Mirin}, \citenamefont {Knill},\ and\ \citenamefont {Nam}}]{Shalm-Nam2015}%
  \BibitemOpen
  \bibfield  {author} {\bibinfo {author} {\bibfnamefont {Lynden~K.}\ \bibnamefont {Shalm}}, \bibinfo {author} {\bibfnamefont {Evan}\ \bibnamefont {Meyer-Scott}}, \bibinfo {author} {\bibfnamefont {Bradley~G.}\ \bibnamefont {Christensen}}, \bibinfo {author} {\bibfnamefont {Peter}\ \bibnamefont {Bierhorst}}, \bibinfo {author} {\bibfnamefont {Michael~A.}\ \bibnamefont {Wayne}}, \bibinfo {author} {\bibfnamefont {Martin~J.}\ \bibnamefont {Stevens}}, \bibinfo {author} {\bibfnamefont {Thomas}\ \bibnamefont {Gerrits}}, \bibinfo {author} {\bibfnamefont {Scott}\ \bibnamefont {Glancy}}, \bibinfo {author} {\bibfnamefont {Deny~R.}\ \bibnamefont {Hamel}}, \bibinfo {author} {\bibfnamefont {Michael~S.}\ \bibnamefont {Allman}}, \bibinfo {author} {\bibfnamefont {Kevin~J.}\ \bibnamefont {Coakley}}, \bibinfo {author} {\bibfnamefont {Shellee~D.}\ \bibnamefont {Dyer}}, \bibinfo {author} {\bibfnamefont {Carson}\ \bibnamefont {Hodge}}, \bibinfo {author} {\bibfnamefont {Adriana~E.}\ \bibnamefont {Lita}}, \bibinfo {author}
  {\bibfnamefont {Varun~B.}\ \bibnamefont {Verma}}, \bibinfo {author} {\bibfnamefont {Camilla}\ \bibnamefont {Lambrocco}}, \bibinfo {author} {\bibfnamefont {Edward}\ \bibnamefont {Tortorici}}, \bibinfo {author} {\bibfnamefont {Alan~L.}\ \bibnamefont {Migdall}}, \bibinfo {author} {\bibfnamefont {Yanbao}\ \bibnamefont {Zhang}}, \bibinfo {author} {\bibfnamefont {Daniel~R.}\ \bibnamefont {Kumor}}, \bibinfo {author} {\bibfnamefont {William~H.}\ \bibnamefont {Farr}}, \bibinfo {author} {\bibfnamefont {Francesco}\ \bibnamefont {Marsili}}, \bibinfo {author} {\bibfnamefont {Matthew~D.}\ \bibnamefont {Shaw}}, \bibinfo {author} {\bibfnamefont {Jeffrey~A.}\ \bibnamefont {Stern}}, \bibinfo {author} {\bibfnamefont {Carlos}\ \bibnamefont {Abell{\'a}n}}, \bibinfo {author} {\bibfnamefont {Waldimar}\ \bibnamefont {Amaya}}, \bibinfo {author} {\bibfnamefont {Valerio}\ \bibnamefont {Pruneri}}, \bibinfo {author} {\bibfnamefont {Thomas}\ \bibnamefont {Jennewein}}, \bibinfo {author} {\bibfnamefont {Morgan~W.}\ \bibnamefont
  {Mitchell}}, \bibinfo {author} {\bibfnamefont {Paul~G.}\ \bibnamefont {Kwiat}}, \bibinfo {author} {\bibfnamefont {Joshua~C.}\ \bibnamefont {Bienfang}}, \bibinfo {author} {\bibfnamefont {Richard~P.}\ \bibnamefont {Mirin}}, \bibinfo {author} {\bibfnamefont {Emanuel}\ \bibnamefont {Knill}}, \ and\ \bibinfo {author} {\bibfnamefont {Sae~Woo}\ \bibnamefont {Nam}},\ }\emph {\enquote {\bibinfo {title} {{Strong Loophole-Free Test of Local Realism}},}\ }\href {\doibase 10.1103/PhysRevLett.115.250402} {\bibfield  {journal} {\bibinfo  {journal} {Phys. Rev. Lett.}\ }\textbf {\bibinfo {volume} {115}},\ \bibinfo {pages} {250402} (\bibinfo {year} {2015})},\ \Eprint {http://arxiv.org/abs/1511.03189} {arXiv:1511.03189}\BibitemShut {NoStop}%
\bibitem [{\citenamefont {Giustina}\ \emph {et~al.}(2015)\citenamefont {Giustina}, \citenamefont {Versteegh}, \citenamefont {Wengerowsky}, \citenamefont {Handsteiner}, \citenamefont {Hochrainer}, \citenamefont {Phelan}, \citenamefont {Steinlechner}, \citenamefont {Kofler}, \citenamefont {Larsson}, \citenamefont {Abell{\'a}n}, \citenamefont {Amaya}, \citenamefont {Pruneri}, \citenamefont {Mitchell}, \citenamefont {Beyer}, \citenamefont {Gerrits}, \citenamefont {Lita}, \citenamefont {Shalm}, \citenamefont {Nam}, \citenamefont {Scheidl}, \citenamefont {Ursin}, \citenamefont {Wittmann},\ and\ \citenamefont {Zeilinger}}]{Giustina-Zeilinger2015}%
  \BibitemOpen
  \bibfield  {author} {\bibinfo {author} {\bibfnamefont {Marissa}\ \bibnamefont {Giustina}}, \bibinfo {author} {\bibfnamefont {Marijn A.~M.}\ \bibnamefont {Versteegh}}, \bibinfo {author} {\bibfnamefont {S{\"o}ren}\ \bibnamefont {Wengerowsky}}, \bibinfo {author} {\bibfnamefont {Johannes}\ \bibnamefont {Handsteiner}}, \bibinfo {author} {\bibfnamefont {Armin}\ \bibnamefont {Hochrainer}}, \bibinfo {author} {\bibfnamefont {Kevin}\ \bibnamefont {Phelan}}, \bibinfo {author} {\bibfnamefont {Fabian}\ \bibnamefont {Steinlechner}}, \bibinfo {author} {\bibfnamefont {Johannes}\ \bibnamefont {Kofler}}, \bibinfo {author} {\bibfnamefont {Jan-{\r{A}}ke}\ \bibnamefont {Larsson}}, \bibinfo {author} {\bibfnamefont {Carlos}\ \bibnamefont {Abell{\'a}n}}, \bibinfo {author} {\bibfnamefont {Waldimar}\ \bibnamefont {Amaya}}, \bibinfo {author} {\bibfnamefont {Valerio}\ \bibnamefont {Pruneri}}, \bibinfo {author} {\bibfnamefont {Morgan~W.}\ \bibnamefont {Mitchell}}, \bibinfo {author} {\bibfnamefont {J{\"o}rn}\ \bibnamefont {Beyer}},
  \bibinfo {author} {\bibfnamefont {Thomas}\ \bibnamefont {Gerrits}}, \bibinfo {author} {\bibfnamefont {Adriana~E.}\ \bibnamefont {Lita}}, \bibinfo {author} {\bibfnamefont {Lynden~K.}\ \bibnamefont {Shalm}}, \bibinfo {author} {\bibfnamefont {Sae~Woo}\ \bibnamefont {Nam}}, \bibinfo {author} {\bibfnamefont {Thomas}\ \bibnamefont {Scheidl}}, \bibinfo {author} {\bibfnamefont {Rupert}\ \bibnamefont {Ursin}}, \bibinfo {author} {\bibfnamefont {Bernhard}\ \bibnamefont {Wittmann}}, \ and\ \bibinfo {author} {\bibfnamefont {Anton}\ \bibnamefont {Zeilinger}},\ }\emph {\enquote {\bibinfo {title} {{Significant-Loophole-Free Test of Bell's Theorem with Entangled Photons}},}\ }\href {https://doi.org/10.1103/PhysRevLett.115.250401} {\bibfield  {journal} {\bibinfo  {journal} {Phys. Rev. Lett.}\ }\textbf {\bibinfo {volume} {115}},\ \bibinfo {pages} {250401} (\bibinfo {year} {2015})},\ \Eprint {http://arxiv.org/abs/1511.03190} {arXiv:1511.03190}\BibitemShut {NoStop}%
\bibitem [{\citenamefont {Bartolucci}\ \emph {et~al.}(2023)\citenamefont {Bartolucci}, \citenamefont {Birchall}, \citenamefont {Bombin}, \citenamefont {Cable}, \citenamefont {Dawson}, \citenamefont {Gimeno-Segovia}, \citenamefont {Johnston}, \citenamefont {Kieling}, \citenamefont {Nickerson}, \citenamefont {Pant}, \citenamefont {Pastawski}, \citenamefont {Rudolph},\ and\ \citenamefont {Sparrow}}]{BartolucciEtAl2023}%
  \BibitemOpen
  \bibfield  {author} {\bibinfo {author} {\bibfnamefont {Sara}\ \bibnamefont {Bartolucci}}, \bibinfo {author} {\bibfnamefont {Patrick}\ \bibnamefont {Birchall}}, \bibinfo {author} {\bibfnamefont {Hector}\ \bibnamefont {Bombin}}, \bibinfo {author} {\bibfnamefont {Hugo}\ \bibnamefont {Cable}}, \bibinfo {author} {\bibfnamefont {Chris}\ \bibnamefont {Dawson}}, \bibinfo {author} {\bibfnamefont {Mercedes}\ \bibnamefont {Gimeno-Segovia}}, \bibinfo {author} {\bibfnamefont {Eric}\ \bibnamefont {Johnston}}, \bibinfo {author} {\bibfnamefont {Konrad}\ \bibnamefont {Kieling}}, \bibinfo {author} {\bibfnamefont {Naomi}\ \bibnamefont {Nickerson}}, \bibinfo {author} {\bibfnamefont {Mihir}\ \bibnamefont {Pant}}, \bibinfo {author} {\bibfnamefont {Fernando}\ \bibnamefont {Pastawski}}, \bibinfo {author} {\bibfnamefont {Terry}\ \bibnamefont {Rudolph}}, \ and\ \bibinfo {author} {\bibfnamefont {Chris}\ \bibnamefont {Sparrow}},\ }\emph {\enquote {\bibinfo {title} {Fusion-based quantum computation},}\ }\href
  {https://doi.org/10.1038/s41467-023-36493-1} {\bibfield  {journal} {\bibinfo  {journal} {Nat. Commun.}\ }\textbf {\bibinfo {volume} {14}},\ \bibinfo {pages} {912} (\bibinfo {year} {2023})},\ \Eprint {http://arxiv.org/abs/2101.09310} {arXiv:2101.09310}\BibitemShut {NoStop}%
\bibitem [{\citenamefont {Varnava}\ \emph {et~al.}(2006)\citenamefont {Varnava}, \citenamefont {Browne},\ and\ \citenamefont {Rudolph}}]{VarnavaBrowneRudolph2006}%
  \BibitemOpen
  \bibfield  {author} {\bibinfo {author} {\bibfnamefont {Michael}\ \bibnamefont {Varnava}}, \bibinfo {author} {\bibfnamefont {Daniel~E.}\ \bibnamefont {Browne}}, \ and\ \bibinfo {author} {\bibfnamefont {Terry}\ \bibnamefont {Rudolph}},\ }\emph {\enquote {\bibinfo {title} {{Loss Tolerance in One-Way Quantum Computation via Counterfactual Error Correction}},}\ }\href {https://doi.org/10.1038/s41467-023-36493-1} {\bibfield  {journal} {\bibinfo  {journal} {Phys. Rev. Lett.}\ }\textbf {\bibinfo {volume} {97}},\ \bibinfo {pages} {120501} (\bibinfo {year} {2006})},\ \Eprint {http://arxiv.org/abs/quant-ph/0507036} {arXiv:quant-ph/0507036}\BibitemShut {NoStop}%
\bibitem [{\citenamefont {Borregaard}\ \emph {et~al.}(2020)\citenamefont {Borregaard}, \citenamefont {Pichler}, \citenamefont {Schr\"oder}, \citenamefont {Lukin}, \citenamefont {Lodahl},\ and\ \citenamefont {S\o{}rensen}}]{BorregaardEtAl2020}%
  \BibitemOpen
  \bibfield  {author} {\bibinfo {author} {\bibfnamefont {Johannes}\ \bibnamefont {Borregaard}}, \bibinfo {author} {\bibfnamefont {Hannes}\ \bibnamefont {Pichler}}, \bibinfo {author} {\bibfnamefont {Tim}\ \bibnamefont {Schr\"oder}}, \bibinfo {author} {\bibfnamefont {Mikhail~D.}\ \bibnamefont {Lukin}}, \bibinfo {author} {\bibfnamefont {Peter}\ \bibnamefont {Lodahl}}, \ and\ \bibinfo {author} {\bibfnamefont {Anders~S.}\ \bibnamefont {S\o{}rensen}},\ }\emph {\enquote {\bibinfo {title} {{One-Way Quantum Repeater Based on Near-Deterministic Photon-Emitter Interfaces}},}\ }\href {https://doi.org/10.1103/PhysRevX.10.021071} {\bibfield  {journal} {\bibinfo  {journal} {Phys. Rev. X}\ }\textbf {\bibinfo {volume} {10}},\ \bibinfo {pages} {021071} (\bibinfo {year} {2020})},\ \Eprint {http://arxiv.org/abs/1907.05101} {arXiv:1907.05101}\BibitemShut {NoStop}%
\bibitem [{\citenamefont {T{\'o}th}\ and\ \citenamefont {G{\"u}hne}(2005{\natexlab{a}})}]{TothGuehne2005a}%
  \BibitemOpen
  \bibfield  {author} {\bibinfo {author} {\bibfnamefont {G{\'e}za}\ \bibnamefont {T{\'o}th}}\ and\ \bibinfo {author} {\bibfnamefont {Otfried}\ \bibnamefont {G{\"u}hne}},\ }\emph {\enquote {\bibinfo {title} {{Detecting Genuine Multipartite Entanglement with Two Local Measurements}},}\ }\href {https://doi.org/10.1103/PhysRevLett.94.060501} {\bibfield  {journal} {\bibinfo  {journal} {Phys. Rev. Lett.}\ }\textbf {\bibinfo {volume} {94}},\ \bibinfo {pages} {060501} (\bibinfo {year} {2005}{\natexlab{a}})},\ \Eprint {http://arxiv.org/abs/quant-ph/0405165} {arXiv:quant-ph/0405165}\BibitemShut {NoStop}%
\bibitem [{\citenamefont {T{\'o}th}\ and\ \citenamefont {G{\"u}hne}(2005{\natexlab{b}})}]{TothGuehne2005b}%
  \BibitemOpen
  \bibfield  {author} {\bibinfo {author} {\bibfnamefont {G{\'e}za}\ \bibnamefont {T{\'o}th}}\ and\ \bibinfo {author} {\bibfnamefont {Otfried}\ \bibnamefont {G{\"u}hne}},\ }\emph {\enquote {\bibinfo {title} {Entanglement detection in the stabilizer formalism},}\ }\href {https://doi.org/10.1103/PhysRevA.72.022340} {\bibfield  {journal} {\bibinfo  {journal} {Phys. Rev. A}\ }\textbf {\bibinfo {volume} {72}},\ \bibinfo {pages} {022340} (\bibinfo {year} {2005}{\natexlab{b}})},\ \Eprint {http://arxiv.org/abs/quant-ph/0501020} {arXiv:quant-ph/0501020}\BibitemShut {NoStop}%
\bibitem [{\citenamefont {Zhou}\ \emph {et~al.}(2019)\citenamefont {Zhou}, \citenamefont {Zhao}, \citenamefont {Yuan},\ and\ \citenamefont {Ma}}]{ZhouZhaoYuanMa2019}%
  \BibitemOpen
  \bibfield  {author} {\bibinfo {author} {\bibfnamefont {You}\ \bibnamefont {Zhou}}, \bibinfo {author} {\bibfnamefont {Qi}~\bibnamefont {Zhao}}, \bibinfo {author} {\bibfnamefont {Xiao}\ \bibnamefont {Yuan}}, \ and\ \bibinfo {author} {\bibfnamefont {Xiongfeng}\ \bibnamefont {Ma}},\ }\emph {\enquote {\bibinfo {title} {Detecting multipartite entanglement structure with minimal resources},}\ }\href {\doibase 10.1038/s41534-019-0200-9} {\bibfield  {journal} {\bibinfo  {journal} {npj Quantum Inf.}\ }\textbf {\bibinfo {volume} {5}},\ \bibinfo {pages} {83} (\bibinfo {year} {2019})},\ \Eprint {http://arxiv.org/abs/1904.05001} {arXiv:1904.05001}\BibitemShut {NoStop}%
\bibitem [{\citenamefont {Jungnitsch}\ \emph {et~al.}(2011{\natexlab{a}})\citenamefont {Jungnitsch}, \citenamefont {Moroder},\ and\ \citenamefont {G{\"u}hne}}]{JungnitschMoroderGuehne2011a}%
  \BibitemOpen
  \bibfield  {author} {\bibinfo {author} {\bibfnamefont {Bastian}\ \bibnamefont {Jungnitsch}}, \bibinfo {author} {\bibfnamefont {Tobias}\ \bibnamefont {Moroder}}, \ and\ \bibinfo {author} {\bibfnamefont {Otfried}\ \bibnamefont {G{\"u}hne}},\ }\emph {\enquote {\bibinfo {title} {{Taming Multiparticle Entanglement}},}\ }\href {https://doi.org/10.1103/PhysRevLett.106.190502} {\bibfield  {journal} {\bibinfo  {journal} {Phys. Rev. Lett.}\ }\textbf {\bibinfo {volume} {106}},\ \bibinfo {pages} {190502} (\bibinfo {year} {2011}{\natexlab{a}})},\ \Eprint {http://arxiv.org/abs/1010.6049} {arXiv:1010.6049}\BibitemShut {NoStop}%
\bibitem [{\citenamefont {Jungnitsch}\ \emph {et~al.}(2011{\natexlab{b}})\citenamefont {Jungnitsch}, \citenamefont {Moroder},\ and\ \citenamefont {G{\"u}hne}}]{JungnitschMoroderGuehne2011b}%
  \BibitemOpen
  \bibfield  {author} {\bibinfo {author} {\bibfnamefont {Bastian}\ \bibnamefont {Jungnitsch}}, \bibinfo {author} {\bibfnamefont {Tobias}\ \bibnamefont {Moroder}}, \ and\ \bibinfo {author} {\bibfnamefont {Otfried}\ \bibnamefont {G{\"u}hne}},\ }\emph {\enquote {\bibinfo {title} {{Entanglement witnesses for graph states: General theory and examples}},}\ }\href {https://doi.org/10.1103/PhysRevA.84.032310} {\bibfield  {journal} {\bibinfo  {journal} {Phys. Rev. A}\ }\textbf {\bibinfo {volume} {84}},\ \bibinfo {pages} {032310} (\bibinfo {year} {2011}{\natexlab{b}})},\ \Eprint {http://arxiv.org/abs/1106.1114} {arXiv:1106.1114}\BibitemShut {NoStop}%
\bibitem [{\citenamefont {G\"uhne}\ \emph {et~al.}(2007)\citenamefont {G\"uhne}, \citenamefont {Lu}, \citenamefont {Gao},\ and\ \citenamefont {Pan}}]{GuehneLuGaoPan2007}%
  \BibitemOpen
  \bibfield  {author} {\bibinfo {author} {\bibfnamefont {Otfried}\ \bibnamefont {G\"uhne}}, \bibinfo {author} {\bibfnamefont {Chao-Yang}\ \bibnamefont {Lu}}, \bibinfo {author} {\bibfnamefont {Wei-Bo}\ \bibnamefont {Gao}}, \ and\ \bibinfo {author} {\bibfnamefont {Jian-Wei}\ \bibnamefont {Pan}},\ }\emph {\enquote {\bibinfo {title} {Toolbox for entanglement detection and fidelity estimation},}\ }\href {https://doi.org/10.1103/PhysRevA.76.030305} {\bibfield  {journal} {\bibinfo  {journal} {Phys. Rev. A}\ }\textbf {\bibinfo {volume} {76}},\ \bibinfo {pages} {030305} (\bibinfo {year} {2007})},\ \Eprint {http://arxiv.org/abs/0706.2432} {arXiv:0706.2432}\BibitemShut {NoStop}%
\bibitem [{\citenamefont {Tiurev}\ and\ \citenamefont {S\o{}rensen}(2022)}]{TiurevSoerensen2022}%
  \BibitemOpen
  \bibfield  {author} {\bibinfo {author} {\bibfnamefont {Konstantin}\ \bibnamefont {Tiurev}}\ and\ \bibinfo {author} {\bibfnamefont {Anders~S.}\ \bibnamefont {S\o{}rensen}},\ }\emph {\enquote {\bibinfo {title} {Fidelity measurement of a multiqubit cluster state with minimal effort},}\ }\href {https://doi.org/10.1103/PhysRevResearch.4.033162} {\bibfield  {journal} {\bibinfo  {journal} {Phys. Rev. Res.}\ }\textbf {\bibinfo {volume} {4}},\ \bibinfo {pages} {033162} (\bibinfo {year} {2022})},\ \Eprint {http://arxiv.org/abs/2107.10386} {arXiv:2107.10386}\BibitemShut {NoStop}%
\bibitem [{\citenamefont {Huber}\ \emph {et~al.}(2010)\citenamefont {Huber}, \citenamefont {Mintert}, \citenamefont {Gabriel},\ and\ \citenamefont {Hiesmayr}}]{HuberMintertGabrielHiesmayr2010}%
  \BibitemOpen
  \bibfield  {author} {\bibinfo {author} {\bibfnamefont {Marcus}\ \bibnamefont {Huber}}, \bibinfo {author} {\bibfnamefont {Florian}\ \bibnamefont {Mintert}}, \bibinfo {author} {\bibfnamefont {Andreas}\ \bibnamefont {Gabriel}}, \ and\ \bibinfo {author} {\bibfnamefont {Beatrix~C.}\ \bibnamefont {Hiesmayr}},\ }\emph {\enquote {\bibinfo {title} {{Detection of High-Dimensional Genuine Multipartite Entanglement of Mixed States}},}\ }\href {https://doi.org/10.1103/PhysRevLett.104.210501} {\bibfield  {journal} {\bibinfo  {journal} {Phys. Rev. Lett.}\ }\textbf {\bibinfo {volume} {104}},\ \bibinfo {pages} {210501} (\bibinfo {year} {2010})},\ \Eprint {http://arxiv.org/abs/0912.1870} {arXiv:0912.1870}\BibitemShut {NoStop}%
\bibitem [{\citenamefont {Gabriel}\ \emph {et~al.}(2010)\citenamefont {Gabriel}, \citenamefont {Hiesmayr},\ and\ \citenamefont {Huber}}]{GabrielHiesmayrHuber2010}%
  \BibitemOpen
  \bibfield  {author} {\bibinfo {author} {\bibfnamefont {Andreas}\ \bibnamefont {Gabriel}}, \bibinfo {author} {\bibfnamefont {Beatrix~C.}\ \bibnamefont {Hiesmayr}}, \ and\ \bibinfo {author} {\bibfnamefont {Marcus}\ \bibnamefont {Huber}},\ }\emph {\enquote {\bibinfo {title} {{Criterion for k-separability in mixed multipartite systems}},}\ }\href {https://doi.org/10.26421/QIC10.9-10-8} {\bibfield  {journal} {\bibinfo  {journal} {Quantum Inf. Comput.}\ }\textbf {\bibinfo {volume} {10}},\ \bibinfo {pages} {0829{\textendash}0836} (\bibinfo {year} {2010})},\ \Eprint {http://arxiv.org/abs/1002.2953} {arXiv:1002.2953}\BibitemShut {NoStop}%
\bibitem [{\citenamefont {Huber}\ \emph {et~al.}(2011)\citenamefont {Huber}, \citenamefont {Erker}, \citenamefont {Schimpf}, \citenamefont {Gabriel},\ and\ \citenamefont {Hiesmayr}}]{HuberErkerSchimpfGabrielHiesmayr2011}%
  \BibitemOpen
  \bibfield  {author} {\bibinfo {author} {\bibfnamefont {Marcus}\ \bibnamefont {Huber}}, \bibinfo {author} {\bibfnamefont {Paul}\ \bibnamefont {Erker}}, \bibinfo {author} {\bibfnamefont {Hans}\ \bibnamefont {Schimpf}}, \bibinfo {author} {\bibfnamefont {Andreas}\ \bibnamefont {Gabriel}}, \ and\ \bibinfo {author} {\bibfnamefont {Beatrix~C.}\ \bibnamefont {Hiesmayr}},\ }\emph {\enquote {\bibinfo {title} {Experimentally feasible set of criteria detecting genuine multipartite entanglement in $n$-qubit {D}icke states and in higher-dimensional systems},}\ }\href {\doibase 10.1103/PhysRevA.83.040301} {\bibfield  {journal} {\bibinfo  {journal} {Phys. Rev. A}\ }\textbf {\bibinfo {volume} {83}},\ \bibinfo {pages} {040301(R)} (\bibinfo {year} {2011})},\ \bibinfo {note} {erratum \href{https://doi.org/10.1103/PhysRevA.84.039906}{Phys. Rev. A \textbf{84}, 039906 (2011)}},\ \Eprint {http://arxiv.org/abs/1011.4579} {arXiv:1011.4579}\BibitemShut {NoStop}%
\bibitem [{\citenamefont {Liu}\ \emph {et~al.}(2022)\citenamefont {Liu}, \citenamefont {Tang}, \citenamefont {Dai}, \citenamefont {Liu}, \citenamefont {Chen},\ and\ \citenamefont {Ma}}]{LiuTangDaiLiuChenMa2022}%
  \BibitemOpen
  \bibfield  {author} {\bibinfo {author} {\bibfnamefont {Zhenhuan}\ \bibnamefont {Liu}}, \bibinfo {author} {\bibfnamefont {Yifan}\ \bibnamefont {Tang}}, \bibinfo {author} {\bibfnamefont {Hao}\ \bibnamefont {Dai}}, \bibinfo {author} {\bibfnamefont {Pengyu}\ \bibnamefont {Liu}}, \bibinfo {author} {\bibfnamefont {Shu}\ \bibnamefont {Chen}}, \ and\ \bibinfo {author} {\bibfnamefont {Xiongfeng}\ \bibnamefont {Ma}},\ }\emph {\enquote {\bibinfo {title} {{Detecting Entanglement in Quantum Many-Body Systems via Permutation Moments}},}\ }\href {https://doi.org/10.1103/PhysRevLett.129.260501} {\bibfield  {journal} {\bibinfo  {journal} {Phys. Rev. Lett.}\ }\textbf {\bibinfo {volume} {129}},\ \bibinfo {pages} {260501} (\bibinfo {year} {2022})},\ \Eprint {http://arxiv.org/abs/2203.08391} {arXiv:2203.08391}\BibitemShut {NoStop}%
\bibitem [{\citenamefont {Schlingemann}(2001)}]{Schlingemann2001}%
  \BibitemOpen
  \bibfield  {author} {\bibinfo {author} {\bibfnamefont {Dirk}\ \bibnamefont {Schlingemann}},\ }\href@noop {} {\emph {\enquote {\bibinfo {title} {Stabilizer codes can be realized as graph codes},}\ }}\Eprint {http://arxiv.org/abs/quant-ph/0111080} {arXiv:quant-ph/0111080} (\bibinfo {year} {2001})\BibitemShut {NoStop}%
\bibitem [{\citenamefont {Van~den Nest}\ \emph {et~al.}(2004)\citenamefont {Van~den Nest}, \citenamefont {Dehaene},\ and\ \citenamefont {De~Moor}}]{VanDenNestDehaeneDeMoor2004}%
  \BibitemOpen
  \bibfield  {author} {\bibinfo {author} {\bibfnamefont {Maarten}\ \bibnamefont {Van~den Nest}}, \bibinfo {author} {\bibfnamefont {Jeroen}\ \bibnamefont {Dehaene}}, \ and\ \bibinfo {author} {\bibfnamefont {Bart}\ \bibnamefont {De~Moor}},\ }\emph {\enquote {\bibinfo {title} {Graphical description of the action of local clifford transformations on graph states},}\ }\href {\doibase 10.1103/PhysRevA.69.022316} {\bibfield  {journal} {\bibinfo  {journal} {Phys. Rev. A}\ }\textbf {\bibinfo {volume} {69}},\ \bibinfo {pages} {022316} (\bibinfo {year} {2004})},\ \Eprint {http://arxiv.org/abs/quant-ph/0308151} {arXiv:quant-ph/0308151}\BibitemShut {NoStop}%
\bibitem [{\citenamefont {Gurvits}(2003)}]{Gurvits2003}%
  \BibitemOpen
  \bibfield  {author} {\bibinfo {author} {\bibfnamefont {Leonid}\ \bibnamefont {Gurvits}},\ }\emph {\enquote {\bibinfo {title} {{Classical Deterministic Complexity of Edmonds' Problem and Quantum Entanglement}},}\ }in\ \href {\doibase 10.1145/780542.780545} {\emph {\bibinfo {booktitle} {{Proceedings of the Thirty-fifth Annual ACM Symposium on Theory of Computing}}}},\ \bibinfo {series and number} {STOC '03}\ (\bibinfo  {publisher} {ACM},\ \bibinfo {address} {New York, NY, USA},\ \bibinfo {year} {2003})\ p.\ \bibinfo {pages} {10{\textendash}19},\ \Eprint {http://arxiv.org/abs/quant-ph/0303055} {arXiv:quant-ph/0303055}\BibitemShut {NoStop}%
\bibitem [{\citenamefont {Gharibian}(2010)}]{Gharibian2010}%
  \BibitemOpen
  \bibfield  {author} {\bibinfo {author} {\bibfnamefont {Sevag}\ \bibnamefont {Gharibian}},\ }\emph {\enquote {\bibinfo {title} {{Strong NP-hardness of the quantum separability problem.}}}\ }\href {https://doi.org/10.26421/QIC10.3-4-11} {\bibfield  {journal} {\bibinfo  {journal} {Quantum Inf. Comput.}\ }\textbf {\bibinfo {volume} {10}},\ \bibinfo {pages} {343{\textendash}360} (\bibinfo {year} {2010})},\ \Eprint {http://arxiv.org/abs/0810.4507} {arXiv:0810.4507}\BibitemShut {NoStop}%
\bibitem [{\citenamefont {Bertlmann}\ and\ \citenamefont {Friis}(2023)}]{BertlmannFriis2023}%
  \BibitemOpen
  \bibfield  {author} {\bibinfo {author} {\bibfnamefont {Reinhold~A.}\ \bibnamefont {Bertlmann}}\ and\ \bibinfo {author} {\bibfnamefont {Nicolai}\ \bibnamefont {Friis}},\ }\href {https://doi.org/10.1093/oso/9780199683338.001.0001} {\emph {\bibinfo {title} {Modern Quantum Theory {\textendash} From Quantum Mechanics to Entanglement and Quantum Information}}}\ (\bibinfo  {publisher} {Oxford University Press},\ \bibinfo {address} {Oxford, U.K.},\ \bibinfo {year} {2023})\BibitemShut {NoStop}%
\bibitem [{\citenamefont {Huber}\ and\ \citenamefont {de~Vicente}(2013)}]{HuberDeVicente2013}%
  \BibitemOpen
  \bibfield  {author} {\bibinfo {author} {\bibfnamefont {Marcus}\ \bibnamefont {Huber}}\ and\ \bibinfo {author} {\bibfnamefont {Julio~I.}\ \bibnamefont {de~Vicente}},\ }\emph {\enquote {\bibinfo {title} {{Structure of Multidimensional Entanglement in Multipartite Systems}},}\ }\href {https://doi.org/10.1103/PhysRevLett.110.030501} {\bibfield  {journal} {\bibinfo  {journal} {Phys. Rev. Lett.}\ }\textbf {\bibinfo {volume} {110}},\ \bibinfo {pages} {030501} (\bibinfo {year} {2013})},\ \Eprint {http://arxiv.org/abs/1210.6876} {arXiv:1210.6876}\BibitemShut {NoStop}%
\bibitem [{\citenamefont {Huber}\ \emph {et~al.}(2013)\citenamefont {Huber}, \citenamefont {Perarnau-Llobet},\ and\ \citenamefont {de~Vicente}}]{HuberPerarnauDeVicente2013}%
  \BibitemOpen
  \bibfield  {author} {\bibinfo {author} {\bibfnamefont {Marcus}\ \bibnamefont {Huber}}, \bibinfo {author} {\bibfnamefont {Mart{\'i}}\ \bibnamefont {Perarnau-Llobet}}, \ and\ \bibinfo {author} {\bibfnamefont {Julio~I.}\ \bibnamefont {de~Vicente}},\ }\emph {\enquote {\bibinfo {title} {The entropy vector formalism and the structure of multidimensional entanglement in multipartite systems},}\ }\href {\doibase 10.1103/PhysRevA.88.042328} {\bibfield  {journal} {\bibinfo  {journal} {Phys. Rev. A}\ }\textbf {\bibinfo {volume} {88}},\ \bibinfo {pages} {042328} (\bibinfo {year} {2013})},\ \Eprint {http://arxiv.org/abs/1307.3541} {arXiv:1307.3541}\BibitemShut {NoStop}%
\bibitem [{\citenamefont {Schwaiger}\ \emph {et~al.}(2015)\citenamefont {Schwaiger}, \citenamefont {Sauerwein}, \citenamefont {Cuquet}, \citenamefont {de~Vicente},\ and\ \citenamefont {Kraus}}]{SchwaigerSauerweinCuquetDeVicenteKraus2015}%
  \BibitemOpen
  \bibfield  {author} {\bibinfo {author} {\bibfnamefont {Katharina}\ \bibnamefont {Schwaiger}}, \bibinfo {author} {\bibfnamefont {David}\ \bibnamefont {Sauerwein}}, \bibinfo {author} {\bibfnamefont {Mart{\'i}}\ \bibnamefont {Cuquet}}, \bibinfo {author} {\bibfnamefont {Julio~I.}\ \bibnamefont {de~Vicente}}, \ and\ \bibinfo {author} {\bibfnamefont {Barbara}\ \bibnamefont {Kraus}},\ }\emph {\enquote {\bibinfo {title} {{Operational Multipartite Entanglement Measures}},}\ }\href {https://doi.org/10.1103/PhysRevLett.115.150502} {\bibfield  {journal} {\bibinfo  {journal} {Phys. Rev. Lett.}\ }\textbf {\bibinfo {volume} {115}},\ \bibinfo {pages} {150502} (\bibinfo {year} {2015})},\ \Eprint {http://arxiv.org/abs/1503.00615} {arXiv:1503.00615}\BibitemShut {NoStop}%
\bibitem [{\citenamefont {Chitambar}\ \emph {et~al.}(2014)\citenamefont {Chitambar}, \citenamefont {Leung}, \citenamefont {Man\v{c}inska}, \citenamefont {Ozols},\ and\ \citenamefont {Winter}}]{ChitambarLeungMancinskaOzolsWinter2014}%
  \BibitemOpen
  \bibfield  {author} {\bibinfo {author} {\bibfnamefont {Eric}\ \bibnamefont {Chitambar}}, \bibinfo {author} {\bibfnamefont {Debbie}\ \bibnamefont {Leung}}, \bibinfo {author} {\bibfnamefont {Laura}\ \bibnamefont {Man\v{c}inska}}, \bibinfo {author} {\bibfnamefont {Maris}\ \bibnamefont {Ozols}}, \ and\ \bibinfo {author} {\bibfnamefont {Andreas}\ \bibnamefont {Winter}},\ }\emph {\enquote {\bibinfo {title} {{Everything You Always Wanted to Know About LOCC (But Were Afraid to Ask)}},}\ }\href {\doibase 10.1007/s00220-014-1953-9} {\bibfield  {journal} {\bibinfo  {journal} {Commun. Math. Phys.}\ }\textbf {\bibinfo {volume} {328}},\ \bibinfo {pages} {303{\textendash}326} (\bibinfo {year} {2014})},\ \Eprint {http://arxiv.org/abs/1210.4583} {arXiv:1210.4583}\BibitemShut {NoStop}%
\bibitem [{\citenamefont {Spee}\ \emph {et~al.}(2017)\citenamefont {Spee}, \citenamefont {de~Vicente}, \citenamefont {Sauerwein},\ and\ \citenamefont {Kraus}}]{SpeeDeVicenteSauerweinKraus2017}%
  \BibitemOpen
  \bibfield  {author} {\bibinfo {author} {\bibfnamefont {Cornelia}\ \bibnamefont {Spee}}, \bibinfo {author} {\bibfnamefont {Julio~I.}\ \bibnamefont {de~Vicente}}, \bibinfo {author} {\bibfnamefont {David}\ \bibnamefont {Sauerwein}}, \ and\ \bibinfo {author} {\bibfnamefont {Barbara}\ \bibnamefont {Kraus}},\ }\emph {\enquote {\bibinfo {title} {{Entangled Pure State Transformations via Local Operations Assisted by Finitely Many Rounds of Classical Communication}},}\ }\href {https://doi.org/10.1103/PhysRevLett.118.040503} {\bibfield  {journal} {\bibinfo  {journal} {Phys. Rev. Lett.}\ }\textbf {\bibinfo {volume} {118}},\ \bibinfo {pages} {040503} (\bibinfo {year} {2017})},\ \Eprint {http://arxiv.org/abs/1606.04418} {arXiv:1606.04418}\BibitemShut {NoStop}%
\bibitem [{\citenamefont {S{\l}owik}\ \emph {et~al.}(2020)\citenamefont {S{\l}owik}, \citenamefont {Hebenstreit}, \citenamefont {Kraus},\ and\ \citenamefont {Sawicki}}]{SlowikHebenstreitKrausSawicki2020}%
  \BibitemOpen
  \bibfield  {author} {\bibinfo {author} {\bibfnamefont {Oskar}\ \bibnamefont {S{\l}owik}}, \bibinfo {author} {\bibfnamefont {Martin}\ \bibnamefont {Hebenstreit}}, \bibinfo {author} {\bibfnamefont {Barbara}\ \bibnamefont {Kraus}}, \ and\ \bibinfo {author} {\bibfnamefont {Adam}\ \bibnamefont {Sawicki}},\ }\emph {\enquote {\bibinfo {title} {{A link between symmetries of critical states and the structure of SLOCC classes in multipartite systems}},}\ }\href {https://doi.org/10.22331/q-2020-07-20-300} {\bibfield  {journal} {\bibinfo  {journal} {Quantum}\ }\textbf {\bibinfo {volume} {4}},\ \bibinfo {pages} {300} (\bibinfo {year} {2020})},\ \Eprint {http://arxiv.org/abs/1912.00099} {arXiv:1912.00099}\BibitemShut {NoStop}%
\bibitem [{\citenamefont {Hebenstreit}\ \emph {et~al.}(2022)\citenamefont {Hebenstreit}, \citenamefont {Spee}, \citenamefont {Li}, \citenamefont {Kraus},\ and\ \citenamefont {de~Vicente}}]{HebenstreitSpeeLiKrausdeVicente2022}%
  \BibitemOpen
  \bibfield  {author} {\bibinfo {author} {\bibfnamefont {Martin}\ \bibnamefont {Hebenstreit}}, \bibinfo {author} {\bibfnamefont {Cornelia}\ \bibnamefont {Spee}}, \bibinfo {author} {\bibfnamefont {Nicky Kai~Hong}\ \bibnamefont {Li}}, \bibinfo {author} {\bibfnamefont {Barbara}\ \bibnamefont {Kraus}}, \ and\ \bibinfo {author} {\bibfnamefont {Julio~I.}\ \bibnamefont {de~Vicente}},\ }\emph {\enquote {\bibinfo {title} {State transformations within entanglement classes containing permutation-symmetric states},}\ }\href {https://doi.org/10.1103/PhysRevA.105.032458} {\bibfield  {journal} {\bibinfo  {journal} {Phys. Rev. A}\ }\textbf {\bibinfo {volume} {105}},\ \bibinfo {pages} {032458} (\bibinfo {year} {2022})},\ \Eprint {http://arxiv.org/abs/2107.13949} {arXiv:2107.13949}\BibitemShut {NoStop}%
\bibitem [{\citenamefont {Li}\ \emph {et~al.}(2024)\citenamefont {Li}, \citenamefont {Spee}, \citenamefont {Hebenstreit}, \citenamefont {de~Vicente},\ and\ \citenamefont {Kraus}}]{LiSpeeHebenstreitdeVicenteKraus2024}%
  \BibitemOpen
  \bibfield  {author} {\bibinfo {author} {\bibfnamefont {Nicky Kai~Hong}\ \bibnamefont {Li}}, \bibinfo {author} {\bibfnamefont {Cornelia}\ \bibnamefont {Spee}}, \bibinfo {author} {\bibfnamefont {Martin}\ \bibnamefont {Hebenstreit}}, \bibinfo {author} {\bibfnamefont {Julio~I.}\ \bibnamefont {de~Vicente}}, \ and\ \bibinfo {author} {\bibfnamefont {Barbara}\ \bibnamefont {Kraus}},\ }\emph {\enquote {\bibinfo {title} {Identifying families of multipartite states with non-trivial local entanglement transformations},}\ }\href {https://doi.org/10.22331/q-2024-02-29-1270} {\bibfield  {journal} {\bibinfo  {journal} {Quantum}\ }\textbf {\bibinfo {volume} {8}},\ \bibinfo {pages} {1270} (\bibinfo {year} {2024})},\ \Eprint {http://arxiv.org/abs/2302.03139} {arXiv:2302.03139}\BibitemShut {NoStop}%
\bibitem [{\citenamefont {Sauerwein}\ \emph {et~al.}(2018)\citenamefont {Sauerwein}, \citenamefont {Wallach}, \citenamefont {Gour},\ and\ \citenamefont {Kraus}}]{SauerweinWallachGourKraus2018}%
  \BibitemOpen
  \bibfield  {author} {\bibinfo {author} {\bibfnamefont {David}\ \bibnamefont {Sauerwein}}, \bibinfo {author} {\bibfnamefont {Nolan~R.}\ \bibnamefont {Wallach}}, \bibinfo {author} {\bibfnamefont {Gilad}\ \bibnamefont {Gour}}, \ and\ \bibinfo {author} {\bibfnamefont {Barbara}\ \bibnamefont {Kraus}},\ }\emph {\enquote {\bibinfo {title} {{Transformations among Pure Multipartite Entangled States via Local Operations are Almost Never Possible}},}\ }\href {https://doi.org/10.1103/PhysRevX.8.031020} {\bibfield  {journal} {\bibinfo  {journal} {Phys. Rev. X}\ }\textbf {\bibinfo {volume} {8}},\ \bibinfo {pages} {031020} (\bibinfo {year} {2018})},\ \Eprint {http://arxiv.org/abs/1711.11056} {arXiv:1711.11056}\BibitemShut {NoStop}%
\bibitem [{\citenamefont {Morelli}\ \emph {et~al.}(2022)\citenamefont {Morelli}, \citenamefont {Sauerwein}, \citenamefont {Skotiniotis},\ and\ \citenamefont {Friis}}]{MorelliSauerweinSkotiniotisFriis2022}%
  \BibitemOpen
  \bibfield  {author} {\bibinfo {author} {\bibfnamefont {Simon}\ \bibnamefont {Morelli}}, \bibinfo {author} {\bibfnamefont {David}\ \bibnamefont {Sauerwein}}, \bibinfo {author} {\bibfnamefont {Michalis}\ \bibnamefont {Skotiniotis}}, \ and\ \bibinfo {author} {\bibfnamefont {Nicolai}\ \bibnamefont {Friis}},\ }\emph {\enquote {\bibinfo {title} {Metrology-assisted entanglement distribution in noisy quantum networks},}\ }\href {https://doi.org/10.22331/q-2022-05-27-722} {\bibfield  {journal} {\bibinfo  {journal} {{Quantum}}\ }\textbf {\bibinfo {volume} {6}},\ \bibinfo {pages} {722} (\bibinfo {year} {2022})},\ \Eprint {http://arxiv.org/abs/2110.15627} {arXiv:2110.15627}\BibitemShut {NoStop}%
\bibitem [{\citenamefont {Yamasaki}\ \emph {et~al.}(2022)\citenamefont {Yamasaki}, \citenamefont {Morelli}, \citenamefont {Miethlinger}, \citenamefont {Bavaresco}, \citenamefont {Friis},\ and\ \citenamefont {Huber}}]{YamasakiMorelliMiethlingerBavarescoFriisHuber2022}%
  \BibitemOpen
  \bibfield  {author} {\bibinfo {author} {\bibfnamefont {Hayata}\ \bibnamefont {Yamasaki}}, \bibinfo {author} {\bibfnamefont {Simon}\ \bibnamefont {Morelli}}, \bibinfo {author} {\bibfnamefont {Markus}\ \bibnamefont {Miethlinger}}, \bibinfo {author} {\bibfnamefont {Jessica}\ \bibnamefont {Bavaresco}}, \bibinfo {author} {\bibfnamefont {Nicolai}\ \bibnamefont {Friis}}, \ and\ \bibinfo {author} {\bibfnamefont {Marcus}\ \bibnamefont {Huber}},\ }\emph {\enquote {\bibinfo {title} {{Activation of genuine multipartite entanglement: beyond the single-copy paradigm of entanglement characterisation}},}\ }\href {https://doi.org/10.22331/q-2022-04-25-695} {\bibfield  {journal} {\bibinfo  {journal} {Quantum}\ }\textbf {\bibinfo {volume} {6}},\ \bibinfo {pages} {695} (\bibinfo {year} {2022})},\ \Eprint {http://arxiv.org/abs/2106.01372} {arXiv:2106.01372}\BibitemShut {NoStop}%
\bibitem [{\citenamefont {Palazuelos}\ and\ \citenamefont {de~Vicente}(2022)}]{PalazuelosDeVicente2022}%
  \BibitemOpen
  \bibfield  {author} {\bibinfo {author} {\bibfnamefont {Carlos}\ \bibnamefont {Palazuelos}}\ and\ \bibinfo {author} {\bibfnamefont {Julio~I.}\ \bibnamefont {de~Vicente}},\ }\emph {\enquote {\bibinfo {title} {{Genuine multipartite entanglement of quantum states in the multiple-copy scenario}},}\ }\href {https://doi.org/10.22331/q-2022-06-13-735} {\bibfield  {journal} {\bibinfo  {journal} {Quantum}\ }\textbf {\bibinfo {volume} {6}},\ \bibinfo {pages} {735} (\bibinfo {year} {2022})},\ \Eprint {http://arxiv.org/abs/2201.08694} {arXiv:2201.08694}\BibitemShut {NoStop}%
\bibitem [{\citenamefont {Baksov{\'a}}\ \emph {et~al.}(2025)\citenamefont {Baksov{\'a}}, \citenamefont {Leskovjanov{\'a}}, \citenamefont {Mi\v{s}ta~Jr.}, \citenamefont {Agudelo},\ and\ \citenamefont {Friis}}]{BaksovaLeskovjanovaMistaAgudeloFriis2025}%
  \BibitemOpen
  \bibfield  {author} {\bibinfo {author} {\bibfnamefont {Kl{\'a}ra}\ \bibnamefont {Baksov{\'a}}}, \bibinfo {author} {\bibfnamefont {Olga}\ \bibnamefont {Leskovjanov{\'a}}}, \bibinfo {author} {\bibfnamefont {Ladislav}\ \bibnamefont {Mi\v{s}ta~Jr.}}, \bibinfo {author} {\bibfnamefont {Elizabeth}\ \bibnamefont {Agudelo}}, \ and\ \bibinfo {author} {\bibfnamefont {Nicolai}\ \bibnamefont {Friis}},\ }\emph {\enquote {\bibinfo {title} {{Multi-copy activation of genuine multipartite entanglement in continuous-variable systems}},}\ }\href {https://doi.org/10.22331/q-2025-04-09-1699} {\bibfield  {journal} {\bibinfo  {journal} {Quantum}\ }\textbf {\bibinfo {volume} {9}},\ \bibinfo {pages} {1699} (\bibinfo {year} {2025})},\ \Eprint {http://arxiv.org/abs/2312.16570} {arXiv:2312.16570}\BibitemShut {NoStop}%
\bibitem [{\citenamefont {Weinbrenner}\ \emph {et~al.}(2024)\citenamefont {Weinbrenner}, \citenamefont {Baksov{\'a}}, \citenamefont {Denker}, \citenamefont {Morelli}, \citenamefont {Yu}, \citenamefont {Friis},\ and\ \citenamefont {G{\"u}hne}}]{WeinbrennerEtAl2024}%
  \BibitemOpen
  \bibfield  {author} {\bibinfo {author} {\bibfnamefont {Lisa~T.}\ \bibnamefont {Weinbrenner}}, \bibinfo {author} {\bibfnamefont {Kl{\'a}ra}\ \bibnamefont {Baksov{\'a}}}, \bibinfo {author} {\bibfnamefont {Sophia}\ \bibnamefont {Denker}}, \bibinfo {author} {\bibfnamefont {Simon}\ \bibnamefont {Morelli}}, \bibinfo {author} {\bibfnamefont {Xiao-Dong}\ \bibnamefont {Yu}}, \bibinfo {author} {\bibfnamefont {Nicolai}\ \bibnamefont {Friis}}, \ and\ \bibinfo {author} {\bibfnamefont {Otfried}\ \bibnamefont {G{\"u}hne}},\ }\href@noop {} {\emph {\enquote {\bibinfo {title} {{Superactivation and Incompressibility of Genuine Multipartite Entanglement}},}\ }}\Eprint {http://arxiv.org/abs/2412.18331} {arXiv:2412.18331} [quant-ph] (\bibinfo {year} {2024})\BibitemShut {NoStop}%
\bibitem [{foo()}]{footnote:MeasurementSettings}%
  \BibitemOpen
  \href@noop {} {}\bibinfo {note} {Here and throughout the paper, we use the term \textit{(local) measurement settings} to refer to the set of measurement-basis adjustments that an experiment must implement to measure all required observables, following the definition in Refs.\;\cite{TothGuehne2005a, TothGuehne2005b}. For example, measuring $X^{\otimes n} = (HZH)^{\otimes n}$ and $Z^{\otimes n}$ requires two measurement settings, whereas measuring $X\otimes Z\otimes\identity$ and $\identity\otimes Z\otimes X$ requires only one setting. Although this definition is the most commonly used one in the literature, the meaning of \textit{measurement settings} can vary depending on the experimental platforms. For instance, in optical photonic experiments, it may refer to the number of combinations of basis states in local measurement bases (see, e.g., Ref.\;\cite{FriisVitaglianoMalikHuber2019}). In contrast, in microwave photonic systems (see, e.g.,
  Refs.\;\cite{besse_2020_realizingdeterministicsource,ferreira_2024_deterministicgenerationmultidimensional,OsullivanEtAl2024,sunada_2024_efficienttomographymicrowave}), all measurement outcomes are obtained within a single measurement setting via heterodyne detection, with observables reconstructed through data post-processing.}\BibitemShut {Stop}%
\bibitem [{\citenamefont {Sharma}\ \emph {et~al.}(2025)\citenamefont {Sharma}, \citenamefont {Goodenough}, \citenamefont {Borregaard}, \citenamefont {Rozpędek},\ and\ \citenamefont {Helsen}}]{SharmaGoodenoughBorregaard2025}%
  \BibitemOpen
  \bibfield  {author} {\bibinfo {author} {\bibfnamefont {Hemant}\ \bibnamefont {Sharma}}, \bibinfo {author} {\bibfnamefont {Kenneth}\ \bibnamefont {Goodenough}}, \bibinfo {author} {\bibfnamefont {Johannes}\ \bibnamefont {Borregaard}}, \bibinfo {author} {\bibfnamefont {Filip}\ \bibnamefont {Rozpędek}}, \ and\ \bibinfo {author} {\bibfnamefont {Jonas}\ \bibnamefont {Helsen}},\ }\href@noop {} {\emph {\enquote {\bibinfo {title} {{Minimizing the number of edges in LC-equivalent graph states}},}\ }}\Eprint {http://arxiv.org/abs/2506.00292} {arXiv:2506.00292} [quant-ph] (\bibinfo {year} {2025})\BibitemShut {NoStop}%
\bibitem [{\citenamefont {Shi}\ \emph {et~al.}(2025)\citenamefont {Shi}, \citenamefont {Chen}, \citenamefont {Chiribella},\ and\ \citenamefont {Zhao}}]{ShiChenChiribellaZhao2025}%
  \BibitemOpen
  \bibfield  {author} {\bibinfo {author} {\bibfnamefont {Fei}\ \bibnamefont {Shi}}, \bibinfo {author} {\bibfnamefont {Lin}\ \bibnamefont {Chen}}, \bibinfo {author} {\bibfnamefont {Giulio}\ \bibnamefont {Chiribella}}, \ and\ \bibinfo {author} {\bibfnamefont {Qi}~\bibnamefont {Zhao}},\ }\emph {\enquote {\bibinfo {title} {{Entanglement Detection Length of Multipartite Quantum States}},}\ }\href {https://doi.org/10.1103/PhysRevLett.134.050201} {\bibfield  {journal} {\bibinfo  {journal} {Phys. Rev. Lett.}\ }\textbf {\bibinfo {volume} {134}},\ \bibinfo {pages} {050201} (\bibinfo {year} {2025})},\ \Eprint {http://arxiv.org/abs/2401.03367} {arXiv:2401.03367}\BibitemShut {NoStop}%
\bibitem [{\citenamefont {Edmonds}(1965)}]{Edmonds1965}%
  \BibitemOpen
  \bibfield  {author} {\bibinfo {author} {\bibfnamefont {Jack}\ \bibnamefont {Edmonds}},\ }\emph {\enquote {\bibinfo {title} {{Paths, Trees, and Flowers}},}\ }\href {https://doi.org/10.4153/CJM-1965-045-4} {\bibfield  {journal} {\bibinfo  {journal} {Can. J. Math.}\ }\textbf {\bibinfo {volume} {17}},\ \bibinfo {pages} {449{\textendash}467} (\bibinfo {year} {1965})}\BibitemShut {NoStop}%
\bibitem [{\citenamefont {Micali}\ and\ \citenamefont {Vazirani}(1980)}]{MicaliVazirani1980}%
  \BibitemOpen
  \bibfield  {author} {\bibinfo {author} {\bibfnamefont {Silvio}\ \bibnamefont {Micali}}\ and\ \bibinfo {author} {\bibfnamefont {Vijay~V.}\ \bibnamefont {Vazirani}},\ }\emph {\enquote {\bibinfo {title} {An $o(\sqrt{|v|}\cdot|e|)$ algorithm for finding maximum matching in general graphs},}\ }in\ \href {https://doi.org/10.1109/SFCS.1980.12} {\emph {\bibinfo {booktitle} {21st Annual Symposium on Foundations of Computer Science (sfcs 1980)}}}\ (\bibinfo {year} {1980})\ p.\ \bibinfo {pages} {17{\textendash}27}\BibitemShut {NoStop}%
\bibitem [{\citenamefont {Majumdar}\ \emph {et~al.}(2020)\citenamefont {Majumdar}, \citenamefont {Hall},\ and\ \citenamefont {Ahmadi}}]{MajumdarHallAhmadi2019}%
  \BibitemOpen
  \bibfield  {author} {\bibinfo {author} {\bibfnamefont {Anirudha}\ \bibnamefont {Majumdar}}, \bibinfo {author} {\bibfnamefont {Georgina}\ \bibnamefont {Hall}}, \ and\ \bibinfo {author} {\bibfnamefont {Amir~Ali}\ \bibnamefont {Ahmadi}},\ }\emph {\enquote {\bibinfo {title} {{Recent Scalability Improvements for Semidefinite Programming with Applications in Machine Learning, Control, and Robotics}},}\ }\href {https://doi.org/10.1146/annurev-control-091819-074326} {\bibfield  {journal} {\bibinfo  {journal} {Annu. Rev. Control Robot. Auton. Syst.}\ }\textbf {\bibinfo {volume} {3}},\ \bibinfo {pages} {331{\textendash}360} (\bibinfo {year} {2020})},\ \Eprint {http://arxiv.org/abs/1908.05209} {arXiv:1908.05209}\BibitemShut {NoStop}%
\bibitem [{\citenamefont {Kono}\ \emph {et~al.}(2018)\citenamefont {Kono}, \citenamefont {Koshino}, \citenamefont {Tabuchi}, \citenamefont {Noguchi},\ and\ \citenamefont {Nakamura}}]{kono_2018_quantumnondemolitiondetection}%
  \BibitemOpen
  \bibfield  {author} {\bibinfo {author} {\bibfnamefont {Shingo}\ \bibnamefont {Kono}}, \bibinfo {author} {\bibfnamefont {Kazuki}\ \bibnamefont {Koshino}}, \bibinfo {author} {\bibfnamefont {Yutaka}\ \bibnamefont {Tabuchi}}, \bibinfo {author} {\bibfnamefont {Atsushi}\ \bibnamefont {Noguchi}}, \ and\ \bibinfo {author} {\bibfnamefont {Yasunobu}\ \bibnamefont {Nakamura}},\ }\emph {\enquote {\bibinfo {title} {Quantum non-demolition detection of an itinerant microwave photon},}\ }\href {https://doi.org/10.1038/s41567-018-0066-3} {\bibfield  {journal} {\bibinfo  {journal} {Nat. Phys.}\ }\textbf {\bibinfo {volume} {14}},\ \bibinfo {pages} {546{\textendash}549} (\bibinfo {year} {2018})},\ \Eprint {http://arxiv.org/abs/1711.05479} {arXiv:1711.05479}\BibitemShut {NoStop}%
\bibitem [{\citenamefont {Besse}\ \emph {et~al.}(2018)\citenamefont {Besse}, \citenamefont {Gasparinetti}, \citenamefont {Collodo}, \citenamefont {Walter}, \citenamefont {Kurpiers}, \citenamefont {Pechal}, \citenamefont {Eichler},\ and\ \citenamefont {Wallraff}}]{besse_2018_singleshotquantumnondemolition}%
  \BibitemOpen
  \bibfield  {author} {\bibinfo {author} {\bibfnamefont {Jean-Claude}\ \bibnamefont {Besse}}, \bibinfo {author} {\bibfnamefont {Simone}\ \bibnamefont {Gasparinetti}}, \bibinfo {author} {\bibfnamefont {Michele~C.}\ \bibnamefont {Collodo}}, \bibinfo {author} {\bibfnamefont {Theo}\ \bibnamefont {Walter}}, \bibinfo {author} {\bibfnamefont {Philipp}\ \bibnamefont {Kurpiers}}, \bibinfo {author} {\bibfnamefont {Marek}\ \bibnamefont {Pechal}}, \bibinfo {author} {\bibfnamefont {Christopher}\ \bibnamefont {Eichler}}, \ and\ \bibinfo {author} {\bibfnamefont {Andreas}\ \bibnamefont {Wallraff}},\ }\emph {\enquote {\bibinfo {title} {{Single-Shot Quantum Nondemolition Detection of Individual Itinerant Microwave Photons}},}\ }\href {https://doi.org/10.1103/PhysRevX.8.021003} {\bibfield  {journal} {\bibinfo  {journal} {Phys. Rev. X}\ }\textbf {\bibinfo {volume} {8}},\ \bibinfo {pages} {021003} (\bibinfo {year} {2018})},\ \Eprint {http://arxiv.org/abs/1711.11569} {arXiv:1711.11569}\BibitemShut {NoStop}%
\bibitem [{\citenamefont {Balembois}\ \emph {et~al.}(2024)\citenamefont {Balembois}, \citenamefont {Travesedo}, \citenamefont {Pallegoix}, \citenamefont {May}, \citenamefont {Villiers}, \citenamefont {Est{\`e}ve}, \citenamefont {Vion}, \citenamefont {Bertet},\ and\ \citenamefont {Flurin}}]{balembois_2024_cyclicallyoperatedmicrowave}%
  \BibitemOpen
  \bibfield  {author} {\bibinfo {author} {\bibfnamefont {L{\'e}o}\ \bibnamefont {Balembois}}, \bibinfo {author} {\bibfnamefont {Jaime}\ \bibnamefont {Travesedo}}, \bibinfo {author} {\bibfnamefont {Louis}\ \bibnamefont {Pallegoix}}, \bibinfo {author} {\bibfnamefont {Eric}\ \bibnamefont {May}, \bibfnamefont {Alexandreand~Billaud}}, \bibinfo {author} {\bibfnamefont {Marius}\ \bibnamefont {Villiers}}, \bibinfo {author} {\bibfnamefont {Daniel}\ \bibnamefont {Est{\`e}ve}}, \bibinfo {author} {\bibfnamefont {Denis}\ \bibnamefont {Vion}}, \bibinfo {author} {\bibfnamefont {Patrice}\ \bibnamefont {Bertet}}, \ and\ \bibinfo {author} {\bibfnamefont {Emmanuel}\ \bibnamefont {Flurin}},\ }\emph {\enquote {\bibinfo {title} {Cyclically operated microwave single-photon counter with sensitivity of {$10^{-22}\,\mathrm{W}/\sqrt{\mathrm{{\protect{H}}z}}$}},}\ }\href {https://doi.org/10.1103/PhysRevApplied.21.014043} {\bibfield  {journal} {\bibinfo  {journal} {Phys. Rev. Appl.}\ }\textbf {\bibinfo {volume} {21}},\ \bibinfo {pages}
  {014043} (\bibinfo {year} {2024})},\ \Eprint {http://arxiv.org/abs/2307.03614} {arXiv:2307.03614}\BibitemShut {NoStop}%
\bibitem [{\citenamefont {Sch{\"o}n}\ \emph {et~al.}(2005)\citenamefont {Sch{\"o}n}, \citenamefont {Solano}, \citenamefont {Verstraete}, \citenamefont {Cirac},\ and\ \citenamefont {Wolf}}]{Schon2005}%
  \BibitemOpen
  \bibfield  {author} {\bibinfo {author} {\bibfnamefont {C.}~\bibnamefont {Sch{\"o}n}}, \bibinfo {author} {\bibfnamefont {E.}~\bibnamefont {Solano}}, \bibinfo {author} {\bibfnamefont {F.}~\bibnamefont {Verstraete}}, \bibinfo {author} {\bibfnamefont {J.~I.}\ \bibnamefont {Cirac}}, \ and\ \bibinfo {author} {\bibfnamefont {M.~M.}\ \bibnamefont {Wolf}},\ }\emph {\enquote {\bibinfo {title} {Sequential {{Generation}} of {{Entangled Multiqubit States}}},}\ }\href {\doibase 10.1103/PhysRevLett.95.110503} {\bibfield  {journal} {\bibinfo  {journal} {Phys. Rev. Lett.}\ }\textbf {\bibinfo {volume} {95}},\ \bibinfo {pages} {110503} (\bibinfo {year} {2005})}\BibitemShut {NoStop}%
\bibitem [{Note1()}]{Note1}%
  \BibitemOpen
  \bibinfo {note} {We employed the numerical SDP solver MOSEK in MATLAB.}\BibitemShut {Stop}%
\bibitem [{\citenamefont {Chen}\ \emph {et~al.}(2014)\citenamefont {Chen}, \citenamefont {Lu},\ and\ \citenamefont {Vishwanath}}]{ChenLuVishwanath2014}%
  \BibitemOpen
  \bibfield  {author} {\bibinfo {author} {\bibfnamefont {Xie}\ \bibnamefont {Chen}}, \bibinfo {author} {\bibfnamefont {Yuan-Ming}\ \bibnamefont {Lu}}, \ and\ \bibinfo {author} {\bibfnamefont {Ashvin}\ \bibnamefont {Vishwanath}},\ }\emph {\enquote {\bibinfo {title} {Symmetry-protected topological phases from decorated domain walls},}\ }\href {https://doi.org/10.1038/ncomms4507} {\bibfield  {journal} {\bibinfo  {journal} {Nat. Commun.}\ }\textbf {\bibinfo {volume} {5}},\ \bibinfo {pages} {3507} (\bibinfo {year} {2014})},\ \Eprint {http://arxiv.org/abs/1303.4301} {arXiv:1303.4301}\BibitemShut {NoStop}%
\bibitem [{\citenamefont {Tantivasadakarn}\ \emph {et~al.}(2023)\citenamefont {Tantivasadakarn}, \citenamefont {Thorngren}, \citenamefont {Vishwanath},\ and\ \citenamefont {Verresen}}]{TantivasadakarnThorngrenVishwanathVerresen2023}%
  \BibitemOpen
  \bibfield  {author} {\bibinfo {author} {\bibfnamefont {Nathanan}\ \bibnamefont {Tantivasadakarn}}, \bibinfo {author} {\bibfnamefont {Ryan}\ \bibnamefont {Thorngren}}, \bibinfo {author} {\bibfnamefont {Ashvin}\ \bibnamefont {Vishwanath}}, \ and\ \bibinfo {author} {\bibfnamefont {Ruben}\ \bibnamefont {Verresen}},\ }\emph {\enquote {\bibinfo {title} {{Pivot Hamiltonians as generators of symmetry and entanglement}},}\ }\href {https://doi.org/10.21468/SciPostPhys.14.2.012} {\bibfield  {journal} {\bibinfo  {journal} {SciPost Phys.}\ }\textbf {\bibinfo {volume} {14}},\ \bibinfo {pages} {012} (\bibinfo {year} {2023})},\ \Eprint {http://arxiv.org/abs/2110.07599} {arXiv:2110.07599}\BibitemShut {NoStop}%
\bibitem [{\citenamefont {Asadian}\ \emph {et~al.}(2016)\citenamefont {Asadian}, \citenamefont {Erker}, \citenamefont {Huber},\ and\ \citenamefont {Kl{\"o}ckl}}]{AsadianErkerHuberKlockl2016}%
  \BibitemOpen
  \bibfield  {author} {\bibinfo {author} {\bibfnamefont {Ali}\ \bibnamefont {Asadian}}, \bibinfo {author} {\bibfnamefont {Paul}\ \bibnamefont {Erker}}, \bibinfo {author} {\bibfnamefont {Marcus}\ \bibnamefont {Huber}}, \ and\ \bibinfo {author} {\bibfnamefont {Claude}\ \bibnamefont {Kl{\"o}ckl}},\ }\emph {\enquote {\bibinfo {title} {{H}eisenberg-{W}eyl observables: {B}loch vectors in phase space},}\ }\href {https://doi.org/10.1103/PhysRevA.94.010301} {\bibfield  {journal} {\bibinfo  {journal} {Phys. Rev. A}\ }\textbf {\bibinfo {volume} {94}},\ \bibinfo {pages} {010301(R)} (\bibinfo {year} {2016})},\ \Eprint {http://arxiv.org/abs/1512.05640} {arXiv:1512.05640}\BibitemShut {NoStop}%
\bibitem [{\citenamefont {Watrous}(2018)}]{Watrous2018}%
  \BibitemOpen
  \bibfield  {author} {\bibinfo {author} {\bibfnamefont {John}\ \bibnamefont {Watrous}},\ }\href {\doibase 10.1017/9781316848142} {\emph {\bibinfo {title} {{The Theory of Quantum Information}}}}\ (\bibinfo  {publisher} {Cambridge University Press},\ \bibinfo {address} {Cambridge, U.K.},\ \bibinfo {year} {2018})\BibitemShut {NoStop}%
\bibitem [{\citenamefont {Thomas}\ \emph {et~al.}(2022)\citenamefont {Thomas}, \citenamefont {Ruscio}, \citenamefont {Morin},\ and\ \citenamefont {Rempe}}]{ThomasRuscioMorinRempe2022}%
  \BibitemOpen
  \bibfield  {author} {\bibinfo {author} {\bibfnamefont {Philip}\ \bibnamefont {Thomas}}, \bibinfo {author} {\bibfnamefont {Leonardo}\ \bibnamefont {Ruscio}}, \bibinfo {author} {\bibfnamefont {Olivier}\ \bibnamefont {Morin}}, \ and\ \bibinfo {author} {\bibfnamefont {Gerhard}\ \bibnamefont {Rempe}},\ }\emph {\enquote {\bibinfo {title} {{Efficient generation of entangled multiphoton graph states from a single atom}},}\ }\href {https://doi.org/10.1038/s41586-022-04987-5} {\bibfield  {journal} {\bibinfo  {journal} {Nature}\ }\textbf {\bibinfo {volume} {608}},\ \bibinfo {pages} {677{\textendash}681} (\bibinfo {year} {2022})},\ \Eprint {http://arxiv.org/abs/2205.12736} {arXiv:2205.12736}\BibitemShut {NoStop}%
\bibitem [{\citenamefont {Thomas}\ \emph {et~al.}(2024)\citenamefont {Thomas}, \citenamefont {Ruscio}, \citenamefont {Morin},\ and\ \citenamefont {Rempe}}]{ThomasRuscioMorinRempe2024}%
  \BibitemOpen
  \bibfield  {author} {\bibinfo {author} {\bibfnamefont {Philip}\ \bibnamefont {Thomas}}, \bibinfo {author} {\bibfnamefont {Leonardo}\ \bibnamefont {Ruscio}}, \bibinfo {author} {\bibfnamefont {Olivier}\ \bibnamefont {Morin}}, \ and\ \bibinfo {author} {\bibfnamefont {Gerhard}\ \bibnamefont {Rempe}},\ }\emph {\enquote {\bibinfo {title} {{Fusion of deterministically generated photonic graph states}},}\ }\href {https://doi.org/10.1038/s41586-024-07357-5} {\bibfield  {journal} {\bibinfo  {journal} {Nature}\ }\textbf {\bibinfo {volume} {629}},\ \bibinfo {pages} {567{\textendash}572} (\bibinfo {year} {2024})},\ \Eprint {http://arxiv.org/abs/2403.11950} {arXiv:2403.11950}\BibitemShut {NoStop}%
\bibitem [{\citenamefont {Peres}(1996)}]{Peres1996}%
  \BibitemOpen
  \bibfield  {author} {\bibinfo {author} {\bibfnamefont {Asher}\ \bibnamefont {Peres}},\ }\emph {\enquote {\bibinfo {title} {{Separability Criterion for Density Matrices}},}\ }\href {https://doi.org/10.1103/PhysRevLett.77.1413} {\bibfield  {journal} {\bibinfo  {journal} {Phys. Rev. Lett.}\ }\textbf {\bibinfo {volume} {77}},\ \bibinfo {pages} {1413} (\bibinfo {year} {1996})},\ \Eprint {http://arxiv.org/abs/quant-ph/9604005} {arXiv:quant-ph/9604005}\BibitemShut {NoStop}%
\bibitem [{\citenamefont {Horodecki}\ \emph {et~al.}(1996)\citenamefont {Horodecki}, \citenamefont {Horodecki},\ and\ \citenamefont {Horodecki}}]{HorodeckiMPR1996}%
  \BibitemOpen
  \bibfield  {author} {\bibinfo {author} {\bibfnamefont {Micha{\l}}\ \bibnamefont {Horodecki}}, \bibinfo {author} {\bibfnamefont {Pawe{\l}}\ \bibnamefont {Horodecki}}, \ and\ \bibinfo {author} {\bibfnamefont {Ryszard}\ \bibnamefont {Horodecki}},\ }\emph {\enquote {\bibinfo {title} {Separability of mixed states: necessary and sufficient conditions},}\ }\href {https://doi.org/10.1016/S0375-9601(96)00706-2} {\bibfield  {journal} {\bibinfo  {journal} {Phys. Lett. A}\ }\textbf {\bibinfo {volume} {223}},\ \bibinfo {pages} {25} (\bibinfo {year} {1996})},\ \Eprint {http://arxiv.org/abs/quant-ph/9605038} {arXiv:quant-ph/9605038}\BibitemShut {NoStop}%
\bibitem [{\citenamefont {Paraschiv}\ \emph {et~al.}(2018)\citenamefont {Paraschiv}, \citenamefont {Miklin}, \citenamefont {Moroder},\ and\ \citenamefont {G\"uhne}}]{ParaschivMiklinMoroderGuehne2018}%
  \BibitemOpen
  \bibfield  {author} {\bibinfo {author} {\bibfnamefont {Marius}\ \bibnamefont {Paraschiv}}, \bibinfo {author} {\bibfnamefont {Nikolai}\ \bibnamefont {Miklin}}, \bibinfo {author} {\bibfnamefont {Tobias}\ \bibnamefont {Moroder}}, \ and\ \bibinfo {author} {\bibfnamefont {Otfried}\ \bibnamefont {G\"uhne}},\ }\emph {\enquote {\bibinfo {title} {Proving genuine multiparticle entanglement from separable nearest-neighbor marginals},}\ }\href {https://doi.org/10.1103/PhysRevA.98.062102} {\bibfield  {journal} {\bibinfo  {journal} {Phys. Rev. A}\ }\textbf {\bibinfo {volume} {98}},\ \bibinfo {pages} {062102} (\bibinfo {year} {2018})},\ \Eprint {http://arxiv.org/abs/1705.02696} {arXiv:1705.02696}\BibitemShut {NoStop}%
\bibitem [{\citenamefont {Lu}\ \emph {et~al.}(2007)\citenamefont {Lu}, \citenamefont {Zhou}, \citenamefont {G{\"u}hne}, \citenamefont {Gao}, \citenamefont {Zhang}, \citenamefont {Yuan}, \citenamefont {Goebel}, \citenamefont {Yang},\ and\ \citenamefont {Pan}}]{LuZhouGuehneEtAl2007}%
  \BibitemOpen
  \bibfield  {author} {\bibinfo {author} {\bibfnamefont {Chao-Yang}\ \bibnamefont {Lu}}, \bibinfo {author} {\bibfnamefont {Xiao-Qi}\ \bibnamefont {Zhou}}, \bibinfo {author} {\bibfnamefont {Otfried}\ \bibnamefont {G{\"u}hne}}, \bibinfo {author} {\bibfnamefont {Wei-Bo}\ \bibnamefont {Gao}}, \bibinfo {author} {\bibfnamefont {Jin}\ \bibnamefont {Zhang}}, \bibinfo {author} {\bibfnamefont {Zhen-Sheng}\ \bibnamefont {Yuan}}, \bibinfo {author} {\bibfnamefont {Alexander}\ \bibnamefont {Goebel}}, \bibinfo {author} {\bibfnamefont {Tao}\ \bibnamefont {Yang}}, \ and\ \bibinfo {author} {\bibfnamefont {Jian-Wei}\ \bibnamefont {Pan}},\ }\emph {\enquote {\bibinfo {title} {Experimental entanglement of six photons in graph states},}\ }\href {https://doi.org/10.1038/nphys507} {\bibfield  {journal} {\bibinfo  {journal} {Nat. Phys.}\ }\textbf {\bibinfo {volume} {3}},\ \bibinfo {pages} {91{\textendash}95} (\bibinfo {year} {2007})},\ \Eprint {http://arxiv.org/abs/quant-ph/0609130} {arXiv:quant-ph/0609130}\BibitemShut {NoStop}%
\bibitem [{\citenamefont {Flammia}\ and\ \citenamefont {Liu}(2011)}]{FlammiaLiu2011}%
  \BibitemOpen
  \bibfield  {author} {\bibinfo {author} {\bibfnamefont {Steven~T.}\ \bibnamefont {Flammia}}\ and\ \bibinfo {author} {\bibfnamefont {Yi-Kai}\ \bibnamefont {Liu}},\ }\emph {\enquote {\bibinfo {title} {{Direct Fidelity Estimation from Few Pauli Measurements}},}\ }\href {https://doi.org/10.1103/PhysRevLett.106.230501} {\bibfield  {journal} {\bibinfo  {journal} {Phys. Rev. Lett.}\ }\textbf {\bibinfo {volume} {106}},\ \bibinfo {pages} {230501} (\bibinfo {year} {2011})},\ \Eprint {http://arxiv.org/abs/1104.4695} {arXiv:1104.4695}\BibitemShut {NoStop}%
\bibitem [{\citenamefont {van Enk}\ and\ \citenamefont {Beenakker}(2012)}]{VanEnkBeenakker2012}%
  \BibitemOpen
  \bibfield  {author} {\bibinfo {author} {\bibfnamefont {Steven~J.}\ \bibnamefont {van Enk}}\ and\ \bibinfo {author} {\bibfnamefont {Carlo W.~J.}\ \bibnamefont {Beenakker}},\ }\emph {\enquote {\bibinfo {title} {{Measuring $\mathrm{Tr}{\ensuremath{\rho}}^{n}$ on Single Copies of $\ensuremath{\rho}$ Using Random Measurements}},}\ }\href {https://doi.org/10.1103/PhysRevLett.108.110503} {\bibfield  {journal} {\bibinfo  {journal} {Phys. Rev. Lett.}\ }\textbf {\bibinfo {volume} {108}},\ \bibinfo {pages} {110503} (\bibinfo {year} {2012})},\ \Eprint {http://arxiv.org/abs/1112.1027} {arXiv:1112.1027}\BibitemShut {NoStop}%
\bibitem [{\citenamefont {Brydges}\ \emph {et~al.}(2019)\citenamefont {Brydges}, \citenamefont {Elben}, \citenamefont {Jurcevic}, \citenamefont {Vermersch}, \citenamefont {Maier}, \citenamefont {Lanyon}, \citenamefont {Zoller}, \citenamefont {Blatt},\ and\ \citenamefont {Roos}}]{BrydgesElbenJurcevicVermerschEtAl2019}%
  \BibitemOpen
  \bibfield  {author} {\bibinfo {author} {\bibfnamefont {Tiff}\ \bibnamefont {Brydges}}, \bibinfo {author} {\bibfnamefont {Andreas}\ \bibnamefont {Elben}}, \bibinfo {author} {\bibfnamefont {Petar}\ \bibnamefont {Jurcevic}}, \bibinfo {author} {\bibfnamefont {Beno\^{i}t}\ \bibnamefont {Vermersch}}, \bibinfo {author} {\bibfnamefont {Christine}\ \bibnamefont {Maier}}, \bibinfo {author} {\bibfnamefont {Ben~P.}\ \bibnamefont {Lanyon}}, \bibinfo {author} {\bibfnamefont {Peter}\ \bibnamefont {Zoller}}, \bibinfo {author} {\bibfnamefont {Rainer}\ \bibnamefont {Blatt}}, \ and\ \bibinfo {author} {\bibfnamefont {Christian~F.}\ \bibnamefont {Roos}},\ }\emph {\enquote {\bibinfo {title} {{Probing R\'{e}nyi entanglement entropy via randomized measurements}},}\ }\href {https://doi.org/10.1126/science.aau4963} {\bibfield  {journal} {\bibinfo  {journal} {Science}\ }\textbf {\bibinfo {volume} {364}},\ \bibinfo {pages} {260{\textendash}263} (\bibinfo {year} {2019})},\ \Eprint {http://arxiv.org/abs/1806.05747}
  {arXiv:1806.05747}\BibitemShut {NoStop}%
\bibitem [{\citenamefont {Elben}\ \emph {et~al.}(2019)\citenamefont {Elben}, \citenamefont {Vermersch}, \citenamefont {Roos},\ and\ \citenamefont {Zoller}}]{ElbenVermerschRoosZoller2019}%
  \BibitemOpen
  \bibfield  {author} {\bibinfo {author} {\bibfnamefont {Andreas}\ \bibnamefont {Elben}}, \bibinfo {author} {\bibfnamefont {Beno\^{i}t}\ \bibnamefont {Vermersch}}, \bibinfo {author} {\bibfnamefont {Christian~F.}\ \bibnamefont {Roos}}, \ and\ \bibinfo {author} {\bibfnamefont {Peter}\ \bibnamefont {Zoller}},\ }\emph {\enquote {\bibinfo {title} {{Statistical correlations between locally randomized measurements: A toolbox for probing entanglement in many-body quantum states}},}\ }\href {https://doi.org/10.1103/PhysRevA.99.052323} {\bibfield  {journal} {\bibinfo  {journal} {Phys. Rev. A}\ }\textbf {\bibinfo {volume} {99}},\ \bibinfo {pages} {052323} (\bibinfo {year} {2019})},\ \Eprint {http://arxiv.org/abs/1812.02624} {arXiv:1812.02624}\BibitemShut {NoStop}%
\bibitem [{\citenamefont {Lu}\ \emph {et~al.}(2018)\citenamefont {Lu}, \citenamefont {Zhao}, \citenamefont {Li}, \citenamefont {Yin}, \citenamefont {Yuan}, \citenamefont {Hung}, \citenamefont {Chen}, \citenamefont {Li}, \citenamefont {Liu}, \citenamefont {Peng}, \citenamefont {Liang}, \citenamefont {Ma}, \citenamefont {Chen},\ and\ \citenamefont {Pan}}]{LuZhaoLiYinEtAl2018}%
  \BibitemOpen
  \bibfield  {author} {\bibinfo {author} {\bibfnamefont {He}~\bibnamefont {Lu}}, \bibinfo {author} {\bibfnamefont {Qi}~\bibnamefont {Zhao}}, \bibinfo {author} {\bibfnamefont {Zheng-Da}\ \bibnamefont {Li}}, \bibinfo {author} {\bibfnamefont {Xu-Fei}\ \bibnamefont {Yin}}, \bibinfo {author} {\bibfnamefont {Xiao}\ \bibnamefont {Yuan}}, \bibinfo {author} {\bibfnamefont {Jui-Chen}\ \bibnamefont {Hung}}, \bibinfo {author} {\bibfnamefont {Luo-Kan}\ \bibnamefont {Chen}}, \bibinfo {author} {\bibfnamefont {Li}~\bibnamefont {Li}}, \bibinfo {author} {\bibfnamefont {Nai-Le}\ \bibnamefont {Liu}}, \bibinfo {author} {\bibfnamefont {Cheng-Zhi}\ \bibnamefont {Peng}}, \bibinfo {author} {\bibfnamefont {Yeong-Cherng}\ \bibnamefont {Liang}}, \bibinfo {author} {\bibfnamefont {Xiongfeng}\ \bibnamefont {Ma}}, \bibinfo {author} {\bibfnamefont {Yu-Ao}\ \bibnamefont {Chen}}, \ and\ \bibinfo {author} {\bibfnamefont {Jian-Wei}\ \bibnamefont {Pan}},\ }\emph {\enquote {\bibinfo {title} {Entanglement structure: Entanglement partitioning in
  multipartite systems and its experimental detection using optimizable witnesses},}\ }\href {\doibase 10.1103/PhysRevX.8.021072} {\bibfield  {journal} {\bibinfo  {journal} {Phys. Rev. X}\ }\textbf {\bibinfo {volume} {8}},\ \bibinfo {pages} {021072} (\bibinfo {year} {2018})},\ \Eprint {http://arxiv.org/abs/1711.01784} {arXiv:1711.01784}\BibitemShut {NoStop}%
\bibitem [{\citenamefont {Stamatelatos}\ and\ \citenamefont {Efraimidis}(2021)}]{StamatelatosEfraimidis2021}%
  \BibitemOpen
  \bibfield  {author} {\bibinfo {author} {\bibfnamefont {Giorgos}\ \bibnamefont {Stamatelatos}}\ and\ \bibinfo {author} {\bibfnamefont {Pavlos~S.}\ \bibnamefont {Efraimidis}},\ }\href@noop {} {\emph {\enquote {\bibinfo {title} {{Lexicographic Enumeration of Set Partitions}},}\ }}\Eprint {http://arxiv.org/abs/2105.07472} {arXiv:2105.07472} [cs.DM] (\bibinfo {year} {2021})\BibitemShut {NoStop}%
\bibitem [{Note2()}]{Note2}%
  \BibitemOpen
  \bibinfo {note} {Although the Micali-Vazirani algorithm is more efficient, the blossom algorithm is used more often in solving the maximum cardinality matching problem as it is easier to implement.}\BibitemShut {Stop}%
\bibitem [{\citenamefont {Knuth}(2011)}]{Knuth2011}%
  \BibitemOpen
  \bibfield  {author} {\bibinfo {author} {\bibfnamefont {Donald}\ \bibnamefont {Knuth}},\ }\href {https://elibrary.pearson.de/book/99.150005/9780133488876} {\emph {\bibinfo {title} {{Art of Computer Programming, Volume 4A, The Combinatorial Algorithms, Part 1}}}}\ (\bibinfo  {publisher} {Pearson Deutschland},\ \bibinfo {year} {2011})\BibitemShut {NoStop}%
\bibitem [{\citenamefont {Goldberg}\ \emph {et~al.}(1965)\citenamefont {Goldberg}, \citenamefont {Newman},\ and\ \citenamefont {Haynsworth}}]{AbramowitzStegun1965}%
  \BibitemOpen
  \bibfield  {author} {\bibinfo {author} {\bibfnamefont {Karl}\ \bibnamefont {Goldberg}}, \bibinfo {author} {\bibfnamefont {Morris}\ \bibnamefont {Newman}}, \ and\ \bibinfo {author} {\bibfnamefont {Emilie~V.}\ \bibnamefont {Haynsworth}},\ }\emph {\enquote {\bibinfo {title} {{Combinatorial Analysis}},}\ }in\ \href {https://digital.library.unt.edu/ark:/67531/metadc40302/m2/1/high_res_d/applmathser_55_1972_w.pdf} {\emph {\bibinfo {booktitle} {Handbook of Mathematical Functions: With Formulas, Graphs, and Mathematical Tables}}},\ \bibinfo {series and number} {Applied mathematics series},\ \bibinfo {editor} {edited by\ \bibinfo {editor} {\bibfnamefont {Milton}\ \bibnamefont {Abramowitz}}\ and\ \bibinfo {editor} {\bibfnamefont {Irene~A.}\ \bibnamefont {Stegun}}}\ (\bibinfo  {publisher} {Dover Publications},\ \bibinfo {year} {1965})\ p.\ \bibinfo {pages} {824\textendash825}\BibitemShut {NoStop}%
\bibitem [{\citenamefont {Hein}\ \emph {et~al.}(2005)\citenamefont {Hein}, \citenamefont {D{\"u}r}, \citenamefont {Eisert}, \citenamefont {Raussendorf}, \citenamefont {Van~den Nest},\ and\ \citenamefont {Briegel}}]{HeinDuerEisertRaussendorfVanDenNestBriegel2006}%
  \BibitemOpen
  \bibfield  {author} {\bibinfo {author} {\bibfnamefont {Marc}\ \bibnamefont {Hein}}, \bibinfo {author} {\bibfnamefont {Wolfgang}\ \bibnamefont {D{\"u}r}}, \bibinfo {author} {\bibfnamefont {Jens}\ \bibnamefont {Eisert}}, \bibinfo {author} {\bibfnamefont {Robert}\ \bibnamefont {Raussendorf}}, \bibinfo {author} {\bibfnamefont {Maarten}\ \bibnamefont {Van~den Nest}}, \ and\ \bibinfo {author} {\bibfnamefont {Hans.~J.}\ \bibnamefont {Briegel}},\ }\emph {\enquote {\bibinfo {title} {{Entanglement in graph states and its applications}},}\ }\href {https://doi.org/10.3254/978-1-61499-018-5-1152} {\bibfield  {journal} {\bibinfo  {journal} {Proceedings of the International School of Physics "Enrico Fermi"}\ }\textbf {\bibinfo {volume} {162}},\ \bibinfo {pages} {115{\textendash}218} (\bibinfo {year} {2005})},\ \Eprint {http://arxiv.org/abs/quant-ph/0602096} {arXiv:quant-ph/0602096}\BibitemShut {NoStop}%
\bibitem [{\citenamefont {Boyd}\ and\ \citenamefont {Vandenberghe}(2004)}]{BoydVandenbergheBook2004}%
  \BibitemOpen
  \bibfield  {author} {\bibinfo {author} {\bibfnamefont {Stephen}\ \bibnamefont {Boyd}}\ and\ \bibinfo {author} {\bibfnamefont {Lieven}\ \bibnamefont {Vandenberghe}},\ }\href {https://doi.org/10.1017/CBO9780511804441} {\emph {\bibinfo {title} {{Convex Optimization}}}},\ \bibinfo {edition} {7th}\ ed.\ (\bibinfo  {publisher} {Cambridge University Press},\ \bibinfo {address} {Cambridge, U.K.},\ \bibinfo {year} {2004})\BibitemShut {NoStop}%
\bibitem [{\citenamefont {Dicke}(1954)}]{Dicke1954}%
  \BibitemOpen
  \bibfield  {author} {\bibinfo {author} {\bibfnamefont {Robert~Henry}\ \bibnamefont {Dicke}},\ }\emph {\enquote {\bibinfo {title} {{Coherence in Spontaneous Radiation Processes}},}\ }\href {https://doi.org/10.1103/PhysRev.93.99} {\bibfield  {journal} {\bibinfo  {journal} {Phys. Rev.}\ }\textbf {\bibinfo {volume} {93}},\ \bibinfo {pages} {99} (\bibinfo {year} {1954})}\BibitemShut {NoStop}%
\bibitem [{\citenamefont {Casta\~nos}\ \emph {et~al.}(2006)\citenamefont {Casta\~nos}, \citenamefont {L\'opez-Pe\~na}, \citenamefont {Hirsch},\ and\ \citenamefont {L\'opez-Moreno}}]{CastanosLopez-PenaHirsch2006}%
  \BibitemOpen
  \bibfield  {author} {\bibinfo {author} {\bibfnamefont {Octavio}\ \bibnamefont {Casta\~nos}}, \bibinfo {author} {\bibfnamefont {Ram\'on}\ \bibnamefont {L\'opez-Pe\~na}}, \bibinfo {author} {\bibfnamefont {Jorge~G.}\ \bibnamefont {Hirsch}}, \ and\ \bibinfo {author} {\bibfnamefont {Enrique}\ \bibnamefont {L\'opez-Moreno}},\ }\emph {\enquote {\bibinfo {title} {{Classical and quantum phase transitions in the Lipkin-Meshkov-Glick model}},}\ }\href {https://doi.org/10.1103/PhysRevB.74.104118} {\bibfield  {journal} {\bibinfo  {journal} {Phys. Rev. B}\ }\textbf {\bibinfo {volume} {74}},\ \bibinfo {pages} {104118} (\bibinfo {year} {2006})}\BibitemShut {NoStop}%
\bibitem [{\citenamefont {Prevedel}\ \emph {et~al.}(2009)\citenamefont {Prevedel}, \citenamefont {Cronenberg}, \citenamefont {Tame}, \citenamefont {Paternostro}, \citenamefont {Walther}, \citenamefont {Kim},\ and\ \citenamefont {Zeilinger}}]{PrevedelCronenbergEtAl2009}%
  \BibitemOpen
  \bibfield  {author} {\bibinfo {author} {\bibfnamefont {Robert}\ \bibnamefont {Prevedel}}, \bibinfo {author} {\bibfnamefont {G.}~\bibnamefont {Cronenberg}}, \bibinfo {author} {\bibfnamefont {M.~S.}\ \bibnamefont {Tame}}, \bibinfo {author} {\bibfnamefont {Mauro}\ \bibnamefont {Paternostro}}, \bibinfo {author} {\bibfnamefont {Philip}\ \bibnamefont {Walther}}, \bibinfo {author} {\bibfnamefont {Myungshik}\ \bibnamefont {Kim}}, \ and\ \bibinfo {author} {\bibfnamefont {Anton}\ \bibnamefont {Zeilinger}},\ }\emph {\enquote {\bibinfo {title} {{Experimental Realization of Dicke States of up to Six Qubits for Multiparty Quantum Networking}},}\ }\href {https://doi.org/10.1103/PhysRevLett.103.020503} {\bibfield  {journal} {\bibinfo  {journal} {Phys. Rev. Lett.}\ }\textbf {\bibinfo {volume} {103}},\ \bibinfo {pages} {020503} (\bibinfo {year} {2009})},\ \Eprint {http://arxiv.org/abs/0903.2212} {arXiv:0903.2212}\BibitemShut {NoStop}%
\bibitem [{\citenamefont {Eichler}\ \emph {et~al.}(2011)\citenamefont {Eichler}, \citenamefont {Bozyigit}, \citenamefont {Lang}, \citenamefont {Steffen}, \citenamefont {Fink},\ and\ \citenamefont {Wallraff}}]{eichler_2011_experimentalstatetomography}%
  \BibitemOpen
  \bibfield  {author} {\bibinfo {author} {\bibfnamefont {Christopher}\ \bibnamefont {Eichler}}, \bibinfo {author} {\bibfnamefont {Deniz}\ \bibnamefont {Bozyigit}}, \bibinfo {author} {\bibfnamefont {Christian}\ \bibnamefont {Lang}}, \bibinfo {author} {\bibfnamefont {Lars}\ \bibnamefont {Steffen}}, \bibinfo {author} {\bibfnamefont {Johannes}\ \bibnamefont {Fink}}, \ and\ \bibinfo {author} {\bibfnamefont {Andreas}\ \bibnamefont {Wallraff}},\ }\emph {\enquote {\bibinfo {title} {{Experimental State Tomography of Itinerant Single Microwave Photons}},}\ }\href {https://doi.org/10.1103/PhysRevLett.106.220503} {\bibfield  {journal} {\bibinfo  {journal} {Phys. Rev. Lett.}\ }\textbf {\bibinfo {volume} {106}},\ \bibinfo {pages} {220503} (\bibinfo {year} {2011})},\ \Eprint {http://arxiv.org/abs/1011.6668} {arXiv:1011.6668}\BibitemShut {NoStop}%
\bibitem [{\citenamefont {Eichler}\ \emph {et~al.}(2012)\citenamefont {Eichler}, \citenamefont {Bozyigit},\ and\ \citenamefont {Wallraff}}]{eichler_2012_characterizingquantummicrowave}%
  \BibitemOpen
  \bibfield  {author} {\bibinfo {author} {\bibfnamefont {Christopher}\ \bibnamefont {Eichler}}, \bibinfo {author} {\bibfnamefont {Deniz}\ \bibnamefont {Bozyigit}}, \ and\ \bibinfo {author} {\bibfnamefont {Andreas}\ \bibnamefont {Wallraff}},\ }\emph {\enquote {\bibinfo {title} {Characterizing quantum microwave radiation and its entanglement with superconducting qubits using linear detectors},}\ }\href {https://doi.org/10.1103/PhysRevA.86.032106} {\bibfield  {journal} {\bibinfo  {journal} {Phys. Rev. A}\ }\textbf {\bibinfo {volume} {86}},\ \bibinfo {pages} {032106} (\bibinfo {year} {2012})},\ \Eprint {http://arxiv.org/abs/1206.3405} {arXiv:1206.3405}\BibitemShut {NoStop}%
\bibitem [{\citenamefont {Da~Silva}\ \emph {et~al.}(2010)\citenamefont {Da~Silva}, \citenamefont {Bozyigit}, \citenamefont {Wallraff},\ and\ \citenamefont {Blais}}]{dasilva_2010_schemesobservationphoton}%
  \BibitemOpen
  \bibfield  {author} {\bibinfo {author} {\bibfnamefont {Marcus~P.}\ \bibnamefont {Da~Silva}}, \bibinfo {author} {\bibfnamefont {Deniz}\ \bibnamefont {Bozyigit}}, \bibinfo {author} {\bibfnamefont {Andreas}\ \bibnamefont {Wallraff}}, \ and\ \bibinfo {author} {\bibfnamefont {Alexandre}\ \bibnamefont {Blais}},\ }\emph {\enquote {\bibinfo {title} {{Schemes for the observation of photon correlation functions in circuit QED with linear detectors}},}\ }\href {https://doi.org/10.1103/PhysRevA.82.043804} {\bibfield  {journal} {\bibinfo  {journal} {Phys. Rev. A}\ }\textbf {\bibinfo {volume} {82}},\ \bibinfo {pages} {043804} (\bibinfo {year} {2010})},\ \Eprint {http://arxiv.org/abs/1004.3987} {arXiv:1004.3987}\BibitemShut {NoStop}%
\bibitem [{Man()}]{ManuelGithub}%
  \BibitemOpen
  \href@noop {} {}\bibinfo {note} {Manuel H. Mu{\~n}oz-Arias, GitHub. Retrieved from \url{https://github.com/manuelmz/sequential\_gen\_photonic\_graph\_states} on \today.}\BibitemShut {Stop}%
\bibitem [{\citenamefont {Bonnans}\ and\ \citenamefont {Shapiro}(2000)}]{BonnansShapiro2000}%
  \BibitemOpen
  \bibfield  {author} {\bibinfo {author} {\bibfnamefont {J.~Fr\'{e}d\'{e}ric}\ \bibnamefont {Bonnans}}\ and\ \bibinfo {author} {\bibfnamefont {Alexander}\ \bibnamefont {Shapiro}},\ }\href {https://doi.org/10.1007/978-1-4612-1394-9} {\emph {\bibinfo {title} {Perturbation Analysis of Optimization Problems}}},\ Springer Series in Operations Research and Financial Engineering\ (\bibinfo  {publisher} {Springer New York},\ \bibinfo {year} {2000})\BibitemShut {NoStop}%
\end{thebibliography}%


\hypertarget{sec:appendix}
\appendix

\section*{Appendix: Supplemental Information}

\renewcommand{\thesubsection}{A.\Roman{subsection}}
\renewcommand{\thesection}{}
\setcounter{equation}{0}
\numberwithin{equation}{section}
\setcounter{figure}{0}
\renewcommand{\theequation}{A.\arabic{equation}}
\renewcommand{\thefigure}{A.\arabic{figure}}

\vspace*{-1mm}
\subsection{Review of previous methods}\label{sec:review}
\vspace*{-1mm}

{\noindent}To provide some context for the methods we derive here, let us briefly review some of the methods that have previously been used to detect multipartite entanglement, particularly for states close to graph states. The main idea behind most of these witnesses/criteria is to construct a suitable linear witness, that is, a Hermitian operator $W$ such that $\trace(W\rho)\!\geqslant\!0$ for all $k$-separable $\rho$ and $\trace(W\sigma)\!<\!0$ for at least one $k$-inseparable state $\sigma$. For example, $W=\frac{1}{2}\identity - \ketbra{\text{GHZ}_n}{\text{GHZ}_n}$, where $\ket{\text{GHZ}_n}=\frac{1}{\sqrt{2}}(\ket{0}^{\otimes n}+\ket{1}^{\otimes n})$, is a GME witness for all $n$-qubit states since $\bra{\text{GHZ}_n}\rho\ket{\text{GHZ}_n}\leqslant\frac{1}{2}$ for any 2-separable state~$\rho$~\cite{TothGuehne2005a}. By the supporting-hyperplane theorem (or the more general Hahn-Banach theorem), there exists a hyperplane (with a corresponding witness operator $W$) that separates any $k$-inseparable state from the set of $k$-separable states.
For witnesses constructed using the stabilizer formalism, the witness operator $W$ is typically chosen to be some linear combination of a subset of a graph state’s stabilizers. To achieve optimal performance with this type of witnesses, the underlying graph state should be the one that is closest to the quantum state for which one wants to certify entanglement. 

As we will see in this section, most witnesses in the literature require the measurement of up to $O(n)$-body observables, which makes them infeasible to implement in experiments that can only measure at most $O(1)$-body observables at a time (see, e.g., Ref.\;\cite{OsullivanEtAl2024} and Sec.\;\ref{sec:Experiment&Simulations}).
We also emphasize that previous studies we are aware of primarily focus on minimizing the number of local measurement settings~\cite{footnote:MeasurementSettings}, whereas our focus in this paper is on limiting the number of particles involved in each measured observable.
Furthermore, many existing witnesses cannot distinguish different levels of $k$-inseparability for $2\!\leqslant\!k\!\leqslant\!n$ (i.e., they can only tell if the state is GME or not). For moderately noisy states, such witnesses may discard useful information about the entanglement structure if they cannot tell apart, for instance, fully separable and tri-separable states, even though the latter may still possess significant entanglement. 

We begin by reviewing several well-known witnesses based on the stabilizer formalism. In Refs.\;\cite{TothGuehne2005a,TothGuehne2005b}, a family of witnesses was proposed to detect GME or to rule out full separability. In general, these witnesses require measuring $O(n)$-body stabilizers, with the exception of one particular type of GME witness, see~\cite[Eq.\;(45)]{TothGuehne2005b}, which does not detect $k$-inseparability that is non-GME. In Sec.\;\ref{sec:GMEkSepCriteria}, we show that this specific type of GME witness is a special case (and also corresponds to the worst-case bound) of a broader family of GME and $k$-inseparability criteria proposed in this work. Some of the witnesses in Refs.\;\cite{TothGuehne2005a,TothGuehne2005b} have since been used to certify GME in experimentally generated GHZ and cluster states of up to 14 photons/qubits\;\cite{ThomasRuscioMorinRempe2022}, as well as in ring-graph and certain tree-graph states of up to 8 qubits\;\cite{ThomasRuscioMorinRempe2024}. Subsequently, Ref.\;\cite{ZhouZhaoYuanMa2019} introduced a family of general partition-inseparability (including GME and $k$-inseparability) witnesses that depend on the chromatic number of the underlying graph corresponding to the graph state. For cluster states, which correspond to 2-colorable graphs, the required measurements involve up to $\lceil\frac{n}{2}\rceil$-body stabilizers. For general graph states, the witnesses also need measurements of up to $O(n)$-body stabilizers.

Another class of witnesses for multipartite entanglement is based on the positive partial-transpose (PPT) criterion\;\cite{Peres1996, HorodeckiMPR1996}. For example, Refs.\:\cite{JungnitschMoroderGuehne2011a, JungnitschMoroderGuehne2011b} proposed a variety of fully decomposable and fully PPT witnesses that detect GME (but not general $k$-inseparability), with the optimal ones obtained via SDP. Interestingly, the optimal fully decomposable witnesses for graph states can be constructed solely from their stabilizers. The same references also provide stabilizer-based witnesses that do not require optimization. In general, these witnesses require measuring up to $n$-body observables. Building on this idea, Ref.\;\cite{ParaschivMiklinMoroderGuehne2018} demonstrated that fully decomposable GME witnesses can be found using SDP optimization with a large set of 2-body observables. In principle, this method can be extended to include higher-body observables. However, such SDP optimizations become computationally infeasible (in both memory and runtime) for large particle numbers $n$, due to the exponential growth of the Hilbert-space dimension with~$n$. 

An alternative approach to witnessing multipartite entanglement involves measuring an observable whose expectation value provides a lower bound on the fidelity with respect to a target GME graph state. If this fidelity lower bound exceeds a certain threshold (e.g., $F(\rho_\text{GME},\ket{\text{GHZ}_n})>0.5$, as discussed before), the state is certified to be GME/$k$-inseparable. Reference\;\cite{GuehneLuGaoPan2007} shows how to construct such lower bounds using carefully chosen local measurements and establishes their connection to GME witnesses. This technique improves the noise tolerance of certain witnesses from Refs.\;\cite{TothGuehne2005a,TothGuehne2005b} by introducing different positive-operator relaxations and local filters. These methods have been applied in several experiments to lower bound state fidelities and certify GME in various graph states\;\cite{LuZhouGuehneEtAl2007, ThomasRuscioMorinRempe2022}. Although the fidelity bounds and associated GME witnesses introduced in this work do not require many local measurement settings, they can still involve observables that act on all qubits. 

From the perspective of error mitigation, Ref.\;\cite{TiurevSoerensen2022} improved upon the methods of Refs.\;\cite{TothGuehne2005a,TothGuehne2005b} for lower bounding the fidelities of cluster states by incorporating specific additional stabilizer terms into the fidelity-estimation observable. These extra terms are designed to address many of the experimentally relevant Pauli errors, from low- to high-order, thereby tightening the estimated fidelity lower bound for cluster states while keeping the number of local measurement settings linear in the number of qubits. This method also requires measuring observables that act on up to~$n$ qubits and is tailored specifically for cluster states (and potentially other 2-colorable graph states).

Although not originally developed for entanglement certification, Ref.\;\cite{FlammiaLiu2011} introduced an efficient method for estimating the fidelity between a pure state $\ket{\psi}$ and an arbitrary state $\rho$ by measuring subsets of Pauli operators sampled according to the Pauli distribution of $\ket{\psi}$, i.e., $p(k) = \frac{1}{d}|\bra{\psi}P_k\ket{\psi}|^2$ where the $P_k$ for $k\in\{0,1,\ldots,d^2-1\}$ are the (tensor products of) Pauli operators (including the identity $P_0=\identity$). For graph/stabilizer states $\ket{\psi}$, only $2^n$ stabilizer terms need to be sampled, as all other Pauli terms have zero coefficients. However, with this approach, one cannot restrict to measuring only up to $O(1)$-body Paulis.

There also exist GME and $k$-inseparability witnesses that are independent of the stabilizer formalism. These are based on inequalities that depend on specific diagonal and off-diagonal elements of the density matrix, where these elements are related by partition-specific permutations\;\cite{HuberMintertGabrielHiesmayr2010,GabrielHiesmayrHuber2010,HuberErkerSchimpfGabrielHiesmayr2011}. Accessing these off-diagonal elements typically requires measuring $n$-body observables (see, e.g., Refs.\;\cite{HuberErkerSchimpfGabrielHiesmayr2011,GuehneLuGaoPan2007}), rendering such witnesses impractical in the experimental scenarios considered in this work. 

Furthermore, there exist GME and $k$-inseparability criteria based on the Ky Fan norms of different index-permuted density matrices (Theorem 3 of Ref.\;\cite{LiuTangDaiLiuChenMa2022}), which require full state tomography to evaluate exactly. 
In the same work, they also propose methods to lower bound these norms via convex optimizations constrained by various permutation moments, which can be estimated using shadow tomography~\cite{Aaronson2018,HuangKuengPreskill2020}, randomized measurements~\cite{VanEnkBeenakker2012,BrydgesElbenJurcevicVermerschEtAl2019,ElbenVermerschRoosZoller2019}, or a hybrid protocol{\textemdash}each using only one fresh copy of the state per measurement round. However, such lower bounds are generally loose, and applying these methods requires measuring at least $2^{n}\!-\!2$ permutation moments (each requiring $O(2^{1.187n})$ measurements) and solving $2^{n\!-\!1}\!-\!1$ optimization problems, resulting in substantial measurement and computational overhead. In contrast, our method requires measuring at most $O(n^2)$ of the graph-state stabilizers and, when using the looser bound, only a single optimization over the parameter $\gamma$.

Finally, we highlight several witnesses that are state-independent and require only simple measurement observables and settings. Reference\;\cite{LuZhaoLiYinEtAl2018} proposed an optimizable, single-parameter family of $k$-inseparability witnesses, which were used to study the entanglement structure of various 8-photon states in their experiment. These witnesses rely on measurements of full $n$-body observables $X^{\otimes n}$ and $Z^{\otimes n}$. In contrast, the entanglement witnesses from  Ref.\;\cite{FriisMartyEtal2018} involve only 2-body observables{\textemdash}$X^{\otimes 2}$, $Y^{\otimes 2}$, and $Z^{\otimes 2}${\textemdash}acting on different pairs of qubits, and were successfully used to detect GME in a system of 20 trapped-ion qubits. However, these witnesses are limited in that they do not detect non-GME $k$-inseparability and are ineffective for high-fidelity graph states.


\subsection{Proof of Propositions~\ref{prop:3PaulisBound} and~\ref{obs:kPartition}}\label{app:ProofPropObs}

{\noindent}In this appendix, we will prove Propositions~\ref{prop:3PaulisBound} and~\ref{obs:kPartition} which are used in the proof of Theorem~\ref{theorem:kSepBound}.

\begin{repproposition}{prop:3PaulisBound}
    Let $\{A_i\}_{i=1}^m$ and $\{B_i\}_{i=1}^m$ be subsets of orthonormal self-adjoint bases of $d_1\times d_1$ and $d_2\times d_2$ complex matrices, respectively, such that $\{A_i,A_j\} = 2\delta_{ij}\identity_{d_1}$ and $\{B_i,B_j\} = 2\delta_{ij}\identity_{d_2}$
    for all $i,j$. Then, the expectation values of $A_i$ and $B_i$ with respect to any quantum states $\rho\in\mathcal{D}(\mathbb{C}^{d_1})$ and $\sigma\in\mathcal{D}(\mathbb{C}^{d_2})$ must satisfy $\sum_{i=1}^m |\langle A_i\rangle_\rho\langle B_i\rangle_\sigma| \leqslant 1$.
\end{repproposition}
\begin{proof}
    By the Cauchy-Schwarz inequality and Lemma \ref{lemma:ACT}, 
    \begin{align}
        \sum_{i=1}^m |\langle A_i\rangle\langle B_i\rangle| &\leqslant \sqrt{\sum_{i=1}^m \langle A_i\rangle^2} \cdot \sqrt{\sum_{i=1}^m \langle B_i\rangle^2}\\
        &\leqslant \sqrt{\max_i\{\langle A_i^2\rangle\}} \cdot \sqrt{\max_i\{\langle B_i^2\rangle}\} \leqslant 1,\nonumber
    \end{align}
    where we omit the subscripts $\rho$ and $\sigma$. The last inequality follows from the fact that $A_i^2 =\identity_{d_1}$, $B_i^2=\identity_{d_2}\;\forall\;i$, and $\{A_i,A_j\}=\mathbf{0}_{d_1}$, $\{B_i,B_j\}=\mathbf{0}_{d_2}\;\forall\;i\neq j \Rightarrow \mathcal{K}=0$.
\end{proof}

\begin{repproposition}{obs:kPartition}
    Any $k$-cut of a connected graph must remove at least $k-1$ edges that are shared among at least $k$ vertices.
\end{repproposition}

\begin{proof}
    Any $k$-partition of a graph must have at least one vertex in each partition. Consider a representative graph which replaces all vertices of the full graph in each partition with one vertex and all edges that connect two different partitions with an edge between the two corresponding representative vertices. For the full graph to be connected, the $k$ vertices in the representative graph must be connected as well and the minimal number of edges in any connected $k$-vertex graph is $k-1$.
\end{proof}


\subsection{Proof of Observation \ref{obs:gammaBetween0and1}}\label{app:ProofObsGamma01}

{\noindent}In this appendix, we provide the full details for the examples where the optimal criterion for $r$-inseparability in~\eqref{eq:OptimalCriteria} is achieved with $\gamma\in(0,1)$. First, we observe that all Cthulhu graphs parametrized by the integer $r\geqslant3$ (see Fig.\;\ref{fig:CthulhuGraphTheory}) has two different optimal $r$-cuts associated to $\gamma\leqslant \left(\lfloor\frac{r}{2}\rfloor\!-\!1\right)\!/\lfloor\frac{r}{2}\rfloor \eqqcolon \gamma^*_r$ and $\gamma\geqslant \gamma^*_r$. As illustrated in Fig.\;\ref{fig:CthulhuGamma01}, the two optimal $r$-cuts correspond to $r$-partition subgraphs $\overline{G}\suptiny{1}{-2}{(r)}_-$ and $\overline{G}\suptiny{1}{-2}{(r)}_+$, which have different number of vertices $|\overline{V}\suptiny{1}{-2}{(r)}_\pm|$ and edges in the maximum-cardinality matching $|\overline{E}\suptiny{1}{-2}{(r)}_{\pm\text{,mcm}}|$. Hence, the reduction term $R^\gamma_r\coloneqq \min_{\text{all }r\text{-cuts}}\!\left( \gamma |\overline{V}\suptiny{1}{-2}{(r)}|+(1-\gamma)|\overline{E}\suptiny{1}{-2}{(r)}_\text{mcm}|\right)$ for the full range of $\gamma\in[0,1]$ takes the following form
\begin{align}\label{eq:Rgammaktransition}
    R^\gamma_r = \begin{cases}
        r\gamma+1 \qquad\;\text{ for } \gamma\leqslant \gamma^*_r,\\[1mm]
        \lceil\frac{r}{2}\rceil\gamma+\lfloor\frac{r}{2}\rfloor \;\text{ for } \gamma\geqslant \gamma^*_r.
    \end{cases}
\end{align}
\begin{figure*}[th!]
    \centering
    \includegraphics[width=0.72\linewidth]{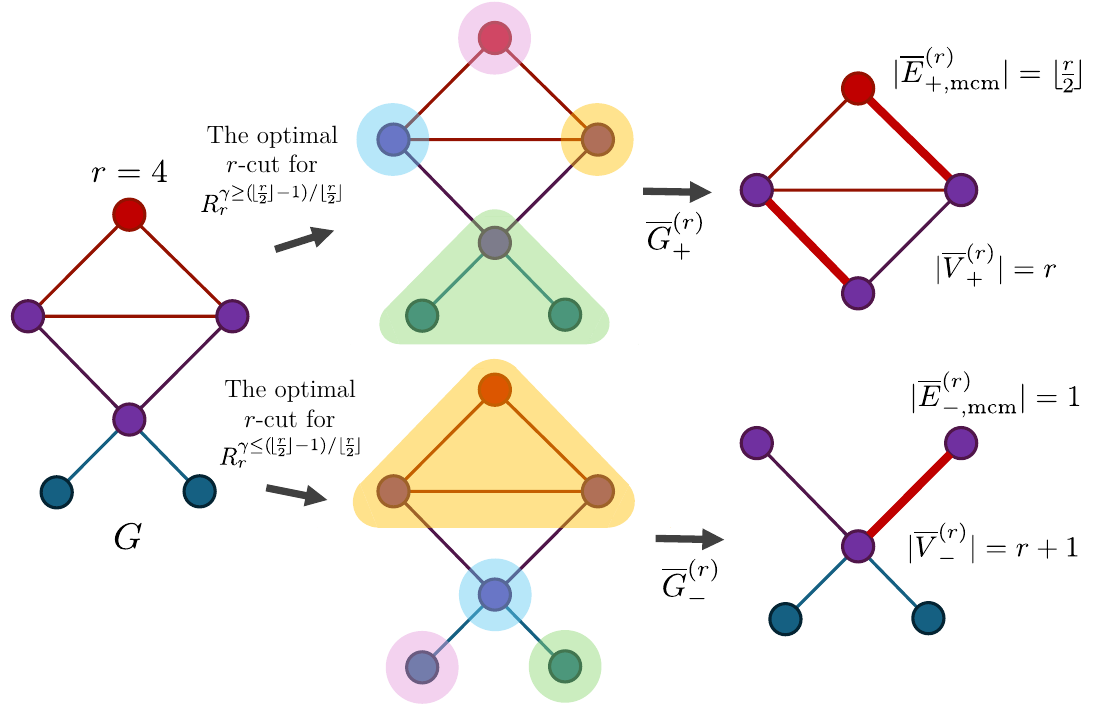}
    \caption{Illustration of an example corresponding to an optimal $k$-inseparability criterion with $\gamma\in(0,1)$. For a Cthulhu graph $G$ of $r=4$ (see Fig.\;\ref{fig:CthulhuGraphTheory} for the meaning of $r$), the optimal $r$-cuts of $R^\gamma_k\coloneqq \min_{\text{all }k\text{-cuts}}\!\left( \gamma |\overline{V}\suptiny{1}{-2}{(k)}|+(1-\gamma)|\overline{E}\suptiny{1}{-2}{(k)}_\text{mcm}|\right)$ for different values of $\gamma\in[0,1]$ are shown in the middle column where the color shadings represent partitioning of the graph into $r$ different parts. In general, for $r=4$ and $r\geqslant6$, the optimal $r$-cut for $\gamma\!\geqslant\!\left(\lfloor\frac{r}{2}\rfloor\!-\!1\right)\!/\lfloor\frac{r}{2}\rfloor$ (here $=\!\frac{1}{2}$) results in the ``head'' subgraph $\overline{G}\suptiny{1}{-2}{(r)}_+$, whereas the optimal $r$-cut for $\gamma\!\leqslant\!\left(\lfloor\frac{r}{2}\rfloor\!-\!1\right)\!/\lfloor\frac{r}{2}\rfloor$ results in the ``tentacles'' subgraph $\overline{G}\suptiny{1}{-2}{(r)}_-$. The transition of the optimal $r$-cut in $\gamma$ from 0 to 1 also leads to a transition in $R^\gamma_r$ [see Eq.\;\eqref{eq:Rgammaktransition}].}\label{fig:CthulhuGamma01}
\end{figure*}

Now, suppose that $\rho=\frac{p}{2^n}\identity + (1-p)\ketbra{G}{G}$, then $\mathcal{W}^{\gamma}_G(\rho) = (1-p) (n+\gamma|E|)$. Let us define $f(\gamma)\coloneqq \mathcal{W}^{\gamma}_G(\rho) - \gamma |E| - n+ R^\gamma_k$ [the L.H.S. of~\eqref{eq:OptimalCriteria} without the maximization], then
\begin{align}\label{eq:fGammaTransition}
    f(\gamma) = \begin{cases}
        \gamma(r-p|E|)-pn+1 \qquad\quad\;\!\text{ for } \gamma\leqslant \gamma^*_r,\\[1mm]
        -\gamma(p|E|-\lceil\frac{r}{2}\rceil)-pn+\lfloor\frac{r}{2}\rfloor \;\text{ for } \gamma\geqslant \gamma^*_r.
    \end{cases}
\end{align}
It is easy to see that when
\begin{equation}\label{ineq:CthulhuWhiteNoiseUpLowBounds}
    \frac{\lceil r/2\rceil}{|E|}\,<\,p\,<\,\frac{r}{|E|},
\end{equation}
the function $f$ reaches its unique maximum at $\gamma^*_r$. In order to certify $r$-inseparability of the state $\rho$, we also need $f(\gamma^*_r)>0$, which happens when
\begin{equation}\label{ineq:CthulhuPositiveBound}
    p \,<\, \frac{(r+1)\lfloor\frac{r}{2}\rfloor -r}{n\lfloor\frac{r}{2}\rfloor + |E|(\lfloor\frac{r}{2}\rfloor-1)} \,\eqqcolon\, p^\text{max}_{\gamma^*_r}.
\end{equation}
By combining Eqs.\;\eqref{ineq:CthulhuWhiteNoiseUpLowBounds} and \eqref{ineq:CthulhuPositiveBound}, we see that if the white-noise ratio satisfies
\begin{equation}\label{ineq:CthuhluFinalBounds}
    \frac{\lceil r/2\rceil}{|E|}\,<\,p\,<\,p^\text{max}_{\gamma^*_r}\leqslant\frac{r}{|E|},
\end{equation}
then we have found the desirable examples for Observation \ref{obs:gammaBetween0and1}. Finally, using the fact that $n=2(r-1)$ and $|E|=\frac{1}{2}r(r-1)+1$ for Cthulhu graphs, we can verify that the inequalities in Eq.\;\eqref{ineq:CthuhluFinalBounds} can hold for $r=4$ and $r\geqslant6$, thereby giving the final expression in Eq.\;\eqref{ineq:CthuhluMainTextBounds}.


\subsection{Algorithm for calculating the criterion upper bound}\label{app:algorithm_kSepBound}

{\noindent}This appendix provides an algorithm (Algorithm \ref{alg:boundReductionTerm}) for calculating $R^\gamma_k\coloneqq \min_{\text{all }k\text{-cuts}}\!\left( \gamma |\overline{V}\suptiny{1}{-2}{(k)}|+(1-\gamma)|\overline{E}\suptiny{1}{-2}{(k)}_\text{mcm}|\right)$. It uses (i) Algorithm Y from Ref.\;\cite{StamatelatosEfraimidis2021} (see Algorithm \ref{alg:kPart_AlgY}) for enumerating all unique $k$ partitions of~$n$ items, and (ii) the blossom algorithm\;\cite{Edmonds1965} or the Micali-Vazirani algorithm\;\cite{MicaliVazirani1980} for finding the maximum-cardinality matching as subroutines. 
In Python, one can use the function \texttt{max\_weight\_matching} from the NetworkX package, which implements the blossom algorithm \footnote{Although the Micali-Vazirani algorithm is more efficient, the blossom algorithm is used more often in solving the maximum cardinality matching problem as it is easier to implement.}. Note that the adjacency list of a graph $G$ is defined to be $A_G=\{1:N(1), \ldots, n:N(n)\}$ where $N(i)$ is the neighbourhood of the vertex $i$.

\begin{algorithm}\label{alg:boundReductionTerm}
        \caption{Compute $R^\gamma_k$}
	\SetKwInOut{Input}{input}
        \SetKwInOut{Output}{output}
        \Input{$n$ (total number of qubits/vertices in $G$),\\ 
        $k$ (number of non-empty blocks in the partition), \\
        $A_G$ (adjacency list of graph $G$),\\
        $\gamma\in[0,1]$}
        \Output{$y = R^\gamma_k \coloneqq \min_{\text{all }k\text{-cuts}}\left(\gamma |\overline{V}\suptiny{1}{-2}{(k)}|\!+\!(1\!-\!\gamma)|\overline{E}\suptiny{1}{-2}{(k)}_\text{mcm}|\right)$}
	\Begin{
	    Set $\vec{a} \leftarrow [0,\ldots,0]\in\{0,\ldots,k-1\}^n$\\
            Set $y \leftarrow n$ and $f_\text{end} \leftarrow 0$\\
            \While{$f_\text{end} = 0$}{
            ($\vec{a}$, $f_\text{end}$) $\leftarrow$ Algorithm \ref{alg:kPart_AlgY} with inputs ($n,k,\vec{a}$)\\
            Set $A_{\overline{G}\suptiny{1}{-2}{(k)}}\leftarrow A_G$ (copy adjacency list of $G$)\\
            
            \For{$i$ from $1$ to $n$}{
                \For{{$j\in N(i)$}}{
                    \If{$a_i = a_j$}{Remove $j$ from $A_{\overline{G}\suptiny{1}{-2}{(k)}}[i]$}
                }
                \If{$|A_{\overline{G}\suptiny{1}{-2}{(k)}}[i]|=0$}{Remove $i$ from $A_{\overline{G}\suptiny{1}{-2}{(k)}}$}
            }
            Set $|\overline{V}\suptiny{1}{-2}{(k)}| \leftarrow |A_{\overline{G}\suptiny{1}{-2}{(k)}}|$ and $\overline{E}\suptiny{1}{-2}{(k)}_\text{mcm} \leftarrow$ Blossom/ Micali-Vazirani algorithm with input $A_{\overline{G}\suptiny{1}{-2}{(k)}}$\\
            Set $x\leftarrow \gamma |\overline{V}\suptiny{1}{-2}{(k)}|+(1-\gamma)|\overline{E}\suptiny{1}{-2}{(k)}_\text{mcm}|$\\
            \If{$x < y$}{Set $y\leftarrow x$}
	    }
        \Return $y$
    }
\end{algorithm}

To make Algorithm \ref{alg:kPart_AlgY} more intuitive, let us go through the basic ideas behind the algorithm for enumerating all $k$-partitions of~$n$ objects from Ref.\;\cite{StamatelatosEfraimidis2021}. The main idea behind the algorithm is to sequentially generate an array $\vec{a}$ that represents a unique $k$-partition in a non-repetitive manner. In Knuth's notation\;\cite{Knuth2011}, each $k$-partition of~$n$ objects is represented by an $n$-dimensional array $\vec{a}$ with $a_i$ equals to the label $\alpha\in\{0,\ldots,k-1\}$ of the block to which the $i$-th object belongs (i.e., the $i$-th and the $j$-th objects belong to the same block if and only if $a_i=a_j$). To avoid two different arrays representing the same partition (e.g., $0011$ and $1100$ both represent the same bipartition), $\vec{a}$ should obey the \textit{restricted growth} constraint\;\cite{Knuth2011} such that
\begin{align}
    a_1 \,=\, 0\text{\;\!,\; } a_{j+1}\,\leqslant\, 1+\max(a_1,\ldots,a_j) \text{ for }1\!\leqslant \!j \!<\! n. \label{eq:restricted_growth}
\end{align}

In the following algorithm (Algorithm Y from Ref.\;\cite{StamatelatosEfraimidis2021}), it takes a $(\leqslant k)$-partition input array $\vec{a}$ that obeys Eq.\;\eqref{eq:restricted_growth} and outputs the next restricted growth array corresponding to a $k$-partition in the sequence. If we start with an initial array $\vec{a}_0 = [0,\ldots,0]$, by feeding the output from the previous run of the algorithm as the input to the next run, we can recursively generate all possible $k$-partitions of~$n$ objects. The total number of $k$-partitions is given by the \textit{Stirling number of the second kind}\;\cite[§24.1.4]{AbramowitzStegun1965}
\begin{align}
    S(n,k) \,=\, \sum_{j=0}^k \frac{(-1)^{k-j}j^n}{(k-j)!j!},
\end{align}
which should coincide with the number of arrays Algorithm \ref{alg:kPart_AlgY} with input $\vec{a}_0$ can generate recursively until $f_\text{end}$ hits 1.

\begin{algorithm}[ht!]\label{alg:kPart_AlgY}
        \caption{Enumerate all $k$ partitions (Alg.\hspace{1pt}Y\hspace{1pt}\cite{StamatelatosEfraimidis2021})}
	\SetKwInOut{Input}{input}
        \SetKwInOut{Output}{output}
        \Input{$n$ (total number of objects to be partitioned),\\ 
        $k$ (number of non-empty blocks in the partition), \\
        $\vec{a}$ (input array with $a_i\leqslant k-1$ satisfying \eqref{eq:restricted_growth})}
        \Output{$\vec{a}$ (output array representing a new $k$-partition)\\
        $f_\text{end}$ (output bit indicating if the algorithm has\\ 
        reached the end of enumerating all $k$-partitions)}
	\Begin{
	Set $\vec{b} \leftarrow [0,\ldots,0]\in\{0,\ldots,k-1\}^n$\\
        \For{$i$ from 2 to $n$}{Set $b_i \leftarrow \max(a_{i-1}, b_{i-1})$}
	Set $c\leftarrow n$ and $f_\text{end}\leftarrow 0$\\
	\While{$a_c = k-1$ or $a_c>b_c$}{
	    Set $c\leftarrow c-1$\\
	    \If{$c=1$}
	        {Set $f_\text{end}\leftarrow 1$}
	    }
        \If{$f_\text{end}=0$}{
        Set $a_c \leftarrow a_c +1$\\
        \For{$j$ from $c+1$ to $n$}{
        Set $a_j\leftarrow 0$ and $b_j \leftarrow \max(a_{j-1}, b_{j-1})$
        }
        }
        \If{$\max(a_n,b_n)\neq k-1$}{
            \For{$r$ from $1$ to $k-1$}{
            \eIf{$k-r > b_{n-r+1}$}{
                Set $a_{n-r+1} \leftarrow k-r$
            }{
                \Return $\vec{a}$, $f_\text{end}$
            }
            }
        }
        \Return $\vec{a}$, $f_\text{end}$
    }
\end{algorithm}


\subsection{Reducing the number or weight of required stabilizers with local Clifford transformations}\label{app:LC}

{\noindent}In this appendix we present an explicit example of how applying graph-local complementations, which correspond to applying local Clifford (LC) operations to graph states, can reduce the number or weight of stabilizers required by our criteria in Eq.\,\eqref{eq:criteriaDef}. We begin by reviewing the definition of local complementation and its relation to LC transformations of graph states. 

A local complementation of a graph $G$ at vertex $v$, denoted $\tau_v(G)$, removes (adds) all edges that (do not) already exist between the vertices in the neighborhood of $v$, $N(v)$. Formally, we have~\cite{VanDenNestDehaeneDeMoor2004,HeinDuerEisertRaussendorfVanDenNestBriegel2006}
\begin{align}
    \tau_v:G\mapsto \tau_v(G)\coloneqq G+N(v)\,.
\end{align}
For each local complementation, there is a corresponding LC operation $U^\tau_v(G)$ that transforms the graph state $\ket{G}$ to another graph state $\ket{\tau_v(G)}$~\cite{HeinDuerEisertRaussendorfVanDenNestBriegel2006}:
\begin{align}
    &\ket{\tau_v(G)} = U^\tau_v(G)\ket{G}\,,\\
    &U^\tau_v(G) \coloneqq e^{-i\frac{\pi}{4}X_v} \otimes_{\alpha\in N(v)} e^{i\frac{\pi}{4}Z_\alpha}\propto \sqrt{S_v}\,,
\end{align}
where $S_v$ is the stabilizer generator of $\ket{G}$ associated with the vertex $v$.

\begin{figure}[t!]
    \centering
    \includegraphics[width=0.85\linewidth]{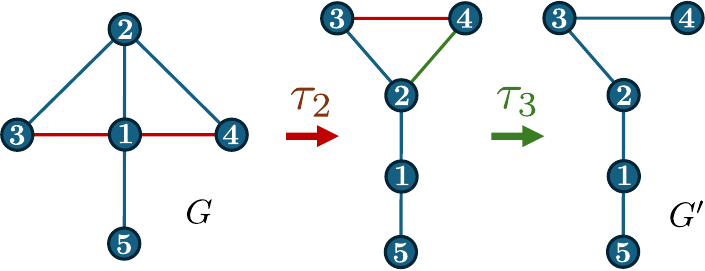}
    \caption{Illustration of how graph-local complementations (corresponding to local Clifford operations on graph states) can reduce the number and weight of stabilizers required by our criteria. The initial graph $G$ has 6 edges and maximum neighboring degrees 4 and 3, so $\mathcal{W}^{\gamma>0}_G$ requires 11 stabilizers with a maximum weight of 7. To reduce both quantities, we first apply a local complementation at vertex 2, $\tau_2$, which removes edges $(1,3)$ and $(1,4)$, and adds $(3,4)$. A subsequent local complementation at vertex 3, $\tau_3$, removes the edge $(2,4)$. The resulting graph $G'$ is a chain with 4 edges and maximum neighboring degrees 2 and 2, so $\mathcal{W}^{\gamma>0}_{G'}$ requires only 9 stabilizers with a reduced maximum weight of 4.}\label{fig:LC}
\end{figure}

Assume that we want to certify GME/$k$-inseparability of a state close to the graph state $\ket{G}$, which corresponds to the leftmost graph in Fig.\,\ref{fig:LC}. Since $G$ has 6 edges and maximum neighboring degrees 4 and 3, our criteria $\mathcal{W}^{\gamma>0}_G$ requires measuring 11 stabilizers with maximum weight 7. We can reduce both the number and maximum weight of the required stabilizers by first applying a local complementation at vertex 2, $\tau_2$. This removes edges $(1,3)$ and $(1,4)$, and simultaneously adds $(3,4)$. We then apply a subsequent local complementation at vertex 3, $\tau_3$, which removes the edge $(2,4)$. The resulting graph $G'$ is a 1D chain with 4 edges and maximum neighboring degrees 2 and 2, thereby reducing the required number of stabilizers to 9 and the maximum stabilizer weight to 4. In terms of LC operations, the two graph states are related by
\begin{align}
    \ket{G'} = \ket{\tau_2\circ\tau_3(G)} = U^\tau_3(\tau_2(G))U^\tau_2(G)\ket{G}\,.
\end{align}

As mentioned after Remark~\ref{rem:LU} in Sec.\;\ref{sec:GMEkSepCriteria}, we can define a new criterion $\widetilde{\mathcal{W}}^\gamma_{G'}$ for certifying GME/$k$-inseparability of states close to $\ket{G}$ by replacing the stabilizers of $\ket{G'}$ in Eq.\;\eqref{eq:criteriaDef} with their LC-conjugated counterparts:
\begin{align}
    &\widetilde{\mathcal{W}}^\gamma_{G'}(\rho) = \sum_{i\in V'}|\langle \widetilde{U}^\dagger S_i \widetilde{U}\rangle_\rho| + \gamma\!\!\!\sum_{(i,j)\in E'}|\langle \widetilde{U}^\dagger S_i S_j \widetilde{U}\rangle_\rho|, \nonumber\\
    &\widetilde{U} \coloneqq U^\tau_3(\tau_2(G))U^\tau_2(G)\,,
    \label{eq:LCconjCriteria}
\end{align}
so that $\widetilde{U}^\dagger S_i \widetilde{U}$ for $i\in V'$ and $\widetilde{U}^\dagger S_i S_j \widetilde{U}$ for $(i,j)\in E'$ are all stabilizers of the original graph state $\ket{G}$.
Since conjugating any local operator, such as a stabilizer, with local unitaries cannot increase its weight, the maximum weight of the 9 LC-conjugated stabilizers appearing in Eq.\,\eqref{eq:LCconjCriteria} remains 4.


\subsection{A brief review of SDP}\label{app:SDP}

{\noindent}The purpose of this appendix is to briefly review the key concepts and properties of SDP that we use in the main text. The objective of SDP is to minimize or maximize a linear function with respect to a positive semi-definite matrix subject to some linear equality or inequality constraints. The standard form of an SDP (primal) problem takes the following form\;\cite{BoydVandenbergheBook2004,Watrous2018}: 
\begin{subequations}
\begin{align}    
    &\alpha\coloneqq\min_{\tilde{X}}\; \langle A,\tilde{X}\rangle\\
    &\text{subject to} \text{\hspace{5pt}} \langle B_i,\tilde{X}\rangle = b_i \;\text{for }i=1,...,r,\\
    & \text{\hspace{44pt}} \langle C_j, \tilde{X}\rangle \leqslant c_j \;\text{for }j=1,...,s,\\
    & \text{\hspace{45pt}} \tilde{X} \geqslant 0,
\end{align}
\end{subequations}
where $A, B_i, C_j$ are Hermitian matrices and $\vec{b}, \vec{c}$ are real vectors. The associated dual problem is given by
\begin{subequations}
\begin{align}
    &\beta\coloneqq \max_{\vec{y},\vec{z}}\;\;\! \vec{b}^{\;\!T} \vec{y} + \vec{c}^{\;\!T} \vec{z}\\
    &\text{subject to} \text{\hspace{5pt}} \sum_{i=1}^r y_i B_i + \sum_{j=1}^s z_j C_j \leqslant A,\\
    & \text{\hspace{52pt}} \vec{y}\in\mathbb{R}^{r},\;\vec{z}\leqslant \vec{0}_s,
\end{align}
\end{subequations}
where $\langle \tilde{X},\tilde{Y}\rangle = \trace(\tilde{X}^\dagger \tilde{Y})$ is the Hilbert-Schmidt inner product between two matrices $\tilde{X}$ and $\tilde{Y}$. Note that the above definitions are not in the exact forms as in Refs.\;\cite{BoydVandenbergheBook2004,Watrous2018}, but all these definitions are equivalent.

By \textit{weak duality} of SDP\;\cite{Watrous2018}, the optimal solution to the primal problem $\alpha$ is always lower bounded by the optimal solution to the dual problem $\beta$ (i.e., $\alpha\geqslant\beta$). This means that any feasible solution to the dual problem will be a reliable lower bound for the true optimal solution to the primal problem. Therefore, suboptimal solutions to the dual problem provided by any numerical solver will in principle not overestimate the true value of $\alpha$.

Furthermore, if the primal/dual feasible set satisfies additional conditions, then \textit{strong duality} holds, in which case the optimal primal solution equals to the optimal dual solution (i.e., $\alpha=\beta$). These conditions are called \textit{Slater's condition} which is stated in the following lemma.
\begin{lemma}[Slater's theorem for SDP\;\cite{Watrous2018}]\label{lemma:Slater}
    The following two statements hold for all SDPs:
    \begin{enumerate}
        \item If $\alpha$ is finite (i.e., the primal feasible set is non-empty) and there exist vectors $\vec{y}\in\mathbb{R}^{r}$ and $\vec{z}\in\mathbb{R}^{s}$ which satisfy strict inequality for all constraints in the dual problem (i.e., $\exists\;\vec{y}\in\mathbb{R}^{r}$ and $\vec{z}<\vec{0}_s$ such that $\sum_{i=1}^r y_i B_i + \sum_{j=1}^s z_j C_j < A$), then  $\alpha=\beta$. Also, there exists a feasible $\tilde{X}$ such that $\langle A,\tilde{X}\rangle = \alpha$.
        \item If $\beta$ is finite (i.e., the dual feasible set is non-empty) and there exist a positive definite matrix $\tilde{X}>0$ which satisfies all equality constraints $\langle B_i,\tilde{X}\rangle = b_i$ and all inequality constraints with strict inequality (i.e., $\langle C_j, \tilde{X}\rangle < c_j$) of the primal problem, then  $\alpha=\beta$. Also, there exist a feasible pair $\vec{y}, \vec{z}$ such that $\vec{b}^{\;\!T} \vec{y} + \vec{c}^{\;\!T} \vec{z} = \beta$.
    \end{enumerate}
\end{lemma}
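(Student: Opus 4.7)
My plan is to dispatch weak duality by direct algebra, then derive strong duality via the separating hyperplane theorem, with Slater's condition serving as the non-degeneracy ingredient that enables normalization of the separating functional into a feasible dual (respectively primal) multiplier.

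For weak duality $\alpha \geq \beta$, I would use the primal equality constraints $b_i = \langle B_i,\tilde{X}\rangle$ to rewrite the gap $\langle A,\tilde{X}\rangle - \vec{b}^{\,T}\vec{y} - \vec{c}^{\,T}\vec{z}$ as $\langle A - \sum_i y_i B_i - \sum_j z_j C_j,\tilde{X}\rangle + \sum_j z_j(\langle C_j,\tilde{X}\rangle - c_j)$. Both summands are non-negative on feasible points---the first because $\tilde{X} \geq 0$ and the dual constraint makes the operator in the first slot PSD, the second because $\vec{z} \leq \vec{0}$ and the primal inequalities give $\langle C_j,\tilde{X}\rangle \leq c_j$---so $\alpha \geq \beta$ after taking the infimum over $\tilde{X}$ and the supremum over $(\vec{y},\vec{z})$.

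For strong duality in Part~1, I would set up a convex set encoding the reachable tuples $(\langle A,X\rangle,\langle B_i,X\rangle - b_i,\langle C_j,X\rangle - c_j)$ as $X$ ranges over the PSD cone, thickened by the appropriate recession directions (positive objective slack and non-positive inequality slack) so that its closure omits the candidate point $(\alpha-\varepsilon,\vec{0},\vec{0})$ for every $\varepsilon>0$. The separating hyperplane theorem then produces a non-zero normal $(y_0,\vec{y},\vec{z})$ with $y_0 \geq 0$ and $\vec{z} \leq \vec{0}$, which unpacks into the dual PSD condition $\sum_i y_i B_i + \sum_j z_j C_j \leq y_0 A$ together with $\vec{b}^{\,T}\vec{y} + \vec{c}^{\,T}\vec{z} \geq y_0\alpha$. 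The main obstacle, and the sole place Slater is used, is excluding the degenerate case $y_0 = 0$ that would block normalization: evaluating the separation inequality at the strictly dual-feasible pair assumed in the hypothesis forces $y_0 > 0$ via the standard argument using strict slack in every dual inequality. Normalization then yields $\alpha = \beta$ by weak duality, and primal attainment follows from a compactness argument on sublevel sets of $\langle A,\cdot\rangle$ in the primal-feasible slice. Part~2 is obtained by the symmetric reformulation: treat the dual as a standard-form primal maximization whose own dual is the original primal, and apply Part~1 with the roles of primal and dual interchanged. I do not expect genuine surprises beyond the classical $y_0 = 0$ normalization issue, since this is the standard Slater theorem for SDP as presented in~\cite{Watrous2018}.
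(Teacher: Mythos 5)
First, a remark on the comparison itself: the paper does not prove this lemma{\textemdash}it is quoted (up to notational choices) from Ref.~\cite{Watrous2018} as a known result{\textemdash}so there is no in-paper proof to measure your attempt against. On its own merits, your weak-duality computation is correct, and your overall architecture (separating-hyperplane argument with Slater's condition used only to exclude the degenerate normal, attainment via compactness, the remaining part by primal--dual symmetry) is the standard textbook route.

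There is, however, a genuine mismatch in your Part~1. You separate the \emph{primal} reachable set $\{(\langle A,X\rangle,\langle B_i,X\rangle-b_i,\langle C_j,X\rangle-c_j):X\geqslant 0\}$ from $(\alpha-\varepsilon,\vec{0},\vec{0})$ and then propose to exclude $y_0=0$ by ``evaluating the separation inequality at the strictly dual-feasible pair.'' That step cannot be carried out as stated: the separation inequality ranges over points of the primal reachable set, i.e.\ over matrices $X\geqslant 0$, so the only objects you may substitute into it are primal points. The classical exclusion of $y_0=0$ plugs in a \emph{strictly primal-feasible} point{\textemdash}which is the hypothesis of Part~2, not Part~1. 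With only dual strict feasibility, the case $y_0=0$ leaves you with a normal $(\vec{\mu},\vec{\nu})\neq 0$ satisfying $\sum_i\mu_iB_i+\sum_j\nu_jC_j\geqslant 0$ and $\sum_i\mu_ib_i+\sum_j\nu_jc_j\leqslant 0$, and the strictly dual-feasible pair yields no contradiction with this (one may add any non-positive multiple of $(\vec{\mu},\vec{\nu})$ to it and remain dual feasible with unchanged objective). So the zero-gap step of Part~1 does not go through as written. The repair is to swap the roles: prove Part~2 by the primal-side separation you describe (primal Slater kills $y_0=0$, and the normalized normal is an attaining dual point), then obtain Part~1 by the primal--dual symmetry you already invoke{\textemdash}equivalently, for Part~1 separate on the \emph{dual} reachable set, where the strictly dual-feasible point is precisely what you are allowed to substitute and the resulting normal is an attaining primal point. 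Your compactness observation for primal attainment (dual strict feasibility makes $A-\sum_iy_iB_i-\sum_jz_jC_j$ positive definite, hence bounds $\trace(X)$ on primal-feasible sublevel sets) is correct, but it does not by itself repair the zero-gap argument.
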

It is sufficient to prove strong duality by showing either the primal or dual problem satisfies one of the above Slater's conditions. 
If the Slater's condition is satisfied, strong duality will ensure the best numerical solution (potentially suboptimal) to the dual problem obtained by numerical solvers to be a tight lower bound to $\alpha$.


\subsection{Theoretical examples}\label{app:TheoryEg}
\vspace*{-1mm}

{\noindent}In order to measure all the terms of our GME and $k$-inseparability criteria in Eq.\;\eqref{eq:criteriaDef} corresponding to any connected graphs with at least four vertices, one must be able to measure at least up to 4-body correlators. In this section, we will investigate the noise tolerance of our criteria for the states $\rho_G(p)=\frac{p}{2^n}\identity + (1-p)\ketbra{G}{G}$, and later also for non-stabilizer states that are LU equivalent to (noisy) Dicke states. By applying Eq.\;\eqref{eq:criteriaDef} to $\rho_G(p)$, we have that
\begin{equation}
    \mathcal{W}^{\gamma}_G(\rho_G(p)) = (1-p) (n+\gamma|E|).
\end{equation}
In order to certify that $\rho_G(p)$ is $k$-inseparable, the white-noise ratio must satisfy
\begin{align}
    p < \max_{0\leqslant\gamma\leqslant1} \frac{R^\gamma_k}{n+\gamma|E|} \eqqcolon p\suptiny{1}{0}{\mathrm{max}}_k
\end{align}
with $R^\gamma_k = $\small$\min_{\text{all }k\text{-cuts}}\!\left(\gamma |\overline{V}\suptiny{1}{-2}{(k)}|+(1-\gamma)|\overline{E}\suptiny{1}{-2}{(k)}_\text{mcm}|\right)$ \normalsize by Theorem~\ref{theorem:kSepBound}. Given that the maximum number of qubits need to be simultaneously measured when applying our GME/$k$-inseparability criteria grows linearly with the maximum degree of the graph, we will only focus on examples associated to graphs with constant maximum degree (i.e., the maximum degree of the graph does not grow with the number of vertices $n$) in the following (and also the star graphs and complete graphs). This is well motivated by the measurement restriction of the experimental settings that we consider in Sec.\;\ref{sec:Experiment&Simulations} which we are only allowed to measure $\leqslant\!O(1)$-body observables.

In fact, some of these graphs are associated to graph states that have important applications in quantum information processing. For example, ring-graph states are the building blocks for fusion-based quantum computing\;\cite{BartolucciEtAl2023}. Furthermore, cluster states\textemdash corresponding to 2D lattices\textemdash are the essential ingredients for MBQC\;\cite{RaussendorfBriegel2001}. Tree-graph states also have applications in error correction for MBQC\;\cite{VarnavaBrowneRudolph2006} and in constructing one-way quantum repeaters\;\cite{BorregaardEtAl2020}. Therefore, certifying multipartite entanglement of these states is also motivated from a practical perspective.


\subsubsection{Graphs requiring $\leqslant$4-body correlators in $\mathcal{W}^{\gamma}_G$}

{\noindent}Assuming that we can measure up to 4-body correlators, there are a limited sets of graph states of which all the stabilizer terms in our GME and $k$-inseparability criteria in Eq.\;\eqref{eq:criteriaDef} can be measured. The full list of the associated graphs consists of all 3-vertex, 4-vertex graphs, and all $n$-vertex path/chain graphs and ring graphs with $n\geqslant5$. Note that the only two connected 3-vertex graphs are the chain graph and ring graph.

We first consider states $\rho_G(p)$ associated to connected graphs that are neither a complete graph, a star graph, a chain graph nor a ring graph, which are the 4-qubit states that correspond to the graphs in Fig.\;\ref{fig:4qubitGraphs}. 
\begin{figure}[h!]
    \centering
    \includegraphics[width=0.9\linewidth]{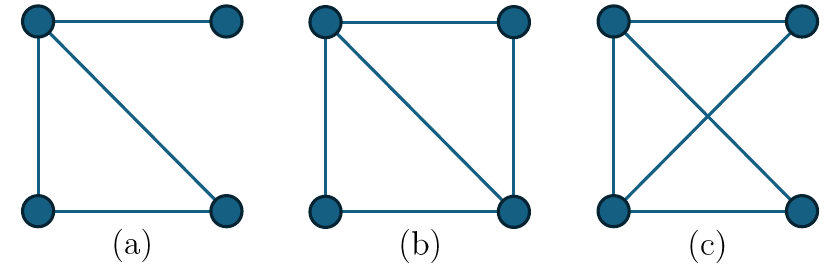}
    \caption{List of all non-isomorphic connected 4-vertex graphs excluding the complete, star, chain/path, and ring graphs.}\label{fig:4qubitGraphs}
\end{figure}
Using Algorithms \ref{alg:boundReductionTerm} and \ref{alg:kPart_AlgY}, we find that for all $2\leqslant k\leqslant 4$, $R^\gamma_k = \gamma k+(1-\gamma)\lfloor \frac{k}{2}\rfloor$ and $p\suptiny{1}{0}{\mathrm{max}}_k = \frac{k}{8}$ for graph (a); $R^\gamma_k = \gamma \min(k+1,4)+(1-\gamma)\lceil\frac{k}{2}\rceil$ and $p\suptiny{1}{0}{\mathrm{max}}_k = \frac{1}{3}$ (for $k=2$) and $\frac{1}{2}$ (for $k=3,4$) for both graphs (b) and (c). For 4-qubit star graph and complete graph states, please refer to Sec.\;\ref{sec:StarCompleteGraph_WhiteNoiseEG}.

\begin{figure}[h!]
    \centering
    \includegraphics[width=0.65\linewidth]{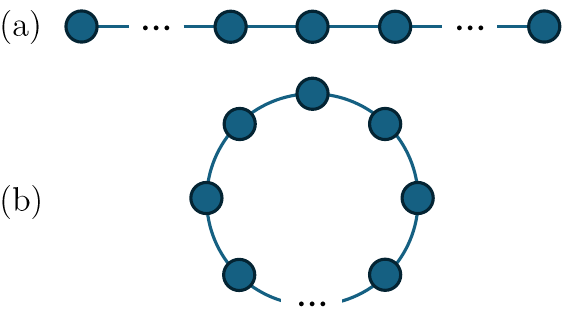}
    \caption{The $n$-vertex (a) path graph and (b) ring graph.}\label{fig:PathRingGraphs}
\end{figure}

Next, we consider states $\rho_G(p)$ associated to the $n$-vertex chain/path graphs [see Fig.\;\ref{fig:PathRingGraphs}\hspace*{1.5pt}(a)]. These states are also known as \textit{1D cluster} states. By running Algorithms \ref{alg:boundReductionTerm} and \ref{alg:kPart_AlgY}, we verified for $3\leqslant n\leqslant 12$ that $R^\gamma_k = \gamma k+(1-\gamma)\lfloor \frac{k}{2}\rfloor$ and $p\suptiny{1}{0}{\mathrm{max}}_k = \frac{k}{2n-1}$ (optimal $\gamma$: $\gamma^*=1$) for all $2\leqslant k\leqslant n$. 
Similarly, for $n$-vertex ring graphs [see Fig.\;\ref{fig:PathRingGraphs}\hspace*{1.5pt}(b)], we verified for $3\leqslant n\leqslant 12$ that $R^\gamma_k = \gamma \min(k+1,n)+(1-\gamma)\min(\lceil\frac{k}{2}\rceil, \lfloor \frac{n}{2}\rfloor)$ and $p\suptiny{1}{0}{\mathrm{max}}_k = \frac{\min(k+1,n)}{2n}$ ($\gamma^*=1$) for all $2\leqslant k\leqslant n$. Note that to evaluate the function $\mathcal{W}^\gamma_G$ associated to any chain graphs or ring graphs only require at most 4-body stabilizer expectation values independent of the number of vertices/qubits $n$.

For graphs with more vertices (i.e., $n>12$), calculating the reduction term $R^\gamma_k$ can become too computationally costly. In those cases, we can apply the looser bound in the second inequality of Eq.\;\eqref{ineq:kSepCriteria}, which gives an easily computable GME and $k$-inseparability criterion
\begin{align}\label{eq:OptimalLooserCriteria}
    \max_{0\leqslant\gamma\leqslant1} \mathcal{W}^{\gamma}_G(\rho) - \gamma(|E| - k + 1) - n + 1>0.
\end{align}
Using this relaxed criterion, we can still certify $k$-inseparability for larger graph states with white noise
\begin{align}\label{eq:looseWhiteNoiseThreshold}
    p < \max_{0\leqslant\gamma\leqslant1} \frac{1+\gamma(k-1)}{n+\gamma|E|} \eqqcolon p\suptiny{1}{0}{\mathrm{max,loose}}_k.
\end{align}
In particular, for general path graphs with $|E|=n-1$, we can apply these relaxed criteria to obtain (looser) bounds for the $k$-inseparability white-noise thresholds $p\suptiny{1}{0}{\mathrm{max,loose}}_k = \frac{k}{2n-1}$ ($\gamma^*=1$) for all $2\leqslant k\leqslant n$ and $n\geqslant2$, coinciding with the supposedly tighter threshold proven for $n\leqslant12$ above. As for general ring graphs with $|E|=n$, the (looser) bound for the threshold is $p\suptiny{1}{0}{\mathrm{max,loose}}_k = \frac{k}{2n}$ ($\gamma^*=1$) for all $2\leqslant k\leqslant n$ and $n\geqslant2$.


\subsubsection{Graphs requiring $4\!<\!m\!\leqslant\!O(1)$-body correlators in $\mathcal{W}^{\gamma}_G$}

{\noindent}In the following, we will focus on graphs with constant maximum degree. These include all 2D lattices with each vertex having at most 4 neighbors and all tree graphs with the maximum degree $\leqslant O(1)$ (see Fig.\;\ref{fig:2DCluster_TreeGraph}).\\[-5pt] 

\begin{figure}[t!]
    \centering
    \includegraphics[width=\linewidth]{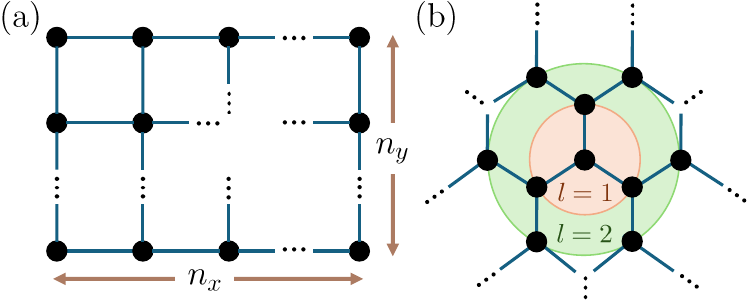}
    \caption{Examples of graphs associated with $4\!<\!m\!\leqslant\!O(1)$-body correlators in $\mathcal{W}^{\gamma}_G$. (a) A 2D lattice with $n_x$ and $n_y$ labeling the number of vertices in each direction. (b) A degree-3 tree graph with depth $D\!\geqslant\!2$ where $l$ labels the level of branches from the root node.}\label{fig:2DCluster_TreeGraph}
\end{figure}

Since the number of $k$-partitions grows as $O(k^n)$, we only verified the following white-noise tolerance relationships with Algorithms \ref{alg:boundReductionTerm} and \ref{alg:kPart_AlgY} for a limited number of $(n_x, n_y)$ associated to 2D lattices [see Fig.\;\ref{fig:2DCluster_TreeGraph}\hspace*{1.5pt}(a)]. For $2\leqslant n_x\leqslant6$, $n_y=2$, and $2\leqslant k\leqslant 2n_x$, we verified that $R^\gamma_k = \gamma \min(k+1,2n_x)+(1-\gamma)\min(\lceil \frac{k}{2}\rceil, n_x)$ and since $n = 2n_x$ and $|E|=3n_x-2$, the corresponding maximum white-noise ratio for violating the $k$-separability condition in Theorem~\ref{theorem:kSepBound} is given by
\begin{align}
    p\suptiny{1}{0}{\mathrm{max}}_k = \begin{cases}
    \frac{3}{5n_x-2} \text{\; for }k=2\; (\gamma^*\!=\!1),\\[2mm]
    \frac{1}{n_x} \text{ for }k=3, \text{and if }n_x=2, \text{also for }k=4\; (\gamma^*\!\!=\!0),\\[2mm]
    \frac{5}{5n_x-2} \text{\; for }k=4 \text{ if }n_x\geqslant3\; (\gamma^*\!=\!1),\\[2mm]
    \frac{\lceil k/2\rceil}{2n_x} \text{\; for all }5\leqslant k\leqslant 2n_x \text{ if }n_x\geqslant3\; (\gamma^*\!\!=\!0).
\end{cases}
\end{align}
For $3\leqslant n_x\leqslant4, n_y=3$, and $2\leqslant k\leqslant 3n_x$, we verified that $R^\gamma_k = \gamma \min(k+a+1,3n_x)+(1-\gamma)\min(\lceil \frac{k}{2}\rceil, \lfloor \frac{3n_x}{2}\rfloor)$ where \begin{align}
    a &=\,\begin{cases}
    1, \text{ if }3\leqslant k\leqslant 3(n_x-1),\\[1mm]
    0, \text{ else,}
\end{cases}
\end{align}
and since $n=3n_x$ and $|E|=5n_x-3$, we obtain
\begin{align}
    p\suptiny{1}{0}{\mathrm{max}}_k \!=\! \begin{cases}
    \frac{k+a+1}{8n_x-3} \text{\qquad\quad for }k\in\{2,3,4\}\cup\mathcal{I}_\text{even},\\[2mm]
    \frac{\min(\lceil\!\frac{k}{2}\!\rceil, \lfloor\!\frac{3n_x}{2}\!\rfloor)}{3n_x} \text{  for }k\!\in\!\{3n_x\!-\!2\!\leqslant\! m\!\leqslant\!3n_x\}\!\cup\mathcal{I}_\text{odd},
\end{cases}
\end{align}
where $\mathcal{I}_\text{odd/even}\coloneqq\{m\in\mathbb{N}_\text{odd/even}|5\leqslant m\leqslant 3(n_x-1)\}$, and the optimality in the first and second cases is attained for the optimal choices $\gamma^*=1$ and $\gamma^*=0$, respectively.

To bound the white-noise threshold for cluster states with more than 12 qubits, we apply the relaxed criteria in Eqs.\;\eqref{eq:OptimalLooserCriteria} and \eqref{eq:looseWhiteNoiseThreshold}. For general cluster states where $n=n_x\times n_y$ and $|E|=2n-n_x-n_y$, we certify $k$-inseparability if the white-noise ratio is below the looser threshold bound
\begin{align}
    p\suptiny{1}{0}{\mathrm{max,loose}}_k = \begin{cases}
    \frac{1}{n} \text{\qquad\quad\;\, for }k=2\; (\gamma^*\!=\!0),\\
    \frac{k}{3n-n_x-n_y} \text{ for }k\geqslant3\; (\gamma^*\!=\!1).
\end{cases}
\end{align}

Moving on to another class of graph states for which our GME/$k$-inseparability criteria require at most $O(1)$-body measured observables, we now consider tree graphs with constant maximum degree. For illustration, we focus on degree-3 tree graphs. In Fig.\;\ref{fig:2DCluster_TreeGraph}\hspace*{1.5pt}(b), the integer $l$ denotes the level of branches extending from the root vertex and the graph's depth $D$ is equal to the maximum level. In general, the number of vertices and edges in a depth-$D$ degree-3 tree graph are $n=3(2^D-1)+1$ and $|E|=n-1$. With Algorithms~\ref{alg:boundReductionTerm} and \ref{alg:kPart_AlgY}, the bound reduction term is found to be $R^\gamma_k = k\gamma+(1-\gamma)\lceil\frac{k-1}{3}\rceil$ for $D=1,2$ (i.e., $n=4,10$). Hence, the corresponding white-noise thresholds for certifying $k$-inseparability are 
\begin{align}\label{eq:degree3TreeWhiteNoise}
    p\suptiny{1}{0}{\mathrm{max}}_k \!=\! 
    \frac{k}{2n-1} \text{\; for all }2\leqslant k\leqslant n\; (\gamma^*\!=\!1).
\end{align}
For larger depths $D\geqslant3$ (corresponding to $n\geqslant22$), it will take too long to compute $R^\gamma_k$. Therefore, we can use the looser bound in Eqs.\;\eqref{eq:OptimalLooserCriteria} and \eqref{eq:looseWhiteNoiseThreshold} instead, which coincidentally lead to the same white-noise thresholds appearing in Eq.\;\eqref{eq:degree3TreeWhiteNoise}. In fact, since all tree graphs (with any degree and depth) have $|E|=n-1$ number of edges, the white-noise thresholds appearing in Eq.\;\eqref{eq:degree3TreeWhiteNoise} hold for all tree graphs.

\begin{table*}[th!]
    \centering
    \begin{tabular}{|c|c||c|c|c||c|c|c||c|c|c||c|c|c||c|c|c||c|c|c||c|c|c||c|}
    \hline
    $\bm{n}$ & $\bm{i}$ & $\bm{\theta_1}$ & $\bm{\theta_2}$ & $\bm{\theta_3}$ & $\bm{\theta_4}$ & $\bm{\theta_5}$ & $\bm{\theta_6}$ & $\bm{\theta_7}$ & $\bm{\theta_8}$ & $\bm{\theta_9}$ & $\bm{\theta_{10}}$ & $\bm{\theta_{11}}$ & $\bm{\theta_{12}}$ & 
    $\bm{\theta_{13}}$ & $\bm{\theta_{14}}$ & $\bm{\theta_{15}}$ & $\bm{\theta_{16}}$ & $\bm{\theta_{17}}$ & $\bm{\theta_{18}}$ & $\bm{\theta_{19}}$ & $\bm{\theta_{20}}$ & $\bm{\theta_{21}}$ & $\bm{p_c}$\\
        \hhline{|=|=||=|=|=||=|=|=||=|=|=||=|=|=||=|=|=||=|=|=||=|=|=||=|}
        3 & 2 & 0 & -0.53 & -0.33 & $\pi$ & -0.53 & -0.33 & 0 & -0.53 & -0.33 & / & / & / & / & / & / & / & / & / & / & / & / & 0.315\\
        \hline
        4 & 2 & 0.82 & -$\pi$ & -0.23 & 0 & -2.50 & -1.05 & -$\pi$ & 0.52 & -1.05 & 0 & -2.73 & -1.05 & / & / & / & / & / & / & / & / & / &  0.143\\
        \hline
        5 & 1 & -$\pi$ & -0.37 & 1.38 & -$\pi$ & 2.77 & 1.38 & -$\pi$ & 2.77 & 1.38 & 0 & -0.37 & 1.38 & 0 & -0.37 & 1.38 & / & / & / & / & / & / &  0.008\\
        \hline
        5 & 3 & -$\pi$ & -0.33 & 1.08 & $\pi$ & -2.82 & -2.06 & $\pi$ & -0.33 & 1.08 & 0 & 0.33 & -2.06 & $\pi$ & 2.82 & 1.08 & / & / & / & / & / & / &  0.264\\
        \hline
        6 & 3 & 0 & -0.27 & 1.36 & 0 & 2.87 & 1.36 & 3.14 & -0.27 & 1.36 & -$\pi$ & -2.87 & -1.78 & 0 & 0.27 & -1.78 & 0 & -0.27 & 1.36 & / & / & / &  0.315\\
        \hline
        6 & 4 & 0 & 0.28 & 2.71 & 0 & -2.86 & 2.71 & 0 & 0.28 & 2.71 & 0 & -0.28 & -0.43 & 0 & 0.28 & 2.71 & -$\pi$ & -0.28 & -0.43 & / & / & / &  0.247\\
        \hline
        7 & 3 & -$\pi$ & 0.28 & 0 & 0 & -0.28 & -3.14 & 3.14 & 2.87 & $\pi$ & 0 & -2.87 & 0 & 3.14 & 0.28 & 0 & -$\pi$ & 2.87 & -3.14 & 1.51 & $\pi$ & 1.51 &  0.399\\
        \hline
        7 & 5 & -$\pi$ & -2.89 & -2.13 & 0 & -0.25 & 1.01 & 0 & -0.25 & 1.01 & 0 & 2.89 & 1.01 & 0 & -0.25 & 1.01 & 0 & -0.25 & 1.01 & 0 & 2.89 & 1.01 &  0.303\\
     \hline
    \end{tabular}
    \caption{Local rotation angles $\{\theta_j\}$ (rounded to two decimal places) defining LU-conjugated (noisy) Dicke states $\ket{D_n^{(i)}}$ for which our criteria certify GME. The last column shows the critical white-noise tolerance $p_c$ (rounded to three decimal places).}
    \label{tab:DickeStates}
\end{table*}


\subsubsection{Star graphs and complete graphs}\label{sec:StarCompleteGraph_WhiteNoiseEG}

{\noindent}To complement our investigation of graph states, we also consider star-graph and complete-graph states. 
Since these two types of graphs have maximum degree $n$, any existing witness/criterion that certifies GME/$k$-inseparability of these states (including ours) involve measuring $n$-body stabilizers. In fact, it has been proven that any method that can certify GME of these $n$-qubit states must measure at least one $n$-body observable~\cite{ShiChenChiribellaZhao2025}.
For $n$-vertex star graphs, we verified with Algorithms \ref{alg:boundReductionTerm} and \ref{alg:kPart_AlgY} for $3\leqslant n\leqslant12$ that $R^\gamma_k = \gamma(k-1)+1$, so the maximum white-noise tolerance for violating the $k$-separability condition is $p\suptiny{1}{0}{\mathrm{max}}_k \!=\! \frac{k}{2n-1}$ ($\gamma^*\!=\!1$) for all $2\leqslant k\leqslant n$. 
For $n$-vertex complete graphs, we also verified that $R^\gamma_k = \gamma n -(1-\gamma)\min(k-1,\lfloor\frac{n}{2}\rfloor)$ for $3\leqslant n\leqslant12$, so 
\begin{align}
    p\suptiny{1}{0}{\mathrm{max}}_k &=\, \begin{cases}
    \frac{2}{n+1} \text{ for }k=2, \text{ and if }n=3, \text{also for }k=3,\\[2mm]
    \frac{\min(k-1, \lfloor n/2\rfloor)}{n} \text{ for all }3\leqslant k\leqslant n \text{ if }n\geqslant4,
\end{cases} 
\end{align}
where the optimality in the first and second cases is attained for the optimal choices $\gamma^*=1$ and $\gamma^*=0$, respectively.

We remark that based on some empirical observations with star-graph states, the SDP method from Sec.\;\ref{sec:SDPincomplete} has not been able to provide non-trivial lower bounds for the absolute expectation values of any $n$-body stabilizer that appears in Eq.\;\eqref{eq:criteriaDef} using only $O(1)$-body Pauli expectation values as constraints. This leads us to believe that our SDP technique for bounding unmeasured stabilizer terms works only when all the stabilizer generators are measured.


\subsubsection{Noisy Dicke states}\label{sec:DickeStates}

{\noindent}Finally, to demonstrate the general applicability of our criteria, let us show that our criteria can also detect GME in non-stabilizer states. More specifically, we consider states $\rho_{D(n,i)}(p,\vec{\theta})$ that are LU equivalent to (noisy) Dicke states, which are particularly relevant in quantum many-body physics~\cite{Dicke1954,CastanosLopez-PenaHirsch2006} and have broad applications in quantum information processing~\cite{PrevedelCronenbergEtAl2009,Toth2012}. These states are defined as
\begin{align}
    \ket{D_n^{(i)}} = {\binom{n}{i}}^{\!\!-\frac{1}{2}} \sum_{x\in\{0,1\}^n:\text{wt}(x)=i}\ket{x},
\end{align}
\begin{align}
    \rho_{D(n,i)}(p,\vec{\theta}) = \frac{p}{2^n}\identity + (1-p)U_{\vec{\theta}}\ketbra{D_n^{(i)}}{D_n^{(i)}}U^\dagger_{\vec{\theta}},
\end{align}
where $U_{\vec{\theta}} = \bigotimes_{i=1}^n R_z(\theta_{3i-2})R_y(\theta_{3i-1})R_z(\theta_{3i})$ with $\vec{\theta}\in[-\pi,\pi]^{3n}$, $R_z(\varphi)=e^{-i\varphi Z/2}$, and $R_y(\varphi)=e^{-i\varphi Y/2}$.
Using our criteria defined for the complete graph with $\gamma=1$, we can certify GME in $\rho_{D(n,i)}(p,\vec{\theta})$ for various parameters $n, i, \vec{\theta}$ if the white-noise ratio satisfies $p<p_c$ as shown in Table~\ref{tab:DickeStates}.

In summary, we have shown that our GME/$k$-inseparability criteria can tolerate a wide range of white noise in various families of graph states with important applications in quantum information, as well as in non-stabilizer states such as Dicke states (up to LU), which are relevant in quantum many-body physics.


\subsection{Measuring Pauli observables of microwave photons}\label{app:ConvertModeToPaulis}

{\noindent}In this appendix, let us briefly review how Pauli observables for microwave photons can be measured using heterodyne measurement, following Ref.~\cite{eichler_2011_experimentalstatetomography, eichler_2012_characterizingquantummicrowave}. This measurement scheme also motivates the restriction of the maximum weight of the Pauli observables that can be reliably measured in a realistic measurement setup. 

In heterodyne detection, the complex integrated signal of a single photonic mode corresponds to a noisy observable $\hat{S}=\hat{a}+\hat{h}^\dagger$, where $\hat{a}$ is the annihilation operator acting on the photonic mode, and $\hat{h}$ acts on the noise field, which in the ideal case corresponds to a vacuum mode, but in state-of-the-art experiments, is typically in a thermal state with 2-4 noise photons. Averaging over experimental repetitions allows measuring multi-mode moments
\begin{align}
    &\left\langle \prod_i^N (S_i^\dagger)^{s_i}(S_i)^{t_i}\right\rangle
    \,=\,\sum_{p_1, q_1,\dots p_N, q_N=0}^{s_1, t_1, \dots, s_N, t_N} \left[\prod_i^N \binom{s_i}{p_i}\binom{t_i}{q_i}\right]\nonumber\\[1.5mm]
    &\ \ \ \ \ \times\left\langle \prod_i^N (a^\dagger)^{p_i}a^{q_i}\right\rangle\left\langle \prod_i^N h^{s_i-p_i}(h^\dagger)^{t_i-q_i}\right\rangle,
\end{align}
where $i$ is the index for the photonic mode and~$n$ is the total number of photonic modes. By measuring the noise moments independently, one can find the signal moments by inverting the equation above, provided that the noise is uncorrelated with the signal. The difficulty of characterizing a particular moment is related to its \textit{order}, given by $O=\lceil \sum_i (p_i + q_i) / 2 \rceil$. The signal-to-noise ratio of measuring a moment of order $O$ scales as $\eta^O$~\cite{dasilva_2010_schemesobservationphoton}, where $\eta = 1/\langle \hat{h} \hat{h}^{\dagger} \rangle$ is the single-photon measurement efficiency, typically between $0.2$ and $0.4$ in state-of-the-art microwave-photon measurements. This means that the number of measurement repetitions required to obtain an accurate estimate of the moment scales exponentially with the order of the moments. For example, the experiment in Ref.~\cite{OsullivanEtAl2024} took around 10 billion repetitions, which allowed measuring order-4 moments reliably. 


\begin{figure}[t!]
\begin{center}
\includegraphics[angle = 0, width=0.46\textwidth,trim={3.5cm 0cm 1.5cm 0cm},clip]{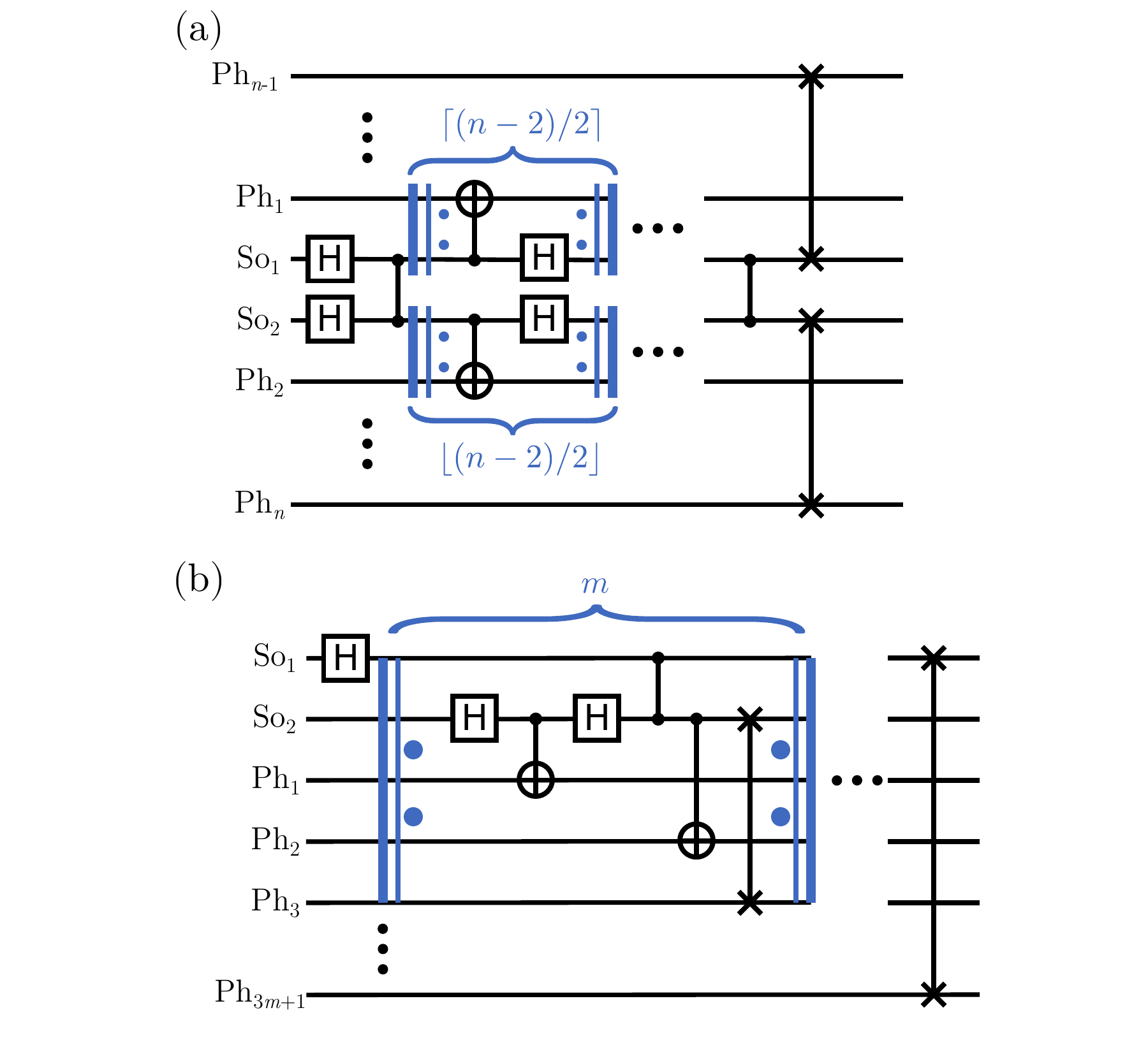}
\vspace*{-5mm}
\scriptsize{ \caption{Quantum circuit for generating the ring-graph and tree-graph states. (a) Quantum circuit for generating the $n$-qubit ring-graph state using two source transmons. (b) Quantum circuit for generating the $(m, 2)$ tree-graph state using two source transmons.
\label{fig:RingTreeCircuit}}}
\end{center}
\end{figure}

Assuming each photonic mode has no more than one photon, the Pauli observables can be related to the moments via\vspace{-10pt}
\begin{subequations}
\begin{align}
    \hat{X} &= a^\dagger + a,\\
    \hat{Y} &= i (a^\dagger - a),~\mathrm{and}\\
    \hat{Z} &= I - 2 a^\dagger a.
\end{align}
\end{subequations}
Therefore, there is a linear transformation that relates the Pauli observables to the moments. As with the moments measurements, the number of measurement repetitions required for estimating Pauli observables also scales exponentially with the weight of the observable, making high-weight Pauli observables difficult to estimate. Note that in this particular setup, the measurement difficulty is not determined by which subset of modes/qubits is measured, but rather by the number of modes/qubits being measured.\\[-15pt]

\subsection{Simulating graph-state generation}\label{app:Simulation}

{\noindent}In this section, we briefly review the protocol for generating graph states with superconducting circuits and introduce the numerical simulation procedure. We consider systems in which coherently controllable source modes are tunably coupled to a waveguide that carries itinerant photons. In superconducting circuits, the source mode can be realized by a transmon qubit. The tunable coupling to the waveguide is achieved by first tunably coupling the source transmon to an ancillary mode, which is in turn strongly coupled to the waveguide. By controlling the interaction between the source transmon's second excited state and the ancillary mode, itinerant photons can be created conditioned on the initial state of the source mode. This effectively realizes a CNOT gate between the source mode and the microwave photonic qubit, where the logical states are the vacuum and the single-photon Fock state of a particular mode. Combined with single-qubit gates on the source qubit, a pair of such source and ancillary modes allows deterministic generation of one-dimensional graph states. Extending this scheme by tunably coupling multiple source-ancillary-mode pairs makes it possible to generate ring-graph, tree-graph, and both one- and two-dimensional cluster states. The quantum circuit for generating 2D cluster states is shown in Fig.\;\ref{fig:CircuitSchematic}, while those for generating tree-graph and ring-graph states are given in Fig.\;\ref{fig:RingTreeCircuit}. 
\begin{table}[t!]
    \centering
    \begin{tabular}{|c|c|c|c|c|c|}
    \hline
        \multicolumn{2}{|c|}{\textbf{Leakage}}&\multicolumn{2}{c|}{\textbf{Coherence times}}& \multicolumn{2}{c|}{\textbf{Gate times}}\\
        \hline
        $L_\mathrm{CNOT}$ & 0.01 & $T_{1,2}^\mathrm{So_1,g-e}$ & 27, 22~$\mu$s&Single-qubit& 125 ns\\
         $L_\mathrm{C\hspace*{-0.5pt}Z}$ & 0.02 & $T_{1,2}^\mathrm{So_2,g-e}$ & 22, 23~$\mu$s& C\hspace*{-0.5pt}Z & 200 ns\\
          &  & $T_{1,2}^\mathrm{So_1,e-f}$ & 16, 12$\mu$s&CNOT& 325 ns\\
          & & $T_{1,2}^\mathrm{So_2,e-f}$ &  4, 6$\mu$s &SWAP& 350 ns\\
         \hline
    \end{tabular}
    \caption{Table of parameters used for the noisy circuit simulation. Note that we assume the leakage error and gate times are the same for both transmons. For the generation of the $3\times3$ cluster state, the third transmon is assumed to have the same coherence times as $\mathrm{So}_1$. }
    \label{tab:SimParams}
\end{table}

To numerically simulate the graph-state generation, we implement a noisy circuit simulation in \textit{Cirq}, following Ref.~\cite{OsullivanEtAl2024}. The code used for our simulations is available in Ref.\;\cite{ManuelGithub}. The source mode transmons are modeled as qutrits, with the lowest three levels $g, e$ and $f$ respectively, and the photons are modeled as qubits. Most of the transmon gates are performed in the computational subspace $g, e$, with the exception of the controlled emission and C\hspace*{-0.5pt}Z gate. The controlled emission (CNOT) is modeled as a $\pi_{e-f}$ gate on the transmon, followed by a swap between the $e-f$ excitation manifold and the photonic mode. The C\hspace*{-0.5pt}Z gate is modeled by driving the $ee$ state of the coupled transmon through a $2\pi$ rotation via the $fg$ state, hence accumulating a geometric phase of $-1$ on the $ee$ state.

Errors in the circuit can have both coherent and incoherent contributions. In Ref.~\cite{OsullivanEtAl2024}, the coherent error is dominated by leakage process during the C\hspace*{-0.5pt}Z and CNOT gates, meaning that the system retains some population in the $f$ state after the gate. This can be caused by imperfect calibration of the rotation angle. Incoherent errors occur as the excited levels of the transmon qutrits have finite relaxation and dephasing times, compared to the time taken for the emission. The relaxation and dephasing processes are modeled as three-level amplitude damping and dephasing channels, respectively, applied symmetrically during the action of each gate. The damping and dephasing rates are determined based on the individual gate times, as well as the relaxation and dephasing times $T_1, T_2$ between the $g-e$ and $e-f$ levels, respectively. In Table~\ref{tab:SimParams}, we summarize the simulation parameters, which were based on Ref.~\cite{OsullivanEtAl2024}.


\subsection{Error analysis of certifying GME/$k$-inseparability}\label{app:ErrorAnalysis}\vspace{-5pt}
{\noindent}In order to certify GME/$k$-inseparability of a state $\rho$ prepared in any experiment, it is inevitable to address the statistical uncertainty of the measured expectation values of the stabilizers in Eq.\;\eqref{eq:criteriaDef}. In practice, we want to show by how many standard deviations (SDs) the estimated value of $\mathcal{W}^\gamma_G(\rho)$ exceeds the $k$-separability bound in Theorem~\ref{theorem:kSepBound}. For that, we denote one SD of a random variable $x\in[-1,1]$ corresponding to a Hermitian observable $\tilde{X}\in \operatorname{Stab}(\ket{G})$ as $\sigma_{x}=\sqrt{\langle (\tilde{X}- \langle \tilde{X}\rangle_\rho)^2\rangle_\rho}$ with the mean $\mu(x)=\langle \tilde{X}\rangle_\rho$. In general, we can have correlated variables $x$ and $y$ (corresponding to a Hermitian observable $\tilde{Y}$) of which $\sigma_{xy} = \langle (\tilde{X}- \langle \tilde{X}\rangle_\rho) (\tilde{Y}- \langle \tilde{Y}\rangle_\rho)\rangle_\rho$. Using the standard formula for first-order uncertainty propagation under the assumption that all the stabilizer expectation values that appear in $\mathcal{W}^\gamma_G(\rho)$ are obtained from actual measurements (but not from SDPs in Sec.\;\ref{sec:SDPincomplete}) and the variance of each variable is small enough (by performing enough measurements), we obtain the uncertainty for the function $\mathcal{W}^\gamma_G$ that appears in our GME and $k$-inseparability criteria
\begin{widetext}
\begin{align}
    \sigma_{\mathcal{W}^\gamma_G} &= \sqrt{\sum_{i,j\in V} \frac{\partial \mathcal{W}^\gamma_G}{\partial s_i}\frac{\partial \mathcal{W}^\gamma_G}{\partial s_j}\sigma_{s_is_j} + \sum_{\alpha\in V,\;(i,j)\in E} \frac{\partial \mathcal{W}^\gamma_G}{\partial s_\alpha}\frac{\partial \mathcal{W}^\gamma_G}{\partial s_{(i,j)}}\sigma_{s_\alpha s_{(i,j)}} + \sum_{(i,j),(\alpha,\beta)\in E}\frac{\partial \mathcal{W}^\gamma_G}{\partial s_{(i,j)}}\frac{\partial \mathcal{W}^\gamma_G}{\partial s_{(\alpha,\beta)}}\sigma_{s_{(i,j)}s_{(\alpha,\beta)}}}\label{eq:propgationUncertainty}\\
    &= \sqrt{\sum_{i,j\in V} \frac{\langle S_i\rangle\langle S_j\rangle}{|\langle S_i\rangle||\langle S_j\rangle|}\sigma_{s_is_j} + \gamma\!\sum_{\alpha\in V,\;(i,j)\in E} \frac{\langle S_\alpha\rangle\langle S_iS_j\rangle}{|\langle S_\alpha\rangle||\langle S_iS_j\rangle|}\sigma_{s_\alpha s_{(i,j)}} + \gamma^2\!\sum_{(i,j),(\alpha,\beta)\in E}\frac{\langle S_\alpha S_\beta\rangle\langle S_iS_j\rangle}{|\langle S_\alpha S_\beta\rangle||\langle S_iS_j\rangle|}\sigma_{s_{(i,j)}s_{(\alpha,\beta)}}}\,,\nonumber
\end{align}
\end{widetext}
where the random variable $s_{(i,j)}$ is associated to the stabilizer $S_iS_j$ and all partial derivatives are evaluated at the mean of each variable. The last equality follows from the formula $\frac{d}{dx}|x|=\frac{x}{|x|}$ for all $x\neq0$ and holds as long as all the corresponding expectation values $\langle S_i\rangle$ and $\langle S_iS_j\rangle$ are non-zero. In any realistic scenarios, no expectation values will be exactly zero up to infinite precision, so the last equality in Eq.\;\eqref{eq:propgationUncertainty} should apply to general experiments.

However, if we have to apply the SDP techniques in Sec.\;\ref{sec:SDPincomplete} to lower bound the absolute expectation values of the unmeasured stabilizers in $\mathcal{W}^\gamma_G(\rho)$, one will need to find the partial derivatives of the optimal solution of each associated SDP problem with respect to each constraint parameter that corresponds to a measured variable in order to evaluate $\sigma_{\mathcal{W}^\gamma_G}$. The exact calculations go beyond the scope of this paper, for which the relevant technique is described in Ch.\;5.3.6 of Ref.\;\cite{BonnansShapiro2000}. 
Alternatively, one can estimate the partial derivatives of $\mathcal{W}^\gamma_G(\rho)$ by taking the partial derivative of the dual objective function in Eq.\;\eqref{eq:dualSDPobj} with respect to the expectation value $b_j$ for the measured observable $B_j$ as part of the SDP constraint. For example, if the term $|\langle S_iS_j\rangle_\rho|$ is lower bounded by an SDP described in Sec.\;\ref{sec:SDPincomplete} [i.e., having $|\langle S_iS_j\rangle_\rho|$ replaced by $\beta(|\langle S_iS_j\rangle_\rho|)$ in $\mathcal{W}^\gamma_G(\rho)$], which has a set of constraints: $|\trace(B_m \rho) - b_m| \leqslant \varepsilon_m$ with $\varepsilon_m=\sigma_{b_m}$ labelled by $m$ [see Eq.\;\eqref{eq:SDPprimalConstraintUnc1}], then we get an estimate $\frac{\partial \beta(|\langle S_iS_j\rangle_\rho|)}{\partial b_m} = y^*_{2m-1}-y^*_{2m}$ with $\vec{y}^*$ being part of the optimal solution of the dual SDP problem in Eqs.\;\eqref{eq:dualSDPobj}-\eqref{eq:dualSDPconstraintMainLast}. The overall estimate of $\sigma_{\mathcal{W}^\gamma_G}$ will take a similar form as in Eq.\;\eqref{eq:propgationUncertainty} but now having terms that depend on $\frac{\partial \beta(|\langle S_iS_j\rangle_\rho|)}{\partial b_m}$, $\sigma_{s_\alpha b_m}$ and $\sigma_{b_m b_{m'}}$ instead of $\frac{\partial \mathcal{W}^\gamma_G}{\partial s_{(i,j)}}$, $\sigma_{s_\alpha s_{(i,j)}}$ and $\sigma_{s_{(\alpha,\beta)} s_{(i,j)}}$.
\pagebreak
\vspace{100pt}
\pagebreak

\subsection{More simulation results}\label{app:moreSimTables}
\noindent In this final section, we present the results of the GME/$k$-inseparability certification for 1D and 2D cluster states as well as tree-graph states, see Tables~\ref{tab:Sim1Dcluster},~\ref{tab:Sim2Dcluster}, and~\ref{tab:SimTreeGraph}, respectively. The corresponding state generation has been simulated under realistic experimental conditions (see Secs.\;\ref{sec:Experiment&Simulations}, \ref{sec:Experiment_ApplyCriteria}, and \ref{app:Simulation}). Although the GME/$k$-inseparability witnesses of Ref.\,\cite{ZhouZhaoYuanMa2019} outperform our criteria, their method generally requires measuring at least $O(2^{n/c})$ stabilizers, with maximum weight scaling as $O(n)$, where $c$ is the chromatic number of the underlying graph. Such measurement requirement is beyond the experimental capabilities of the platforms considered here for large $n$.

\begin{smallboxtable}{Path-graph/1D cluster states}{tab:Sim1Dcluster}
\begin{center}
\vspace*{-3mm}
\includegraphics[width=0.7\linewidth]{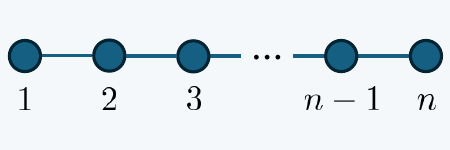}
\vspace*{-1.5mm}
\resizebox{1\linewidth}{!}{%
\begin{tabular}{|c||c|c|c|c|c|}
\hline
$n$ &\;\cite{TothGuehne2005b}'s witness &\;\cite{ZhouZhaoYuanMa2019}'s witness& Our criteria & Our criteria (SDP) & Fidelity \\
\hhline{|=||=|=|=|=|=|}
4 & GME&GME & GME & GME & 0.888 \\
\hline
5 & GME&GME & GME & GME & 0.855 \\
\hline
6 & GME&GME & GME & GME & 0.823 \\
\hline
7 & GME&GME & GME & GME & 0.793 \\
\hline
8 &/&GME& 3-insep & 4-insep & 0.763 \\
\hline
9 &/&GME& 3-insep & 4-insep & 0.735 \\
\hline
10 &/&GME& 3-insep & 4-insep & 0.708 \\
\hline
11 &/&4-insep& 4-insep & 4-insep & 0.683 \\
\hline
12 &/&4-insep& 4-insep & 4-insep & 0.660 \\
\hline
\end{tabular}
}
\end{center}
\small{Comparison of certified multipartite entanglement in simulated $n$-qubit 1D cluster states using different witnesses/criteria. The first column shows the GME certification results using the witness from Eq.\;(45) in\;Ref.\;\cite{TothGuehne2005b}. The second column shows certification results using the witness in Eq.\,(21) of Ref.\,\cite{ZhouZhaoYuanMa2019}. The third column reports GME/$k$-inseparability certified by our criteria with all terms in Eq.\;\eqref{eq:criteriaDef} measured. The fourth column shows results from our criteria with only the stabilizer generators measured, and all $|\langle S_i S_j\rangle|$ in Eq.\;\eqref{eq:criteriaDef} lower bounded by the dual SDP in Sec.\;\ref{sec:SDPincomplete}. The last column gives the fidelities with the ideal graph states, calculated from the full simulated density matrices.}
\end{smallboxtable}
\begin{smallboxtable}{2D cluster states}{tab:Sim2Dcluster}
\begin{center}
\hspace*{1mm}
\includegraphics[width=0.6\linewidth]{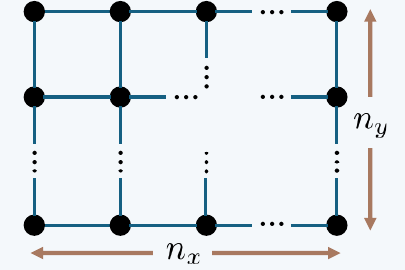}\\
\vspace*{1.5mm}
\resizebox{1\linewidth}{!}{%
\begin{tabular}{|c|c||c|c|c|c|c|}
\hline
$n_x$ & $n_y$ &\;\cite{TothGuehne2005b}'s witness &\;\cite{ZhouZhaoYuanMa2019}'s witness & Our criteria & Our criteria (SDP) & Fidelity \\
\hhline{|=|=||=|=|=|=|=|}
2 & 2 & GME &GME& GME & GME & 0.838 \\
\hline
3 & 2 & / &GME& 3-insep & 3-insep & 0.748 \\
\hline
4 & 2 & / &3-insep& 3-insep & 3-insep & 0.662 \\
\hline
5 & 2 & / &3-insep& 5-insep & 5-insep & 0.587 \\
\hline
6 & 2 & / &5-insep& 7-insep & 7-insep & 0.521 \\
\hhline{|=|=||=|=|=|=|=|}
3 & 3 & / &3-insep& 5-insep & 5-insep & 0.615 \\
\hline
\end{tabular}
}
\end{center}
\small{Comparison of certified multipartite entanglement in simulated $n_x\times n_y$-qubit 2D cluster states using different witnesses/criteria. The first column shows the GME certification results using the witness from Eq.\;(45) in\;Ref.\;\cite{TothGuehne2005b}. The second column shows certification results using the witness in Eq.\,(22) of Ref.\,\cite{ZhouZhaoYuanMa2019}. The third column reports GME/$k$-inseparability certified by our criteria with all terms in Eq.\;\eqref{eq:criteriaDef} measured. The fourth column shows results from our criteria with only the stabilizer generators measured, and all $|\langle S_i S_j\rangle|$ in Eq.\;\eqref{eq:criteriaDef} lower bounded by the dual SDP in Sec.\;\ref{sec:SDPincomplete}. The last column gives the fidelities with the ideal graph states, calculated from the full simulated density matrices.}
\end{smallboxtable}
\newpage
\begin{smallboxtable2}{Tree-graph states}{tab:SimTreeGraph}
\begin{center}
\vspace*{-1mm}
\includegraphics[width=0.65\linewidth]{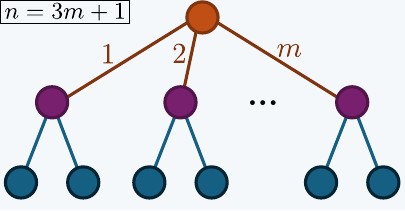}
\vspace*{-1mm}
\resizebox{1\linewidth}{!}{%
\begin{tabular}{|c||c|c|c|c|c|}
\hline
$n$ &\;\cite{TothGuehne2005b}'s witness &\;\cite{ZhouZhaoYuanMa2019}'s witness & Our criteria & Our criteria (SDP) & Fidelity \\
\hhline{|=||=|=|=|=|=|}
4 & GME&GME & GME & GME & 0.846 \\
\hline
7 & / &GME& 3-insep & 4-insep & 0.729 \\
\hline
10 & / &5-insep& 5-insep & 5-insep & 0.630 \\
\hline
\end{tabular}
}
\end{center}
\small{Comparison of certified multipartite entanglement in simulated $n$-qubit tree-graph states using different criteria. The underlying graphs have a root of degree $m$, with each intermediate vertex branching into two leaves, giving a total of $n\!=\!3m\!+\!1$ qubits. The first column shows GME certification results using the witness from Eq.\;(45) in\;Ref.\;\cite{TothGuehne2005b}. The second column shows certification results using the witness in Eq.\,(15) of Ref.\,\cite{ZhouZhaoYuanMa2019}. The third column reports GME/$k$-inseparability certified by our criteria with all terms in Eq.\;\eqref{eq:criteriaDef} measured. The fourth column shows results from our criteria with only the stabilizer generators measured, and all $|\langle S_i S_j\rangle|$ in Eq.\;\eqref{eq:criteriaDef} lower bounded by the dual SDP in Sec.\;\ref{sec:SDPincomplete}. The last column shows fidelities with the ideal graph states, calculated from the full simulated density matrices.}
\end{smallboxtable2}

\end{document}